\documentclass{article}

\usepackage{microtype}
\usepackage{graphicx}
\usepackage{subfigure}
\usepackage{multirow}
\usepackage{amssymb}
\usepackage{booktabs} % for professional tables
\usepackage{siunitx}

\usepackage[colorlinks=true,allcolors=blue]{hyperref}

\usepackage{listings}
\usepackage{pythonhighlight} % Optional for better colors

\usepackage[accepted]{icml2026}

\usepackage{amsmath}
\usepackage{amssymb}
\usepackage{mathtools}
\usepackage{amsthm}

\usepackage{localmath}
\usepackage{matvecsymb}

\usepackage[capitalize,noabbrev]{cleveref}

\usepackage{xcolor}
\definecolor{sagegreen}{rgb}{0.56, 0.74, 0.56}
\newcommand{\green}[1]{\textcolor{sagegreen}{#1}}
\definecolor{softred}{rgb}{0.80, 0.50, 0.50}
\newcommand{\red}[1]{\textcolor{softred}{#1}}
\theoremstyle{plain}
\newtheorem{theorem}{Theorem}[section]
\newtheorem{proposition}[theorem]{Proposition}

\theoremstyle{definition}

\theoremstyle{remark}

\usepackage[textsize=tiny]{todonotes}

\icmltitlerunning{SURF: Separation via Unsupervised Remixing Flow}

\begin{document}

\twocolumn[
\icmltitle{SURF: Separation via Unsupervised Remixing Flow}

% It is OKAY to include author information, even for blind
% submissions: the style file will automatically remove it for you
% unless you've provided the [accepted] option to the icml2025
% package.

% List of affiliations: The first argument should be a (short)
% identifier you will use later to specify author affiliations
% Academic affiliations should list Department, University, City, Region, Country
% Industry affiliations should list Company, City, Region, Country

% You can specify symbols, otherwise they are numbered in order.
% Ideally, you should not use this facility. Affiliations will be numbered
% in order of appearance and this is the preferred way.
\icmlsetsymbol{equal}{*}
\icmlsetsymbol{equals}{$\dagger$}
\icmlsetsymbol{gdmintern}{$\star$}

\begin{icmlauthorlist}
\icmlauthor{Henry Li}{equal,gdmintern,goog}
\icmlauthor{Robin Scheibler}{equal,gdm}
\icmlauthor{Efthymios Tzinis}{goog}
\icmlauthor{Matt Shannon}{gdm}
\icmlauthor{Arnaud Doucet}{equals,gdm}
\icmlauthor{John R. Hershey}{equals,gdm}
\end{icmlauthorlist}

\icmlaffiliation{goog}{Google}
\icmlaffiliation{gdm}{Google DeepMind}

\icmlcorrespondingauthor{Henry Li}{lihenry@google.com}

% You may provide any keywords that you
% find helpful for describing your paper; these are used to populate
% the "keywords" metadata in the PDF but will not be shown in the document
\icmlkeywords{Machine Learning, ICML}

\vskip 0.3in
]

% this must go after the closing bracket ] following \twocolumn[ ...

% This command actually creates the footnote in the first column
% listing the affiliations and the copyright notice.
% The command takes one argument, which is text to display at the start of the footnote.
% The \icmlEqualContribution command is standard text for equal contribution.
% Remove it (just {}) if you do not need this facility.

%\printAffiliationsAndNotice{}  % leave blank if no need to mention equal contribution
\printAffiliationsAndNotice{\icmlEqualContribution \icmlEqualSeniorContribution \icmlGDMIntern} % otherwise use the standard text.

\begin{abstract}
The goal of single-channel source separation is to reconstruct $K$ sources given their mixture. In supervised settings where vast amounts of clean source data are available, this challenging, ill-posed problem has been addressed successfully by generative diffusion and flow-based prior models. However, access to such clean source samples is often limited, and even when available, supervised models are vulnerable to domain shifts. To bridge this gap, we present Separation via Unsupervised Remixing Flow (\textbf{SURF}), an unsupervised flow matching approach for source separation that learns directly from observed mixtures. This method relies on a novel combination of state-of-the-art supervised flow matching and regression-based self-supervised techniques. At a high level, starting from a teacher model, we utilize a ``remixing'' step to bootstrap the learning of a student flow model from the teacher's estimates. We provide insights into the objectives optimized by this approach and draw a novel connection to the Wake-Sleep algorithm. Empirical evaluations on image and audio benchmarks demonstrate that \textbf{SURF} establishes a new state-of-the-art, significantly outperforming existing unsupervised methods. See our \href{https://google.github.io/df-conformer/surf/}{demo page} for examples.
\end{abstract}

\section{Introduction}

\begin{figure}[htp]
    \centering
    \includegraphics[width=0.48\textwidth]{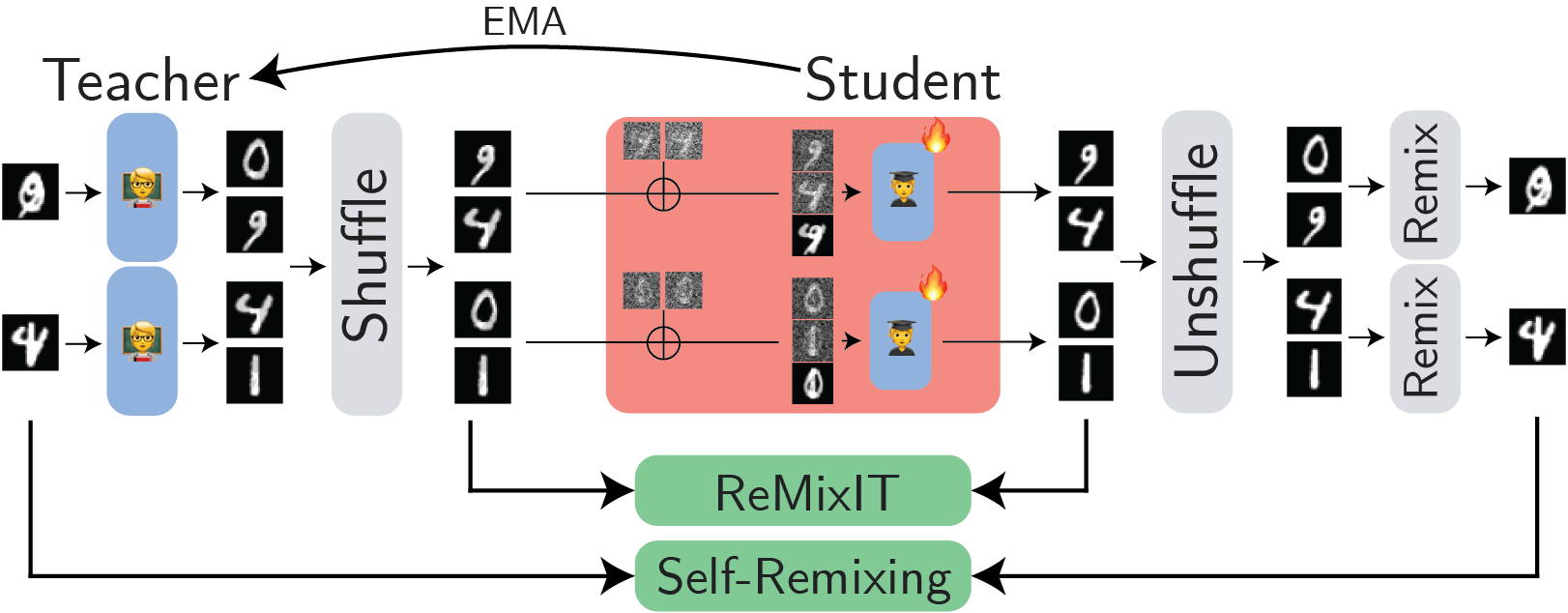}
    \caption{Illustration of \textbf{SURF}. Given initial mixtures, a teacher model first produces source estimates. These are shuffled, then used as self-supervised examples to a student flow matching model. The student is trained to predict the estimated sources (ReMixIT) or original mixtures (Self-Remixing).}
    \label{fig:surf_diagram}
    \vspace{-.2in}
\end{figure}

Single-channel source separation is a fundamental challenge in signal processing, where one must recover a set of underlying signals given a single mixture. As this problem is highly ill-posed, deep learning approaches have historically relied on discriminative training, where models such as Conv-TasNet~\cite{luo_conv-tasnet_2019} or Transformers~\cite{Wang2023-mc,Saijo2024-rq} are trained to map mixtures directly to sources. While these regression-based techniques have been remarkably successful, they suffer from two critical limitations: they often introduce unrealistic artifacts~\cite{larsen2016autoencoding}, and they rely on vast amounts of clean labeled data.

To specifically mitigate regression artifacts and better capture the distribution of natural signals, the field has pivoted towards generative models~\cite{subakan_generative_2018,jayaram2020source}. In particular, diffusion and flow-based models~\cite{ho_denoising_2020,song2020score,lipman2022flow} have enabled the learning of high-quality priors for clean sources. Common approaches in this domain treat separation as a conditional generation task, utilizing either mixture-conditional diffusion in the mel-spectral domain~\cite{chen_sepdiff_2023} or multi-source time-domain models with guidance terms for mixture consistency~\cite{mariani2023multi,janati2025mixture,shi2025unsupervised}. Diffusion has also been effectively applied to post-process and refine estimates from conventional regression models~\cite{wang2024noise}. Other methods employ non-standard noising dynamics~\cite{scheibler_diffusion-based_2023,dong_edsep_2025} or flow matching \cite{scheibler2025source,shi2025sam}.% with strict constraints like mixture consistency~\cite{scheibler2025source}.

Despite these developments, a critical bottleneck remains. All the aforementioned approaches operate under the assumption that clean source data is accessible, either directly or through a pre-trained diffusion prior. It is worth noting that while several methods adopt the ``unsupervised'' terminology, they do rely on such a pre-trained prior, a dependency our work explicitly avoids. In many complex domains such as bio-acoustics \citep{xie2021bioacoustic}, hyperspectral imaging \citep{bioucas2012hyperspectral}, or gravitational wave detection \citep{christensen2018stochastic}, obtaining such clean, in-domain source data is practically impossible. Moreover, even in scenarios where clean source data can be obtained for prior training, domain shift between the training set and real-world measurements remains a pervasive challenge.

% Despite these developments, a significant bottleneck remains. All the aforementioned approaches operate within a supervised setting, requiring clean source data to train the generative model. In many real-world domains—such as bio-acoustics \citep{xie2021bioacoustic}, hyperspectral imaging \citep{bioucas2012hyperspectral} or gravitational wave detection \citep{christensen2018stochastic}, such clean data is unavailable.

To address the lack of clean source data, unsupervised regression-based source separation techniques have been developed; in particular, \textbf{MixIT} has become prominent~\cite{wisdom2020unsupervised}. To further enhance performance of MixIT, regression-based self-supervised approaches such as \textbf{ReMixIT}~\citep{tzinis2022remixit} and \textbf{Self-Remixing}~\citep{saijo2023self,saijo2025stabilizing} have been proposed. These methods operate on a ``teacher-student'' principle: a teacher model estimates sources from a mixture, and these estimates are then permuted and summed to form synthetic mixtures with known pseudo-targets; i.e., the teacher source estimates. The student model is then trained to separate these synthetic mixtures, effectively bootstrapping its performance. %While empirically effective, the theoretical understanding of these methods remains limited.

% JH: around here we may want to mention the so-called unsupervised models that use clean source priors - the unwary reviewer may google unsupervised generative separation and be confused by what they find.     https://arxiv.org/pdf/2002.07942 (jayaram2020source)   https://arxiv.org/pdf/2512.07226 (shi2025unsupervised)

% Using Variational Autoencoder techniques for unsupervised source separation \citep{neri2021unsupervised} has shown limited success. Meanwhile, proposals to learn  diffusion-based priors directly from corrupted data using Expectation-Maximization have emerged \citep{rozet2024learning,hosseintabar2025diffem}. However, they are computationally expensive and do not leverage the unique specificity of single-channel mixtures.

Earlier attempts to learn variational autoencoders priors from noisy data for source separation \citep{neri2021unsupervised} had limited success. 
More recently, proposals to learn  diffusion-based priors directly from corrupted data using Expectation-Maximization have emerged \citep{rozet2024learning,hosseintabar2025diffem}.
These methods require approximating complex guidance terms or training a conditional diffusion model at each iteration to learn an unconditional model for clean sources, which can be computationally prohibitive.
Furthermore, these unsupervised generative approaches have not yet been applied to single-channel source separation. The method we present here bypasses having to train an unconditional model.
% Furthermore, these unsupervised generative approaches do not leverage the unique specificity of single-channel source separation.

In this work, we propose Separation via Unsupervised Remixing Flow (\textbf{SURF}), a novel framework that integrates state-of-the-art supervised flow matching \citep{scheibler2025source} with regression-based self-supervised remixing \citep{tzinis2022remixit,saijo2023self}. This integration is non-trivial: while remixing algorithms rely on combining deterministic source estimates to enforce consistency, flow matching operates by learning velocity fields. SURF bridges this structural discrepancy, enabling the training of generative techniques directly from mixtures; see \cref{fig:surf_diagram} for a high-level illustration.

We provide an analysis that sheds new light on the objectives optimized by existing self-supervised methods \citep{tzinis2022remixit,saijo2023self,saijo2025stabilizing}. We further show that an instance of \textbf{SURF} can be reinterpreted as a Wake-Sleep algorithm~\cite{hinton1995wake}, a classical method for training latent-variable generative models. Finally, we demonstrate the efficacy of our unsupervised generative approach on image and audio separation benchmarks, showing state-of-the-art results. 

%In our formulation, the sleep phase updates the student separator/inference model using remixed (``dreamt'') mixtures created by the teacher, while the wake phase updates the implicit source prior defined by the teacher model. 

%, significantly outperforming existing unsupervised source separation techniques.

\vspace{-0.2cm}
\section{Problem Formulation \& Background}

\subsection{Source Separation}
Source separation can be understood as an inverse problem where given the observed mixture
\begin{equation}
\vm = \sum_{k=1}^K \vx^{(k)},
\label{eq:forward_measurement}
\end{equation}
we are tasked with recovering the $K$ underlying signals $\vx^{(k)} \in \mathbb{R}^{1 \times d}$. We write $\vx = [\vx^{(1)\top} \cdots \vx^{(K)\top}]^\top \in \mathbb{R}^{K \times d}$.

Inverting \eqref{eq:forward_measurement} is non-trivial due to two primary factors.

\paragraph{Single-channel Measurements} When $\vm$ is a $K$-channel measurement and $\vx^{(k)}$ are noise-free, this formulation becomes the %cocktail party problem with well-known results 
independent component analysis problem with well-known results~\citep{jutten1991blind,comon1994independent,hyvarinen2000independent}. However, in our case $\vm$ and $\vx^{(k)}$ are assumed to be \emph{single-channel}. In this setting, recovering the sources from this single mixture observation becomes a highly ill-posed problem, necessitating a strong prior on the underlying signals.
%Usually, this takes the form of a neural function $f_\theta$. 

\paragraph{Permutation Invariance} Henceforth, we will assume that the individual sources are i.i.d., i.e.,
\begin{equation}
    p(\vx)=p(\vx^{(1)}, \dots, \vx^{(K)}) = \prod_{k=1}^K p(\vx^{(k)}).
    \label{eq:joint_prior}
\end{equation}
This implies that $p(\vx)$ is invariant to the class of all $K$-permutations denoted $S_K$. In other words, there is no single canonical ordering of the individual sources $\{\vx^{(k)}\}_{k=1}^K$.
As the likelihood $p(\vm|\vx)=\delta(\vm-\sum_{k=1}^K \vx^{(k)})$ induced by \eqref{eq:forward_measurement} is also permutation invariant, this implies that the posterior $p(\vx|\vm)\propto p(\vm|\vx)p(\vx) $ is also permutation invariant.

To treat this invariance, a standard approach in source separation is to use a permutation invariant training (PIT) wrapper \citep{isik2016single,kolbaek2017multitalker}. Given an estimate $\hat{\vx}$ of $\vx$, PIT modifies any loss function $\ell$ via 
\begin{equation}\label{eq:PITloss}
    \mathcal{L}(\vx, {\hat{\vx}}) = \underset{\sigma \in S_K}{\min}\; \ell(\hat{\vx},\sigma \vx).
\end{equation}
where $\sigma$ is identified as a matrix permuting the rows of $\vx$.

\subsection{Supervised Source Separation via Flow Matching}\label{sec:supervisedFM}
In the supervised case, we have access to clean source samples $\vx \sim p(\vx)$ (see \eqref{eq:joint_prior}) and thus to $\vm=\sum_{k=1}^K \vx^{(k)}$; i.e., samples from $p(\vx,\vm)$. We review here how conditional Flow Matching (FM) can be applied to solve source separation in this setting \citep{lipman2022flow} as proposed by \citet{scheibler2025source}. FM defines an Ordinary Differential Equation (ODE) $(\vx_t)_{t\in [0,1]}$ which, initialized at $t=0$ from a reference ``noise'' distribution $\vx_0 \sim p_0(\vx_0|\vm)$, outputs at time $t=1$ samples $\vx_1 \sim p_1(\vx_1|\vm)=p(\vx_1|\vm)\propto p(\vx_1,\vm)$. To achieve this, we define a probability path $(p_t(\cdot|\vm))_{t\in[0,1]}$ that interpolates smoothly between $p_0(\cdot|\vm)$ and $p_1(\cdot|\vm)$: $p_t(\cdot|\vm)$ being given by the marginal distribution of $\vx_t=(1-t)\vx_0+t \vx_1$ for $\vx_0 \sim p_0(\cdot|\vm),~\vx_1 \sim p_1(\cdot|\vm)$. For source separation, the noise distribution for $\vx_0 \in\mathbb{R}^{K\times d}$ is defined for $\vz\in\mathbb{R}^{K\times d}$ by
\begin{equation}\label{eq:initFM}
\vx_0=\frac{1}{K} \vone \vm+\vP^\perp \vz,~\text{for}~\operatorname{vec}(\vz) \sim \mathcal{N}(\mathbf{0},\mI_{Kd}),
\end{equation}
where $\vP^\perp=\mI_K-\vP$ for $\vP= \vone  \vone^\top/K$ and $ \vone$ is the all one column vector of size $K$ so that $\vP^\perp$ acts on the source index on the left. This ensures that $\vm=\vone^\top \vx_0$ and yields $\vx_1-\vx_0=\vP^\perp (\vx_1-\vz)$. 
%For source separation, the noise distribution is defined by $\vx_0=\frac{1}{K} \vone \vm+\vP^\perp \vz$, where $\vz \sim \mathcal{N}(\mathbf{0},\mI_K)$, $\vP^\perp=\mI_K-\vP$ for $\vP= \vone  \vone^\top/K$ and $ \vone$ is the all one column vector of size $K$. 
%This yields $\vx_1-\vx_0=\vP^\perp (\vx_1-\vz)$. 

The drift of the flow matching ODE is then obtained by minimizing the following loss
\begin{equation}\label{eq:conditional_flow_for_separation}
\mathcal{L}_{\text{FM}}(\theta)=\mathbb{E}\big[ \left\| \vv_\theta(\vx_t, t, \vm) - (\vx_1 - \vx_0) \right\|^2 \big],
\end{equation}
where the expectation is w.r.t. $t \sim \mathcal{U}(0, 1), \vm \sim p(\vm), \vx_1\sim p_1(\cdot|\vm)$, (or, equivalently, $\vx_1 \sim p(\cdot)$ and $\vm=\vone^\top \vx_1$), $\vx_0 \sim p_0(\cdot|\vm)$ and $\vx_t=(1-t)\vx_0+t \vx_1$. For an expressive drift family, the solution of this minimization problem is $\vv_{\theta^*}(\vx,t,\vm)=\mathbb{E}[ \vx_1-\vx_0 \mid \vx_t=\vx,\vm]:=\vu(\vx,t,\vm)$ and is such that for $\vx_0 \sim p_0(\cdot|\vm)$ and
\begin{equation}
\text{d}\vx_t =\vu(\vx_t,t,\vm)\text{d}t
\end{equation}
then $\vx_t \sim p_t(\cdot|\vm)$ for $t\in [0,1]$, in particular $\vx_1 \sim p_1(\cdot|\vm)$; see \citet{lipman2022flow}. 

To exploit the mixture consistent and permutation invariant properties of the source separation task arising from \eqref{eq:forward_measurement} and \eqref{eq:joint_prior}, a projection layer is used to ensure mixture consistent outputs for any $\vv_\theta$, i.e., $\vm=\vone^\top \vx_1$, and we rely on a permutation equivariant architecture \citep{scheibler2025source}.

To further improve performance, \citet{scheibler2025source} proposed the following alternative inspired by PIT (see \eqref{eq:PITloss}). Given $\vx_0, \vx_1$, one selects 
\begin{align}\label{eq:PITpermutationflow}
\sigma^{\vx_1}_{\vx_0}(\theta)&=\argmin_{\sigma \in S_K} ||\vv_\theta(\vx_0,0,\vm)-(\sigma \vx_1-\vx_0)||^2\\
&=\argmin_{\sigma \in S_K} ||\hat{\vx}_{1,\theta}(\vx_0,\vm)-\sigma \vx_1||^2,
\end{align}
for the estimate of the clean sources 
\begin{equation}\label{eq:predictionstate}  
    \hat{\vx}_{1,\theta}(\vx_0,\vm) =\vx_0 + \vv_\theta(\vx_0, 0,\vm).
\end{equation}
We abuse notation and write $\sigma:=\sigma^{\vx_1}_{\vx_0}(\theta)$ but we underline that $\sigma$ is a function of $\vx_0,\vx_1$ and $\theta$.  \citet{scheibler2025source} then considers the modified flow matching loss 
\begin{equation}\label{eq:modified-conditional_flow_for_separation_main}
\mathcal{L}_{\text{MFM}}(\theta)=\mathbb{E}\big[ \left\| \vv_\theta(\vx^\sigma_t, t, \vm) - (\sigma \vx_1 - \vx_0) \right\|^2 \big]
\end{equation}
for the interpolant
\begin{equation}\label{eq:PITFMpath}
\vx^\sigma_t=(1-t)\vx_0+t \sigma \vx_1,
\end{equation}
$\sigma$ being identified as a matrix permuting the rows of $\vx_1$ and a stop-gradient being used on $\sigma$ in \eqref{eq:modified-conditional_flow_for_separation_main}. Despite $\sigma$ being a function of $\vx_0, \vx_1$ and $\theta$, it can be shown that $\sigma \vx_1$ is marginally distributed according to $p_1(\cdot|\vm)$ for any $\vm$  when $\vx_0 \sim p_0(\cdot|\vm)$, $\vx_1 \sim p_1(\cdot|\vm)$; see Appendix \ref{app:proofvalidityFLOSS} for details.  This implies that the interpolant \eqref{eq:PITFMpath} defines a valid flow matching path between $p_0(\cdot|\vm)$ and $p_1(\cdot|\vm)$. 
%This approach was shown empirically to significantly 
%It is directly inspired by PIT (see \eqref{eq:PITloss}) as it admits the equivalent formulation $\sigma^{\vx_1}_{\vx_0}(\theta)=\argmin_{\sigma \in S_k} ||\hat{\vx}_{1,\theta}(\vx_0,\vm)-\sigma \vx_1||^2$ for 
The resulting method, \textbf{FLOSS}, outperforms regular flow matching and achieves state-of-the-art in supervised separation \citep{scheibler2025source}. It will be instrumental in the development of the proposed techniques.

\subsection{Unsupervised Source Separation}
\label{subsec:floss}
ReMixIT~\cite{tzinis2022remixit} and Self-Remixing~\cite{saijo2023self} improve upon the prominent unsupervised MixIT method~\cite{wisdom2020unsupervised}. These self-supervised methods train a student model using new mixtures created by recombining sources extracted by a teacher; see \cref{fig:surf_diagram}.
% MixIT has become the prominent unsupervised source separation technique~\cite{wisdom2020unsupervised}.
% ReMixIT~\cite{tzinis2022remixit} and Self-Remixing~\cite{saijo2023self} are self-supervised source separation methods capable of improving upon teacher models like MixIT.
% They take advantage of the fact that sources extracted from various mixtures using a teacher model can be combined to form a new mixture with known source targets to train an improved student model.

Let $\vM = [\vm_1^\top\ \cdots \ \vm_B^\top]^\top$ be a batch of $B$ mixtures, $f_{\calT}$ be a teacher model and $\bar{\vx}_b = f_{\calT}(\vm_b)$ the estimate of the sources in the $b^{\textrm{th}}$ mixture.
These sources are recombined into new mixtures with a permutation matrix $\mPi$ sampled uniformly on the symmetric group $S_{BK}$ 
% $\reallywidetilde{X}$
\begin{eqnarray}
    \tildevX = \mPi \barvX,\ \text{where}\ 
    \left\{
    \begin{array}{ll}
        \tildevX & = [\tilde{\vx}_1^\top \ \cdots \ \tilde{\vx}_B^\top ]^\top, \\
        \barvX & = [ \bar{\vx}_1^\top \ \cdots \ \bar{\vx}_B^\top ]^\top, \\
    \end{array}
    \right.
    \label{eq:remix_permutation}
\end{eqnarray}
so $\barvX,\tildevX$ are of dimension $BK \times d$. We then create a batch of $B$ new mixtures of dimension $B\times d$ using
\begin{eqnarray}
    \tildevM = (\mI_B \otimes \vone^\top)\, \tildevX,
\end{eqnarray}
where $\otimes$ is the Kronecker product.

Now let $\hat{\vx}_b = f^b_{\theta}(\tilde{\vm})$ be the estimate of the sources in the $b^{\textrm{th}}$ new mixture by a student model.
In ReMixIT~\cite{tzinis2022remixit}, $f_{\theta}$ is trained using the loss
\begin{equation}\label{eq:remixit_loss}
    \calL_{\text{RM}} (\theta)= \E\left[\left\| \hatvX -\mUpsilon \tildevX  \right\|^2\right]=\E\left[\left\|\mUpsilon^{-1}\hatvX  - \tildevX  \right\|^2\right]
\end{equation}
where $\mUpsilon$ is chosen to attribute each source estimate to its original source using a PIT loss \eqref{eq:PITloss}; i.e. $\mUpsilon=\operatorname{blkdiag}(\sigma_1,\dots,\sigma_B)$ where $\sigma_b = \argmin_{\sigma \in S_K} \|  \hat{\vx}_b- \sigma \tilde{\vx}_b\|^2$.
One issue with this loss is that the targets $\tildevX$ include the errors from the teacher model.
Self-Remixing bypasses this issue by regressing w.r.t. \textit{the input mixtures},
\begin{eqnarray}\label{eq:self_remixing_loss}
    \calL_{\text{SR}}(\theta) = \E\left[\left\|  (\mI_B \otimes \vone^\top)\,\mPi^{-1} \mUpsilon^{-1}\hatvX -\vM \right\|^2\right].
\end{eqnarray}
After having trained the student, we update the teacher parameters using an exponential moving average mechanism before updating again the student.

Both ReMixIT \citep{tzinis2022remixit} and Self-Remixing \citep{saijo2023self} provide a framework for teacher models $f_{\cal{T}}$ to be distilled into student models. This process can then be repeated by using the student model as the new teacher model. This results empirically in a student model that exceeds the performance of the original teacher model.

\section{Separation via Unsupervised Remixing Flow}
We propose Separation via Unsupervised Remixing Flow (\textbf{SURF}), a novel \textit{generative} unsupervised separation algorithm via Flow Matching (FM). \textbf{SURF} integrates the supervised flow matching \textbf{FLOSS} method of \cite{scheibler2025source} with regression-based self-supervised methods \citep{tzinis2022remixit,saijo2023self,saijo2025stabilizing}. We will use \textbf{FLOSS} for both the teacher and the student, initialized by a MixIT model \cite{wisdom2020unsupervised}. To achieve this, we will rely on the key connection between denoiser and drift for flow matching. 

\subsection{Relating Flow Matching Velocity to Regression}
A significant discrepancy between FM and regression-based approaches is that flow models estimate a velocity term $\vv_\theta$ at $\vx_t$, whereas regression-based models directly estimate the source $\hat{\vx}$ itself. To bridge these concepts, we use the following elementary result 
\begin{align}
    \mathbb{E}[\vx_1 \mid \vx_t,\vm] &= \vx_t + (1-t) \vu(\vx_t,t,\vm)\\
    & \approx \vx_t + (1-t) \vv_\theta(\vx_t, t,\vm).
    \label{eq:approximate_tweedies_flow_matching}
\end{align}
which follows from $\vu(\vx,t)=\mathbb{E}[\vx_1-\vx_0|\vx_t=\vx,\vm]$ \citep{lipman2022flow} and $\vx_1-\vx_0=\frac{\vx_1-\vx_t}{1-t}$.

\subsection{ReMixIT for Flow Matching}\label{subsec:ReMiXIT_flow}
ReMixIT for FM can be thought of as applying supervised FM to synthetic mixtures created by randomly mixing the source estimates given by a teacher model. Specifically, given teacher outputs $\barvX$ obtained by FM based on mixture samples $\vM$, we sample a random permutation $\mPi$ and shuffle the $BK$ rows of $\barvX$ to obtain $\tildevX_1=\tildevX=\mPi \barvX$ as in \eqref{eq:remix_permutation}. These shuffled sources are used to create synthetic mixtures using $\tildevM= (\mI_B \otimes \vone^\top)~\tildevX_1$. We then apply supervised flow matching to $\tildevX_1$. Following \eqref{eq:initFM}, we sample initial conditions $\tildevX_0=\tfrac{1}{K}(\mI_B \otimes \vone)\tildevM+(\mI_B \otimes \vP^\perp) \vZ$ where $\vZ\in\mathbb{R}^{BK\times d}$, $\operatorname{vec}(\vZ) \sim \mathcal{N}(0,\mI_{BKd})$. %$\vZ \sim \mathcal{N}(0,\mI_{BK \times d})$. 
%To simplify presentation, we follow \eqref{eq:conditional_flow_for_separation}.

Let $\mUpsilon=\operatorname{blkdiag}(\sigma_1,\dots,\sigma_B)$ where $\sigma_b \in S_K$. We define the batch version of an interpolant of the form \eqref{eq:PITFMpath} 
\begin{equation}\label{eq:interpolantUpsilon}
\tildevX^{\mUpsilon}_t=(1-t)\tildevX_0+t \mUpsilon \tildevX_1
\end{equation}
The set of $B$ permutations $\mUpsilon$ we use is obtained by minimizing a PIT-type loss as in \eqref{eq:PITpermutationflow}
\begin{equation}\label{eq:PITlossFM}
\mUpsilon=\argmin_{\boldsymbol{\Gamma} \in (S_K)^{B}} \|\vv_\theta(\tildevX_0,0,\tildevM)-(\boldsymbol{\Gamma} \tildevX_1-\tildevX_0)\|^2
\end{equation}
for $\boldsymbol{\Gamma}=\operatorname{blkdiag}(\gamma_1,\dots,\gamma_B)$, i.e.
\begin{equation}\label{eq:PIT_FM}
\sigma_b =\argmin_{\gamma_b \in S_K} ||\hat{\vx}_{1,\theta}(\tilde{\vx}_{0,b},\tilde{\vm}_b)-\gamma_b \tilde{\vx}_{1,b}||^2.
\end{equation}
In \eqref{eq:PITlossFM}, we write the stacked velocities of dimension $BK\times d$ for any $\mUpsilon\in(S_K)^B$ as the vertical block stack
\begin{equation*}
    \vv_\theta(\tildevX^{\mUpsilon}_t, t, \tildevM)
    :=
    \begin{bmatrix}
        \vv_\theta(\tilde{\vx}^{\sigma_1}_{t,1}, t, \tilde{\vm}_1)\\
        \vdots\\
        \vv_\theta(\tilde{\vx}^{\sigma_B}_{t,B}, t, \tilde{\vm}_B)
    \end{bmatrix}
    \in \mathbb{R}^{BK\times d}.
\end{equation*}
% We also write the stacked velocities of dimension  $BK \times d$ for any $\mUpsilon \in (S_K)^B$ as
% \begin{equation*}
%     \vv_\theta(\tildevX^{\mUpsilon}_t, t, \tildevM) = 
%     \begin{bmatrix} 
%         \vv_\theta(\tilde{\vx}^{\sigma_1}_{t,1}, t, \tilde{\vm}_1), \dots, \vv_\theta(\tilde{\vx}^{\sigma_b}_{t,B}, t, \tilde{\vm}_B)
%     \end{bmatrix}^\top.
% \end{equation*}
% The PIT-type permutation $\mUpsilon$ is defined as
% \begin{equation}\label{eq:PITlossFM}
% \mUpsilon=\argmin_{\boldsymbol{\Gamma} \in (S_K)^{B}} \|\vv_\theta(\tildevX_0,0,\tildevM)-(\boldsymbol{\Gamma} \tildevX_1-\tildevX_0)\|^2
% \end{equation}
% for $\Gamma=\operatorname{blkdiag}(\gamma_1,\dots,\gamma_B)$, i.e.
% \begin{equation}\label{eq:PIT_FM}
% \sigma_b =\argmin_{\gamma_b \in S_K} ||\hat{\vx}_{1,\theta}(\tilde{\vx}_{0,b},\tilde{\vm}_b)-\gamma_b \tilde{\vx}_{1,b}||^2.
% \end{equation}
We will denote 
\begin{equation}\label{eq:ResidualFM}
\vR_t:=\vv_\theta(\tildevX^{\mUpsilon}_t,t,\tildevM) - (\mUpsilon \tildevX_1 - \tildevX_0).
\end{equation}
So the equivalent of the FM loss \eqref{eq:modified-conditional_flow_for_separation_main} applied to data $\tilde \vX$ we will minimize is simply given by
% \begin{equation}\label{eq:lossRMFM}
% \calL_{\text{RM-FM}} (\theta)=\E\left[\|\vv_\theta(\tildevX^{\mUpsilon}_t, t, \tildevM)-(\mUpsilon \tildevX_1-\tildevX_0) \|^2 \right],
% \end{equation}
\begin{equation}\label{eq:lossRMFM}
\calL_{\text{RM-FM}} (\theta)=\E\left[\|\vR_t \|^2 \right],
\end{equation}
where $\| \cdot \|$ is used to denote (abusively) the Frobenius norm. The expectation is w.r.t. $t \sim \lambda$ (a normalized weighting function), $\vM$, $\mPi$, $\tildevX_0$, $\tildevX_1$. The corresponding algorithm is presented in Algorithm \ref{alg:surf} (\red{red}).
%\ref{alg:surf_remixit}.

\begin{algorithm}[t]
    \caption{SURF (\red{ReMixIT} and \green{Self-Remixing})} 
    \label{alg:surf}
    \begin{algorithmic}[1]
        \STATE \textbf{Require:} Flow Matching Model, Teacher $\theta_\mathcal{T}$, Student $\theta$, Batch $B$, EMA parameter $\alpha$
        \WHILE{not converged}
             \STATE $\vM \sim p(\vM)$\hfill \textbf{Unsupervised Remixing}
             \STATE $\barvX \sim p_{\theta_\mathcal{T}}(\barvX \mid \vM)$ 
             \STATE $\mPi \sim \mathcal{U}(S_{BK}), \quad \tildevX_1 = \mathbf{\Pi} \barvX$                    \STATE $\tildevM = (\mI_B \otimes \vone^\top)\tildevX_1$ 
                    
             \item[] \hrulefill
                    
              \STATE $t \sim \lambda$ \hfill
              \textbf{Flow Matching Path \& Loss}
              \STATE $\operatorname{vec}(\vZ) \sim \mathcal{N}(\mathbf{0},\mI_{BKd})$ for $\vZ\in\mathbb{R}^{BK\times d}$  
            % \STATE $\vZ \sim \mathcal{N}(0,\mI_{BK \times d})$  
                    \STATE $\tildevX_0=\tfrac{1}{K}(\mI_B \otimes \vone)\tildevM+ (\mI_B \otimes \vP^\perp) \vZ$
                    \STATE $\mUpsilon=\argmin_{\boldsymbol{\Gamma}} \|\vv_\theta(\tildevX_0,0,\tildevM)-(\boldsymbol{\Gamma} \tildevX_1-\tildevX_0)\|^2$
                    \STATE $\tildevX^{\mUpsilon}_t = (1-t)\tildevX_0 + t \mUpsilon \tildevX_1$
                    
                    \STATE $\vR_t:= \vv_\theta(\tildevX^{\mUpsilon}_t,t,\tildevM) - (\mUpsilon \tildevX_1 - \tildevX_0)$
                    
                    \STATE $\red{\mathcal{L}_\text{RM-FM}=\left\| \vR_t \right\|^2}$ \STATE $\green{\mathcal{L}_\text{SR-FM}=\left\| (\mI_B \otimes \vone^\top) \mPi^{-1} \mUpsilon^{-1} \vR_t \right\|^2}$ 
                    \item[] \hrulefill

                    \STATE $\theta \leftarrow \theta - \eta \nabla_\theta \mathcal{L}_{\red{\textup{RM-FM}}/\green{\textup{SR-FM}}}$\hfill \textbf{Update}
                    \STATE $\theta_{\mathcal{T}} \leftarrow \alpha \theta_{\mathcal{T}} + (1-\alpha)\theta$
                \ENDWHILE
            \end{algorithmic}
        \end{algorithm}

\subsection{Self-Remixing for Flow Matching}\label{subsec:SelfRemixing_flow}
Now consider the Self-Remixing loss, we want to adapt \eqref{eq:self_remixing_loss} to our flow matching framework. Leveraging  \eqref{eq:approximate_tweedies_flow_matching}, we propose the clean source estimates
\begin{equation}\label{eq:predictioncleananyt}
\hatvX^{\mUpsilon}_{1,\theta}(\tildevX^{\mUpsilon}_t,t,\tildevM)=\tildevX^{\mUpsilon}_t+(1-t)\vv_\theta(\tildevX^{\mUpsilon}_t,t,\tildevM),
\end{equation}
where $\tildevX^{\mUpsilon}_t$ is defined by \eqref{eq:interpolantUpsilon} and \eqref{eq:PITlossFM}.

Plugging $\hatvX_{1,\theta}:=\hatvX^{\mUpsilon}_{1,\theta}(\tildevX^{\mUpsilon}_t,t,\tildevM)$ into  \eqref{eq:self_remixing_loss} gives us the self-remixing loss by taking the expectation over all the random variables and $t \sim \mathcal{U}[0,1]$. This loss can be re-expressed in a form more similar to \eqref{eq:lossRMFM} by writing
\begin{align}
&\left\|  (\mI_B \otimes \vone^\top)\,\mPi^{-1}  \mUpsilon^{-1}\hatvX_{1,\theta} -\vM \right\|^2  \label{eq:SRasfunctionofM}\\
% =&\bigg\| \underbrace{(\mI_B \otimes \vone^\top) \mPi^{-1} \mUpsilon^{-1} \hatvX_{1,\theta}}_{\text{Student Term}}
 %-\underbrace{(\mI_B \otimes \vone^\top) \mPi^{-1}  \tildevX_1}_{\text{Target Term } \vM} \bigg\|^2 \nonumber \\
&\quad = \bigg\| (\mI_B \otimes \vone^\top) \mPi^{-1} \big( \underbrace{\mUpsilon^{-1} \hatvX_{1,\theta}}_{\text{Student}} - \underbrace{\tildevX_1}_{\text{Target}}\big) \bigg\|^2, \label{eq:SRquadratic}
\end{align}
where the mixture consistency of the teacher yields $\vM = (\mI_B \otimes \vone^\top) \mPi^{-1} \tildevX_1$.

%Now, using \eqref{eq:predictioncleananyt}, we have 
%\begin{align}
%    & \mUpsilon^{-1} \hatvX_{1,\theta} - \tildevX_1 \label{eq:SRidentity} \\
%  %  & = \mUpsilon^{-1} \left[ \tildevX^{\mUpsilon}_t+ (1 - t)\vv_\theta(\tildevX^{\mUpsilon}_t,t,\tildevM) \right] - \mUpsilon^{-1} \tildevX^{\mUpsilon}_1 \nonumber \\
%    & = \mUpsilon^{-1} \left[ \tildevX^{\mUpsilon}_t + (1 - t)\vv_\theta(\tildevX^{\mUpsilon}_t,t,\tildevM) \right] \nonumber \\
%    & \quad - \mUpsilon^{-1} \left[ \tildevX^{\mUpsilon}_t + (1 - t)(\tildevX_1 - \tildevX_0) \right] \nonumber \\
%    & = (1 - t) \mUpsilon^{-1} \Big[ \vv_\theta(\tildevX^{\mUpsilon}_t,t,\tildevM) \nonumber - (\mUpsilon \tildevX_1 - \tildevX_0) \Big] \nonumber
%\end{align}
Now, using \eqref{eq:interpolantUpsilon},  \eqref{eq:ResidualFM} and \eqref{eq:predictioncleananyt}, we obtain 
\begin{align}
    & \mUpsilon^{-1} \hatvX_{1,\theta} - \tildevX_1 = (1 - t) \mUpsilon^{-1} \vR_t.
    \label{eq:SRidentity} % \\
  %  & = \mUpsilon^{-1} \left[ \tildevX^{\mUpsilon}_t+ (1 - t)\vv_\theta(\tildevX^{\mUpsilon}_t,t,\tildevM) \right] - \mUpsilon^{-1} \tildevX^{\mUpsilon}_1 \nonumber \\
%    & = (1 - t) \mUpsilon^{-1} \Big[ \vv_\theta(\tildevX^{\mUpsilon}_t,t,\tildevM) \nonumber - (\mUpsilon \tildevX_1 - \tildevX_0) \Big]. \nonumber
\end{align}
% Define the unshuffled residual $\vR_t \in \mathbb{R}^{BK\times d}$
% \begin{equation}\label{eq:unshuffleresidual}
% \vR_t:=\mPi^{-1} \mUpsilon^{-1} \big(\vv_\theta(\tildevX^{\mUpsilon}_t,t,\tildevM) - (\mUpsilon \tildevX_1 - \tildevX_0)\big)
% \end{equation}
% where $(1-t)$ is dropped,
% and let $\vR_t = [\vR_{t,1}^\top, \dots, \vR_{t,B}^\top]^\top$ for $\vR_{t,b} \in \mathbb{R}^{K\times d}$. 
% %
Combining \eqref{eq:SRquadratic} and \eqref{eq:SRidentity} first gives the regression-style loss induced by the denoising proxy \eqref{eq:predictioncleananyt}:
\begin{equation}\label{eq:selfremixing_lossFM_oldweighted}
    \calL_{\text{SR-FM}}^{\rm aux} (\theta)
    =\E\left[(1-t)^2\left\| (\mI_B \otimes \vone^\top)\mPi^{-1} \mUpsilon^{-1} \vR_t \right\|^2 \right].
\end{equation}
In the algorithm we instead use the following reweighted objective
\begin{equation}\label{eq:selfremixing_lossFM}
    \calL_{\text{SR-FM}} (\theta)=\E\left[\left\| (\mI_B \otimes \vone^\top)\mPi^{-1} \mUpsilon^{-1} \vR_t \right\|^2 \right].
\end{equation}
This is equivalent to the objective induced by the endpoint proxy \eqref{eq:predictioncleannew} discussed in Appendix~\ref{app:AlternativeSelfRemixingloss}.
% Finally, by combining \eqref{eq:SRquadratic} and \eqref{eq:SRidentity}, we obtain the self-remixing loss 
% \begin{equation}\label{eq:selfremixing_lossFM}
%     \calL_{\text{SR-FM}} (\theta)=\E\left[\left\| (\mI_B \otimes \vone^\top)\mPi^{-1} \mUpsilon^{-1} \vR_t \right\|^2 \right].
%  %   =\sum_{b=1}^B \E\left[\left\| (\mI_B \otimes \vone^\top) \vR_{t,b} \right\|^2 \right].
% \end{equation}
Recall that \eqref{eq:SRasfunctionofM} shows that the regression target of this loss depends on the teacher estimates $\tildevX_1$ only through $\vM$.

The corresponding algorithm is presented in Algorithm \ref{alg:surf} (\green{green}).

\section{Analysis}
\label{sec:analysis}
Here, we provide insights into the objectives optimized by these algorithms. Existing results for standard ReMixIT are limited \citep[see][]{tzinis2022remixit} and, to the best of our knowledge, non-existent for Self-Remixing. Because our FM framework introduces additional complexity, we introduce simplifying assumptions to make the analysis tractable.

\subsection{Alignment and Remixing in the Population Limit}
\label{sec:idealized_self_remixing}
We consider a population analysis version where the batch size $B \to \infty$. Let $p(\vm)$ denote the true mixture distribution. Since source tuples are unordered, we use the following alignment
convention only for the population analysis. Given the true source tuple
\(\vx=[\vx^{(1)\top}\cdots \vx^{(K)\top}]^\top\) and a ``raw'' teacher sample
\(\bar{\vx}^{\rm raw}\sim p_{\theta_\mathcal{T}}(\cdot\mid\vm)\), choose a
PIT minimizer
\[
\pi_{\mathcal T}(\vx,\bar{\vx}^{\rm raw})
\in
\argmin_{\pi\in S_K}
\sum_{k=1}^K
\left\|
\bar{\vx}^{\rm raw,(\pi(k))}-\vx^{(k)}
\right\|^2,
\]
and define $\bar{\vx}^{(k)}
=
\bar{\vx}^{\rm raw,(\pi_{\mathcal T}(k))}$.
Thus \(\bar{\vx}^{(k)}\) denotes the teacher estimate aligned
with the unobserved true source \(\vx^{(k)}\). This is a relabelling convention used
only to define population error terms, not an algorithmic step.
% Given observed mixture $\vm$, the FM teacher samples source estimates $\bar{\vx}=[\bar{\vx}^{(1)\top}\ \cdots\ \bar{\vx}^{(K)\top}]^\top %$\sim  p_{\theta_\mathcal{T}}(\,\cdot\mid \vm)$ $\bar{\vx}=[\bar{\vx}^{(1)}\ \cdots\ \bar{\vx}^{(K)}]^\top  
% \sim  p_{\theta_\mathcal{T}}(\,\cdot\mid \vm)$ where $\bar{\vx} \in\mathbb{R}^{K\times d}$. 

For finite $B$, the batch permutation $\mPi \in S_{BK}$ is used to approximate independent recombination of teacher sources drawn from different mixtures. In the $B \to \infty$ limit, this is captured by the \emph{population} marginal of a single teacher-estimated source:
\begin{equation}
  \bar p_{\theta_\mathcal{T}}(\bar{\vx}^{(k)})=\int  p_{\theta_\mathcal{T}}(\bar{\vx}^{(k)},\bar{\vx}^{(-k)} \mid \vm) p(\vm)\text{d}\bar{\vx}^{(-k)}\text{d}\vm,
    \label{eq:teacher_source_marginal}
\end{equation}
where $(-k)$ denotes all sources but the $k$th.
This distribution is independent of $k$ by exchangeability. For each teacher estimate $\bar{\vx}^{(k)}$ for $k\in\{1,\ldots,K\}$, sample ``background'' sources $\bar{\vx}_{\text{bg}}^{(k, 2)}, \dots, \bar{\vx}_{\text{bg}}^{(k, K)} \overset{\text{i.i.d.}}{\sim} \bar{p}_{\theta_\mathcal{T}}$. We then form the matrix ``anchored'' to source $k$ 
\begin{equation}
\label{eq:pseudo_sources}
\displaystyle \tilde{\vx}_k := \begin{bmatrix} (\bar{\vx}^{(k)})^\top, (\bar{\vx}_{\mathrm{bg}}^{(k,2)})^\top, \dots, (\bar{\vx}_{\mathrm{bg}}^{(k,K)})^\top \end{bmatrix}^\top\in\mathbb{R}^{K\times d},
\end{equation}
and write $\tilde{\vm}_k := \vone^\top \tilde{\vx}_k$ for the corresponding synthetic mixture.
% \begin{equation}
% \label{eq:pseudo_sources}
%  \tilde{\vx}_k := \begin{bmatrix} (\bar{\vx}^{(k)})^\top, (\bar{\vx}_{\mathrm{bg}}^{(k,2)})^\top, \dots, (\bar{\vx}_{\mathrm{bg}}^{(k,K)})^\top \end{bmatrix}^\top,
% \end{equation}
% and set $\tilde{\vm}_k := \vone^\top \tilde{\vx}_k$.
This construction is the population replacement of the batch shuffle $\mPi$ and the regrouping $(\mI_B\otimes\vone^\top)\tildevX_1$: no explicit $\mPi$ appears, only i.i.d.\ sampling from $\bar{p}_{\theta_\mathcal{T}}$.

We then also sample $\tilde{\vx}_{k,0} \sim p_0(\vx_0 \mid \tilde{\vm}_k)$ following \eqref{eq:initFM}. We consider the clean source estimates as in \eqref{eq:predictionstate}
\begin{equation}
    \hat{\vx}_{k, 1, \theta}(\tilde{\vx}_{k,0}, \tilde{\vm}_k) = \tilde{\vx}_{k,0} + \vv_\theta(\tilde{\vx}_{k,0}, 0, \tilde{\vm}_k).
    \label{eq:cleansourcepredictorpopulation}
\end{equation}
For each matrix $\tilde{\vx}_k$, define the PIT permutation as in \eqref{eq:PITpermutationflow}
\begin{equation}
    \sigma_k = \operatorname*{arg\,min}_{\sigma \in S_K}
    \left\| \hat{\vx}_{k,1,\theta}(\tilde{\vx}_{k,0}, \tilde{\vm}_k) - \sigma \tilde{\vx}_k \right\|^2.
    \label{eq:PITpopulation}
\end{equation}

\subsection{Population-Level Unsupervised Flow Matching}\label{sec:popRemixitSelfremixing}
Let us denote 
\begin{equation}\label{eq:residualFMpopulation}
\vr_k=\vv_\theta(\tilde{\vx}_{k,t}^{\sigma_k}, t, \tilde{\vm}_k) - (\sigma_k \tilde{\vx}_k - \tilde{\vx}_{k,0}).
\end{equation}
The population analogue of the ReMixIT-Flow objective \eqref{eq:lossRMFM} is given by
\begin{equation}
    \mathcal{L}^{\infty}_{\mathrm{RM-FM}}(\theta)
    =\frac{1}{K} \mathbb{E} \left[
    \left\| \vr_k \right\|^2
    \right],
    \label{eq:remixit_population_loss}
\end{equation}
where $\tilde{\vx}_{k,t}^{\sigma} = (1-t)\tilde{\vx}_{k,0} + t \sigma \tilde{\vx}_k$ and the expectation is w.r.t.\ $\vm$, $(\vx^{(k)})_{k=1}^K$, $((\bar{\vx}_{\text{bg}}^{(k, j)})_{j=2}^K)_{k=1}^K$, $\tilde{\vx}_{k,0}$, and $t \sim \lambda$. The left hand side of \eqref{eq:remixit_population_loss} is identical for all $k$.

Self-Remixing differs from ReMixIT in that supervision is against the observed mixture $\vm$, and routing back to $\vm$ is achieved by construction in \eqref{eq:pseudo_sources}: for source $k$, the ``anchor'' source $\bar{\vx}^{(k)}$ occupies row $1$ of $\tilde{\vx}_k$ (the remaining rows are independent background sources). 

For each matrix $\tilde{\vx}_k$, we also define the PIT population \eqref{eq:PITpopulation}. The predicted contribution to the original mixture is then the first row of $\tilde{\vx}_k$, which is estimated by the first row of $\sigma^{-1}_k \hat{\vx}^{\sigma_k}_{k,1,\theta}(\tilde{\vx}^{\sigma_k}_{k,t}, t, \tilde{\vm}_k)$. %$\sigma^{-1}_k \hat{\vx}_{k,1,\theta}(\tilde{\vx}_{k,t}, t, \tilde{\vm}_k)$. 
Here, the time-dependent estimate of the clean sources is defined by
\begin{equation*}
    \hat{\vx}_{k,1,\theta}^{\sigma_k}(\tilde{\vx}^{\sigma_k}_{k,t}, t, \tilde{\vm}_k) = \tilde{\vx}_{k,t}^{\sigma_k} + (1-t)\vv_\theta(\tilde{\vx}^{\sigma_k}_{k,t}, t, \tilde{\vm}_k).
\end{equation*}
% \begin{equation*}
%     \hat{\vx}_{k,1,\theta}^{\sigma_k}(\tilde{\vx}_{k,t}, t, \tilde{\vm}_k) = \tilde{\vx}_{k,t}^{\sigma_k} + (1-t)\vv_\theta(\tilde{\vx}_{k,t}, t, \tilde{\vm}_k).
% \end{equation*}
This is the population equivalent of \eqref{eq:predictioncleananyt}. Hence, an estimate of the original mixture $\vm = \sum_{k=1}^K \vx^{(k)}$ is 
\begin{equation*}
  \hat{\vm}_\theta (\tilde{\vx}_t, t, \tilde{\vm}) = e_1^\top \sum_{k=1}^K \sigma^{-1}_k \hat{\vx}^{\sigma_k}_{k,1,\theta}(\tilde{\vx}^{\sigma_k}_{k,t}, t, \tilde{\vm}_k),
\end{equation*}
% \begin{equation*}
%     \hat{\vm}_\theta (\tilde{\vx}_t, t, \tilde{\vm}) = e_1^\top \sum_{k=1}^K \sigma^{-1}_k \hat{\vx}_{k,1,\theta}(\tilde{\vx}_{k,t}, t, \tilde{\vm}_k),
% \end{equation*}
where $e_1$ is the first standard basis vector. Giving teacher mixture consistency,
\(\sum_{k=1}^K \bar{\vx}^{(k)}=\sum_{k=1}^K \vx^{(k)}=\vm\), calculations very similar to \cref{subsec:SelfRemixing_flow} yield
\begin{equation}
    (1-t)^{-1} \left( \hat{\vm}_\theta (\tilde{\vx}_t, t, \tilde{\vm}) - \vm \right) =\sum_{k=1}^K e_1^\top \sigma_k^{-1} \vr_k,
\end{equation}

% \begin{align*}
%     (1-&t)^{-1} \left( \hat{\vm}_\theta (\tilde{\vx}_t, t, \tilde{\vm}) - \vm \right) \\
%     &= \sum_{k=1}^K \underbrace{e_1^\top \sigma_k^{-1} \left( \vv_\theta(\tilde{\vx}_{k,t}^{\sigma_k}, t, \tilde{\vm}_k) - (\sigma_k \tilde{\vx}_{k,1} - \tilde{\vx}_{k,0}) \right)}_{:=\mathbf{r}_k},
% \end{align*}
where $\vr_k$ represents the error in the predicted drift for the specific anchored source $\bar{\vx}^{(k)}$. 

Using the same reweighted convention as in \eqref{eq:selfremixing_lossFM}, and normalizing by $1/K$, we consider for $t\sim\lambda$
\begin{equation}
    \mathcal{L}^{\infty}_{\mathrm{SR-FM}}(\theta)
    = \frac{1}{K}\,\mathbb{E} \left[\Big\| \sum_{k=1}^K  e_1^\top \sigma_k^{-1} \vr_k \Big\|^2 \right].
    \label{eq:selfremixing_population_loss}
\end{equation}

% Dropping again the $(1-t)^2$ scaling factor appearing in the squared norm, we consider for $t \sim \lambda$
% \begin{equation}
%     \mathcal{L}^{\infty}_{\mathrm{SR-FM}}(\theta)
%     = \mathbb{E} \left[\Big\| \sum_{k=1}^K  e_1^\top \sigma_k^{-1} \vr_k \Big\|^2 \right].
%     \label{eq:selfremixing_population_loss}
% \end{equation}

\subsection{Relationship to Supervised Flow Matching}
We now use the alignment convention introduced above. Teacher
estimates have already been relabelled with respect to the latent source tuple,
so quantities such as \(e^{(k)}=\bar{\vx}^{(k)}-\vx^{(k)}\) are well defined.
For each remixed pseudo-example, the student output is also an unordered
\(K\)-tuple; we route it by the PIT permutation associated with the matrix \(\tilde{\vx}_k\). After applying this routing,
we suppress the explicit permutation notation. Thus, when we write
\(\hat{\vx}_{\theta,t}^{(k)}\), we mean the student estimate routed to the
anchor \(\bar{\vx}^{(k)}\), not a fixed raw output index.
% We provide insights into the population losses by further making the assumption that the PIT permutation is the oracle permutation recovering the labeling of the true sources (so we drop permutation $\sigma_k$). Given we are starting with a strong teacher model (MixIT), it is a reasonable simplifying assumption. 

We propose in Appendix \ref{app:analyspopulationlosses} an analysis of the regression-style ReMixIT and Self-Remixing and show how this analysis carries over to flow matching once we work with the time-dependent clean estimate
$\hat \vx_{k,1,\theta}(\tilde \vx_{k,t},t,\tilde \vm_k)=\tilde \vx_{k,t}+(1-t)\,\vv_\theta(\tilde \vx_{k,t},t,\tilde \vm_k)$
and the anchor component
$\hat \vx^{(k)}_{\theta,t}:= e_1^\top \hat \vx_{k,1,\theta}(\tilde \vx_{k,t},t,\tilde \vm_k)$. Let $e^{(k)}=\bar \vx^{(k)}-\vx^{(k)}$ and define the anchor sigma-field
$\mathcal A_k=\sigma(\vx^{(k)},\bar\vx^{(k)},t)$.  The systematic anchor-component error is
$\delta^{(k)}_{\theta,t}
:=\E\big[\hat \vx^{(k)}_{\theta,t}-\vx^{(k)}\mid \mathcal A_k\big],
$
i.e. the error averaged over the random background sources and the FM noise.  Denoting for $\beta_t=(1-t)^{-2}\lambda_t$ the pseudo-supervised flow matching loss
\begin{equation}\label{eq:pseudoFMloss}
L^\infty_{{\rm Sup}\text{-}{\rm FM}}(\theta)=
\E\Big[\beta_t\,\big\|\hat \vx^{(1)}_{\theta,t}-\vx^{(1)}\big\|^2\Big].
\end{equation}
We call this the pseudo-supervised FM loss as $\hat \vx^{(k)}_{\theta,t}$ is a function of $\tilde \vm_k$ which does \emph{not} follow the distribution of the observed mixtures.

We then get
\[
\begin{aligned}\label{eq:decompolosses}
L^\infty_{{\rm RM}\text{-}{\rm FM}}(\theta) &= L^\infty_{{\rm Sup}\text{-}{\rm FM}}(\theta) + \mathbb{E} [\beta_t(\|e^{(1)}\|^2
 - 2 \langle e^{(1)},\delta^{(1)}_{\theta,t}\rangle)],\\
L^\infty_{\rm SR-FM}(\theta) &= L^\infty_{{\rm Sup}\text{-}{\rm FM}}(\theta)
 + (K-1)\,\mathbb{E} [\beta_t  \langle \delta^{(1)}_{\theta,t},\delta^{(2)}_{\theta,t}\rangle],
\end{aligned}
\]
where $t\sim \mathcal{U}[0,1]$. See Proposition \ref{prop:FM_B2_B4} for a formal statement and more complete results. 

These decompositions suggest different mechanisms for ReMixIT and Self-Remixing.  For ReMixIT, the correction term involves the teacher error $e^{(1)}$; hence, when the teacher is accurate or
when its error is weakly aligned with the sensitivity of the background-averaged
student error, this term should have limited influence on the gradient. For Self-Remixing, the correction term is of a different nature: it contains no explicit teacher-error factor.  Its gradient is
$2(K-1)\E\!\left[\beta_t(\nabla_\theta\delta_{\theta,t}^{(1)})^\top \delta_{\theta,t}^{(2)}\right]$
by exchangeability.  Thus its size is controlled by the magnitude and
cross-correlation of the systematic errors $\delta_{\theta,t}^{(k)}$ after averaging over the independently remixed background sources.  This term is expected to be small when these background-averaged errors are small or weakly correlated across independently remixed anchors.  In that regime, the dominant gradient contribution remains the pseudo-supervised FM term.

\begin{table*}[h]
\centering
\small
\setlength{\tabcolsep}{6pt} % Adjusted slightly to fit the extra columns better
\caption{Comparison of methods on MNIST and CIFAR10 datasets, evaluated over 5,000 overlapping mixtures of images (10,000 separated images) formed from each test set. The best unsupervised method is \textbf{bolded} and second best is \underline{underlined}.}
\label{tab:image_separation}
\resizebox{\textwidth}{!}{
    \begin{tabular}{@{}lcccccccccc@{}}
    \toprule
    \multirow{2}{*}{\textbf{Method}} & \multicolumn{2}{c}{\textbf{Properties}} & \multicolumn{4}{c}{\textbf{MNIST (2 Source Mixtures)}} & \multicolumn{4}{c}{\textbf{CIFAR10 (2 Source Mixtures)}} \\
    \cmidrule(lr){2-3} \cmidrule(lr){4-7} \cmidrule(ll){8-11}
    & Unsupervised\footnotetext{Does not require clean sources for training.} & Generative & PSNR $\uparrow$ & LPIPS $\downarrow$ & SSIM $\uparrow$ & FID $\downarrow$ & PSNR $\uparrow$ & LPIPS $\downarrow$ & SSIM $\uparrow$ & FID $\downarrow$ \\
    \midrule
    Supervised Regression & $\cross$ & $\cross$ & 26.02 & 0.007 & 0.963 & 25.44 & 19.34 & 0.059 & 0.726 & 22.18 \\  % xid/223586954/5 and xid/223586954/17
    Supervised Flow & $\cross$ & \checkmark & 37.44 & 0.001 & 0.992 & 19.47 & 20.38 & 0.032 & 0.777 & 9.601 \\  % xid/223586954/5 and xid/223586954/6
    BASIS \citep{jayaram2020source} & $\cross$ & \checkmark & 29.67 & 0.005 & 0.919 & 38.62 & 13.37 & 0.119 & 0.429 & 26.66 \\
    \midrule 
    MixIT & \checkmark & $\cross$ & 21.90 & 0.011 & 0.929 & 30.89 & 16.77 & 0.069 & 0.707 & 29.51 \\  % xid/223586954/6 and xid/223586954/18
    Regression (ReMixIT) & \checkmark & $\cross$ & 22.81 & 0.011 & 0.915 & 30.55 & 17.40 & 0.078 & 0.647 & 28.44 \\  % xid/222561341/1 and xid/222561341/3
    Regression (Self-Remixing) & \checkmark & $\cross$ & 23.13 & 0.010 & 0.910 & 28.14 & 17.51 & 0.082 & 0.655 & 29.12 \\  % xid/222561341/2 and xid/222561341/4
    SURF (ReMixIT) & \checkmark & \checkmark & \textbf{37.26} & \textbf{0.001} & \textbf{0.992} & \underline{19.57} & \textbf{19.73} & \textbf{0.036} & \textbf{0.756} & \underline{14.83} \\  % xid/223586954/1 and xid/223586954/2
    SURF (Self-Remixing) & \checkmark & \checkmark & \underline{37.03} & \textbf{0.001} & \underline{0.991} & \textbf{19.56} & \underline{19.49} & \underline{0.037} & \underline{0.751} & \textbf{14.77} \\  % xid/223586954/3 and xid/223586954/4
    \bottomrule
    \end{tabular}
}
\end{table*}

\begin{table}[h]
\centering
\caption{Inception / FID Score of 25,000 separations (50,000
separated images) of two overlapping mixtures of CIFAR-10 images. Best is \textbf{bolded} and second best is \underline{underlined}.}
\label{tab:image_fid}
\resizebox{.40\textwidth}{!}{
\begin{tabular}{@{}lcc@{}}
\toprule
Algorithm & Inception Score $\uparrow$ & FID $\downarrow$ \\
\midrule
Average & $7.18 \pm 0.08$ & 28.02 \\
% BASIS (Glow) & $6.10 \pm 0.07$ & 37.09 \\
BASIS  & $\textbf{8.29} \pm \textbf{0.16}$ & 22.12 \\
\midrule
MixIT & $4.56 \pm 0.05$ & 31.27 \\
SURF (ReMixIT) & $8.20 \pm 0.08$ & \underline{12.98} \\
SURF (Self-Remixing) & $\underline{8.25} \pm \underline{0.07}$ & \textbf{12.50} \\
\bottomrule
\end{tabular}
}
\end{table}

\subsection{Wake-Sleep Interpretation of ReMixIT}\label{sec:remixit_wakesleep}
We present here a probabilistic interpretation of ReMixIT as an instance of
\emph{Wake--Sleep} variational inference \citep{hinton1995wake} with an \emph{implicit prior}. Wake--Sleep training is typically described for a latent variable model
with a \emph{prior} over latents and an \emph{inference} (posterior) model. 

Specialized to source separation, consider the generative model
\begin{align}
\bar{p}_{\theta_{\mathcal{T}}}(\bar{\vx},\vm) &= \bar{p}_{\theta_{\mathcal{T}}}(\bar{\vx})\,p(\vm\mid \bar{\vx}),\\
\bar{p}_{\theta_{\mathcal{T}}}(\vm) &= \int \bar{p}_{\theta_{\mathcal{T}}}(\bar{\vx})\,p(\vm\mid \bar{\vx})\,d\bar{\vx},
\label{eq:gen_model}
\end{align}
where $\bar{p}_{\theta_{\mathcal{T}}}(\bar{\vx})=\prod_{k=1}^K \bar{p}_{\theta_\mathcal{T}}(\bar{\vx}^{(k)})$ is the implicit prior defined in \eqref{eq:teacher_source_marginal} and $p(\vm\mid \bar{\vx})=\delta(\vm-\vone^\top \bar{\vx})$. We also consider an ``inference'' model $p_{\theta}(\bar{\vx}\mid \vm)$, our student model. Because $p(\vm\mid\bar\vx)$ is singular, the KL expressions below are finite only when the inference model is also mixture consistent.

A natural objective is to fit the generative model marginal $\bar{p}_{\theta_{\mathcal{T}}}(\vm)$ to the data
distribution $p(\vm)$. Wake--Sleep alternates two KL objectives to achieve this:
\begin{align*}
\textbf{Wake:}
\argmin_{\theta_{\mathcal{T}}} \mathrm{KL}\!\left[
p(\vm)\,p_{\theta}(\bar{\vx}\mid \vm)\;\middle\|\;
\bar{p}_{\theta_{\mathcal{T}}}(\bar{\vx})\,p(\vm\mid \bar{\vx})
\right],\\
\vspace{0.5cm}
\textbf{Sleep:}
\argmin_{\theta}\mathrm{KL}\!\left[
\bar{p}_{\theta_{\mathcal{T}}}(\bar{\vx})\,p(\vm\mid \bar{\vx})\;\middle\|\;
p(\vm)\,p_{\theta}(\bar{\vx}\mid \vm)
\right].
%\label{eq:sleep_loss}
\end{align*}

\paragraph{Wake phase.}
\label{sec:wake_teacher}
The Wake loss to update $\theta_{\mathcal{T}}$ is equivalent to minimize
\begin{align}
&\mathrm{KL}\left[
p(\vm)\,p_{\theta}(\bar{\vx}\mid \vm)\;\middle\|\;
\bar{p}_{\theta_{\mathcal{T}}}(\bar{\vx})\,p(\vm\mid \bar{\vx})
\right] \label{eq:wake_cross_entropy}\\
=&\E_{(\bar{\vx},\vm)\sim p(\vm)\,p_{\theta}(\bar{\vx}\mid \vm)}
\Big[-\log \bar{p}_{\theta_{\mathcal{T}}}(\bar{\vx}) \Big]+\text{const}.
\nonumber
\end{align} 
However, $\bar{p}_{\theta_{\mathcal{T}}}(\bar{\vx})$ is not available in closed form as it is the \emph{aggregate posterior} of the inference model over data mixtures $p(\vm)$ at parameter $\theta_{_{\mathcal{T}}}$. So it is difficult to implement such a Wake phase. 
Nevertheless, we know that if the model was sufficiently expressive then, at optimality, we would get
\begin{equation*}
\bar{p}_{\theta_{\mathcal{T}}}(\bar{\vx}^{(k)})=\int p_\theta(\bar{\vx}^{(k)},\bar{\vx}^{(-k)}|\vm)p(\vm)\textrm{d}\bar{\vx}^{(-k)}\textrm{d}\vm. 
\end{equation*}

% Given \eqref{eq:teacher_source_marginal}, this suggests that one should update $\theta_{\mathcal{T}}$ to $\theta$. 
Given \eqref{eq:teacher_source_marginal}, this gives the heuristic update of moving $\theta_{\mathcal{T}}$ toward $\theta$; it should not be interpreted as an exact closed-form minimizer of the Wake KL, since the aggregate prior is implicit. However, such update can be practically unstable. Instead, we use an EMA update so that $\theta$ can move quickly but $\theta_{\mathcal{T}}$ tracks it slowly. 

\paragraph{Sleep phase.}
\label{sec:sleep_remixit_flow}
Optimizing $\theta$ reduces to maximum likelihood training of the inference model on synthetic
pairs $(\bar{\vx},\vm)$ sampled from the generative model:
\begin{align}
\mathrm{KL}&\left[
\bar{p}_{\theta_{\mathcal{T}}}(\bar{\vx})\,p(\vm| \bar{\vx}) \;\middle\|\;
p(\vm)\,p_{\theta}(\bar{\vx}|\vm)
\right] \label{eq:sleep_cross_entropy}\\
=&
\E_{(\bar{\vx},\vm)\sim \bar{p}_{\theta_{\mathcal{T}}}(\bar{\vx})p(\vm| \bar{\vx})}
\Big[-\log p_{\theta}(\bar{\vx}\mid \vm)\Big]+\text{const}.
\nonumber
\end{align}

Thus the Sleep phase trains the separator/inference model on synthetic mixtures generated from the prior
$\bar{p}_{\theta_{\mathcal{T}}}(\bar{\vx},\vm)$. To sample from $\bar{p}_{\theta_{\mathcal{T}}}(\bar{\vx})$, it follows from \eqref{eq:teacher_source_marginal} that we need to sample $K$ mixture samples from $p(\vm)$ and apply the teacher to each of them. We then define $\bar{\vx}^{(k)}$ as the $k^{\text{th}}$ teacher source estimate associated to the $k^{\text{th}}$ mixture and then we sum them to obtain a synthetic mixture sample $\vm$. This is \emph{identical} to ReMixIT in the population case; see \cref{sec:idealized_self_remixing}. 

If $p_\theta(\bar{\vx}\mid \vm)$ admitted tractable likelihoods, the sleep update would minimize the cross-entropy
in~\eqref{eq:sleep_cross_entropy}. In our case, $p_\theta$ is induced by FM and we minimize a standard FM surrogate.
If we had a model sufficiently expressive, then at optimality we would get $p_\theta(\bar{\vx}|\vm)=\bar{p}_{\theta_{\mathcal{T}}}(\bar{\vx}|\vm)$. Note that this does not imply $\theta=\theta_{\mathcal{T}}$ in the general case.

\begin{figure*}[htp]
    \centering
    \includegraphics[width=1.0\textwidth]{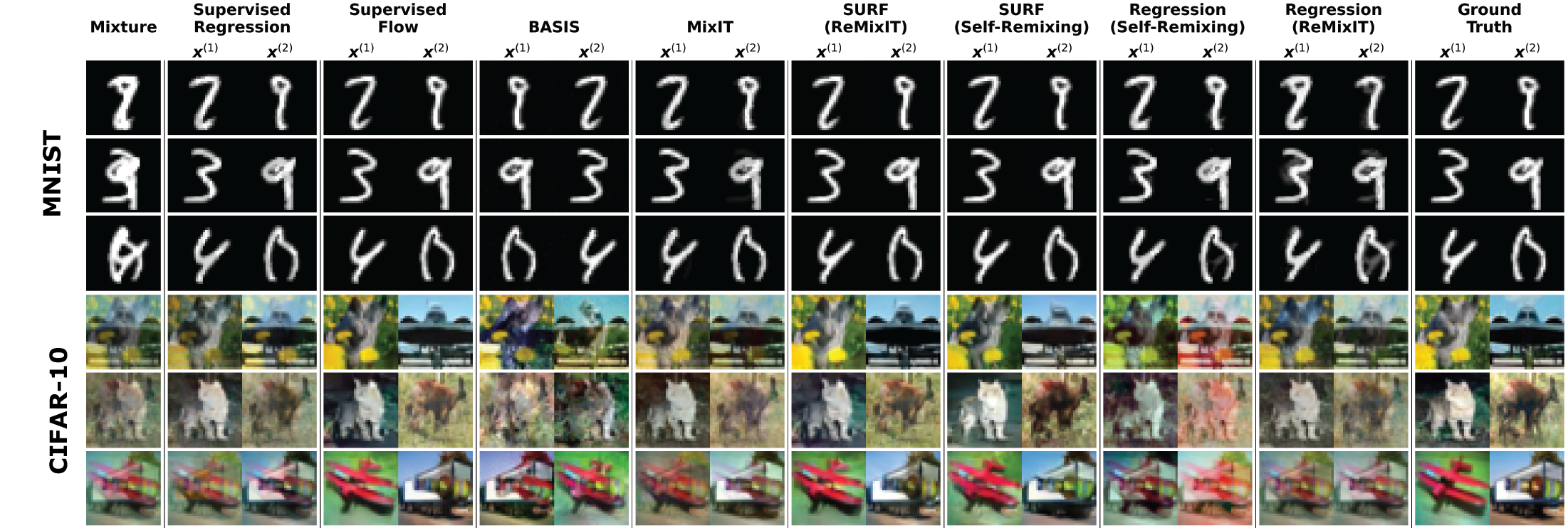}
    \caption{Qualitative examples for image separation on the MNIST (above) CIFAR10 (below) datasets, comparing Supervised Regression, Supervised Flow, BASIS, MixIT, and SURF algorithms. More results in Appendix \ref{sec:additional_results}.}
    \label{fig:qualitative_results}
\end{figure*}

% \begin{table}[h]
% \centering
% \caption{Inception / FID Score of 25,000 separations (50,000
% separated images) of two overlapping mixtures of CIFAR-10 images.}
% \resizebox{.35\textwidth}{!}{
% \begin{tabular}{@{}lcc@{}}
% \toprule
% Algorithm & Inception Score $\uparrow$ & FID $\downarrow$ \\
% \midrule
% Average & $7.18 \pm 0.08$ & 28.02 \\
% % BASIS (Glow) & $6.10 \pm 0.07$ & 37.09 \\
% BASIS  & $8.29 \pm 0.16$ & 22.12 \\
% \midrule
% MixIT & $4.56 \pm 0.05$ & 31.27 \\
% SURF (ReMixIT) & $8.20 \pm 0.08$ & 12.98 \\
% SURF (Self-Remixing) & $8.25 \pm 0.07$ & \textbf{12.50} \\
% \bottomrule
% \end{tabular}
% }
% \end{table}

\section{Empirical Results}

\begin{table*}[h]
\centering
\caption{Evaluation of methods trained on Audioset~\citep{gemmeke_audio_2017}. We evaluate on noisy speech samples (LibriSpeech + FUSS) and general multi-source separation (FUSS). The best unsupervised method is \textbf{bolded} and second best is \underline{underlined}.
The FUSS metrics are based on the SI-SDR for single (1S) or multiple (2Si, 3Si, 4Si) sources. See our \href{https://google.github.io/df-conformer/surf/}{demo page} for listening examples.}
\label{tab:audioset}
\resizebox{\textwidth}{!}{
\begin{tabular}{@{}lrrrrrrrrrrr@{}}
\toprule
\multirow{2}{*}{\textbf{Method}} & \multicolumn{7}{c}{\textbf{FUSS (1 - 4 Source Mixtures)}} & \multicolumn{4}{c}{\textbf{LibriSpeech + FUSS (2 Source Mixtures)}} \\
\cmidrule(lr){2-8} \cmidrule(l){9-12}
 & 1S $\uparrow$ & 2Si $\uparrow$ & 3Si $\uparrow$ & 4Si $\uparrow$ & Under $\downarrow$ & Equal $\uparrow$ & Over $\downarrow$ & SI-SDR $\uparrow$ & ESTOI $\uparrow$ & PESQ $\uparrow$ & DNSMOS $\uparrow$ \\
\midrule
Supervised Flow & 38.79 & 13.58 & 12.46 & 10.24 & 0.031 & 0.580 & 0.389 & 18.21 & 0.904 & 3.29 & 2.58 \\
\midrule
MixIT & 10.99 & 9.20 & \textbf{11.87} & \textbf{9.75} & \textbf{0.005} & 0.339 & 0.656 & 14.18 & 0.771 & 2.79 & 2.53 \\
ReMixIT & 19.83 & 9.91 & 8.47 & 8.62 & \underline{0.014} & 0.387 & 0.599 & 14.29 & 0.793 & 2.85 & 2.51 \\
Self-Remixing & 19.75 & 9.88 & 8.58 & \underline{9.11} & 0.035 & 0.244 & 0.721 & 14.81 & 0.784 & 2.81 & \underline{2.56} \\
% SURF (ReMixIT) & \textbf{26.01} & \textbf{10.99} & \underline{10.31} & 7.95 & 0.081 & \textbf{0.542} & \textbf{0.377} & \textbf{14.90} & \textbf{0.772} & \underline{2.67} & \textbf{2.36} \\
% SURF (Self-Remixing) & \underline{25.92} & \underline{10.92} & 10.29 & 7.96 & 0.085 & \underline{0.533} & \underline{0.382} & \underline{14.22} & 0.766 & \textbf{2.70} & 2.33 \\
SURF (ReMixIT) & \textbf{32.67} & \underline{11.04} & 10.38 & 7.52 & 0.166 & \textbf{0.574} & \textbf{0.260} & \underline{14.98} & \textbf{0.840} & \underline{2.86} & 2.55 \\
SURF (Self-Remixing) & \underline{29.10} & \textbf{11.36} & \underline{11.63} & 8.80 & 0.129 & \underline{0.559} & \underline{0.312} & \textbf{15.23} & \textbf{0.840} & \textbf{2.93} & \textbf{2.57} \\
% /cns/td-d/home/hanasu-miipher/exp/rs=10.4/robinsch/icml_fuss_supervised/xid_240875588-251.step_250000-seed_984104-discretization_fwd.custom.5-solver_euler
% Final statistics: SI-SNR for count(s) [1] = 38.79 +/- 11.41 dB SI-SNRi for count(s) [2] = 13.58 +/- 10.22 dB SI-SNRi for count(s) [3] = 12.46 +/- 9.46 dB SI-SNRi for count(s) [4] = 10.24 +/- 9.93 dB SI-SNRi for count(s) [2, 3, 4] = 11.75 +/- 9.95 dB Under separation: 0.031 Equal separation: 0.580 Over separation: 0.389
\bottomrule
\end{tabular}
}
\end{table*}

\begin{table}[h]
\centering
\caption{Evaluation of methods on the Libri2Mix evaluation dataset. The best unsupervised method is \textbf{bolded} and second best is \underline{underlined}.}
\label{tab:libri2mix_comparison}
\resizebox{.49\textwidth}{!}{
\begin{tabular}{@{}lrrrr@{}}
\toprule
\multirow{2}{*}{\textbf{Method}} & \multicolumn{4}{c}{\textbf{Libri2Mix (2 Source Mixtures)}} \\
\cmidrule(l){2-5}
 & SI-SDR $\uparrow$ & ESTOI $\uparrow$ & PESQ $\uparrow$ & DNSMOS $\uparrow$ \\
\midrule
Supervised Flow & 17.89 & 0.908 & 3.45 & 2.94 \\
\midrule
MixIT & 12.39 & 0.753 & 2.52 & 2.36 \\
ReMixIT & 13.22 & 0.827 & 2.72 & 2.76 \\
Self-Remixing & 13.60 & 0.830 & 2.71 & 2.78 \\
SURF (ReMixIT) & \textbf{16.54} & \textbf{0.893} & \textbf{3.30} & \textbf{2.94} \\
SURF (Self-Remixing) & \underline{16.28} & \underline{0.889} & \underline{3.28} & \underline{2.93} \\
\bottomrule
\end{tabular}
}
\end{table}

We evaluate here \textbf{SURF} (Algorithm \ref{alg:surf}) with both ReMixIT and Self-Remixing on four datasets across image (MNIST \citep{mnist}, CIFAR10 \citep{cifar10}) and audio domains (Libr2iMix \citep{cosentino_librimix_2020} and AudioSet~\citep{gemmeke_audio_2017}). Libr2iMix consists of single-source data samples that are formed into synthetic mixtures while AudioSet consists of 2M true audio mixtures from YouTube videos, with no ground truth available.

The teacher model is initially trained with MixIT~\citep{wisdom2020unsupervised}.
The student model is then trained with a frozen teacher model. Finally, the teacher is dynamically updated by EMA with decay constant $\alpha < 1$; we additionally utilize a hybrid teacher sampler during this phase for training stability (see Appendix~\ref{sec:further_experimental_details}).
For all datasets, we compare against the initial teacher MixIT model, non-generative ReMixIT \citep{tzinis2022remixit} and Self-Remixing \citep{saijo2023self} baselines. Where supervised data is available, we also compare against supervised regression and flow matching separation models. 

Quantitative evaluation is challenging because mixture decomposition is often non-unique.
While standard distance metrics like PSNR and SSIM~\cite{wang2004image} (images) or SI-SDR~\cite{le_roux_sdrhalf-baked_2019}, PESQ~\cite{rix_perceptual_2001}, and ESTOI~\cite{jensen_algorithm_2016} (audio) provide reconstruction-based fidelity measures, they are limited in capturing the quality of the separated samples.
For a more holistic perspective, we follow \citet{jayaram2020source} and \citet{scheibler2025source} and complement these with perceptual metrics: LPIPS~\cite{zhang2018lpips}, Inception Score (IS), and Fr\'echet Inception Distance (FID)~\cite{heusel2017fid} for images, and DNSMOS~\cite{reddy_dnsmos_2022} for audio.
These metrics assess distributional resemblance, offering insight into generation quality beyond strict deviation from the ground truth.
To evaluate universal (non-speech) source separation, we use the Free Universal Sound Separation (FUSS) metrics~\cite{wisdom2021s}.
They are based on the SI-SDR for one (1S) or multiple sources (MSi2--4).
See Appendix~\ref{sec:further_experimental_details} for more details about the audio metrics.

% For both modalities, quantitative evaluation is particularly challenging, as the ground truth components of a mixture are not uniquely identifiable, as it can often be plausibly decomposed in multiple ways.
% Consequently, standard distance-based metrics such as peak signal-to-noise ratio (PSNR) and structural similarity index measure (SSIM)~\cite{wang2004image}, for images, or scale-invariant signal-to-noise ratio (SI-SNR)~\cite{le_roux_sdrhalf-baked_2019}, perceptual evaluation of speech quality (PESQ)~\cite{rix_perceptual_2001}, and extended short-time objective intelligibility (ESTOI)~\cite{jensen_algorithm_2016}, for audio, are insufficient for assessing the output quality.
% To address this, we follow the methodology in \citep{jayaram2020source} and \citep{scheibler2025source} and employ perceptual metrics: the Learned Perceptual Image Patch Similarity (LPIPS)~\cite{zhang2018lpips} and the Fr\'{e}chet Inception Distance (FID)~\cite{heusel2017fid} for images,  and deep noise suppression mean opinion score (DNSMOS)~\cite{reddy_dnsmos_2022}.
% These metrics measure to which extent the distribution of separated components resemble that of the true data, rather than penalizing deviation from the ground truth.
% These metrics measure the extent to which the separated components resemble samples from the true data distribution, rather than strictly penalizing deviation from the ground truth source.

\subsection{Image Separation}
We evaluate image separation performance using the MNIST and CIFAR-10 datasets, following the experimental protocol established to evaluate BASIS \citep{jayaram2020source}. For both datasets, training and evaluation mixtures are constructed by averaging pairs of randomly selected images from the respective test sets. During MixIT training in the first phase, single source mixtures are also included, following \citep{wisdom2020unsupervised}. We evaluate on 5,000 overlapping mixtures, resulting in 10,000 separated images for analysis. All models are trained with the noise-conditional score network (NCSN) architecture, with additional training details provided in Appendix \ref{sec:further_experimental_details}.

As clean, single-source data is available, we also compare against supervised regression and flow matching separation models as reference benchmarks. The supervised regression models are trained with a log-SNR PIT loss \eqref{eq:PITloss} with $\ell(\vx, \hat{\vx}) = 10\log_{10} (||\vx - \hat{\vx}||^2)$. The supervised flow matching model is trained via the ReMixIT variant of Algorithm \ref{alg:surf}, where the teacher model samples $\tildevX$ are replaced by ground truth sources. In Table \ref{tab:image_separation}, we find that both variants of \textbf{SURF} excel in separation quality, both in terms of distance-based metrics (PSNR and SSIM) and in terms of perceptual quality metrics (LPIPS and FID), and are only surpassed by the supervised FM baseline.

We further evaluate the learned model's approximation quality of an \textit{aggregate posterior} as described in Section \ref{sec:remixit_wakesleep}. We conduct a dedicated comparison to sample quality with Inception Score (IS) and FID, reference-free metrics of overall generated sample quality, on 50,000 images (25,000 image pairs). Table \ref{tab:image_fid} shows that SURF performs comparably to BASIS, a supervised separation algorithm that leverages a diffusion prior trained purely on clean sources in terms of IS, and outperforms it in terms of FID.

\subsection{Audio Separation}
We conduct speech separation experiments on the Libri2Mix dataset, consisting of 10 second utterances from male and female speakers drawn from LibriSpeech recorded at 16kHz sample rate. We train on the \textit{train-360-clean} split of Libri2Mix, which contains 364 hours of mixtures where source utterances are drawn without replacement, and evaluate on 3,000 mixtures from the test split.

In speech separation, supervised data is also available, so we compare against a supervised flow matching model.
We find that SURF variants again compare favorably against regression-based unsupervised baselines across intrusive metrics (SI-SDR, ESTOI, PESQ) as well as perceptual metrics (DNSMOS), and approaches the separation quality of the supervised flow matching model (Table \ref{tab:libri2mix_comparison}).

Next, we train a model for universal sound separation on AudioSet \citep{gemmeke_audio_2017}, approximately \SI{731}{\hour} of YouTube videos containing speech, music, and general human, animal, and environmental sounds.
Ground truth isolated sources are not available for AudioSet.

We conduct two sets of evaluations on these models (Table \ref{tab:audioset}). First, we gauge general sound separation quality on the FUSS dataset~\cite{wisdom2021s}.
% The MixIT teacher model is trained on mixtures of mixtures drawn i.i.d. from the training set. The regression-based remixing and SURF models are then trained on single mixtures.
We observe a contrast between the MixIT teacher and remixing based approaches.
The former tends to over-separate, while the latter under-separate.
We hypothesize this is due to MixIT being trained on mixtures-of-mixtures with double the average number of sources in the input.
SURF largely outperforms the other methods on 1 and 2 sources, but does slightly worse on 3 and 4 sources.
In addition, it significantly improves the Equal metric, suggesting the proposed method is better at estimating the number of sources in the mixture on average.

We then evaluate the speech enhancement capability of our universal model.
%We finally apply our AudioSet-trained unsuperivsed model to a speech enhancement task with two source mixtures.
We mix samples from the LibriSpeech \cite{panayotov2015librispeech} \textit{test-clean} set with the \textit{background} source in each FUSS mixture \citep{wisdom2021s}, at a randomly chosen SNR between (-5, 15) dB, then re-normalizing to (-1, 1). Again, we find that SURF generalizes well to this task and compares favorably to existing approaches in quantitative metrics, further suggesting that SURF learns an implicit prior over sources solely from in-the-wild mixtures, enabling a variety of downstream tasks.

\section{Conclusion}
We introduced \textbf{SURF}, a novel generative approach that learns to separate sources directly from mixtures by combining Flow Matching with self-supervised remixing methods.
We provided insights into the objectives optimized by this approach and proposed a probabilistic interpretation through the lens of the Wake-Sleep algorithm.
Experiments on image and audio datasets demonstrate that \textbf{SURF} outperforms unsupervised baselines across most scenarios, delivering superior perceptual quality and reducing artifacts compared to regression models. 

While complex mixtures with many sources remain a challenging problem, these findings establish \textbf{SURF} as an effective solution for source separation in real-world environments where clean training data is inaccessible.

% We introduced \textbf{SURF}, a novel generative approach that learns to separate sources directly from mixtures by combining Flow Matching with self-supervised remixing methods. We provided insights into the objectives optimized by this approach and proposed a probabilistic interpretation through the lens of the Wake-Sleep algorithm. Experiments on image and audio datasets demonstrate that \textbf{SURF} outperforms unsupervised baselines, delivering superior perceptual quality and reducing artifacts compared to regression models. 
%These findings establish this new method as a robust solution for source separation in real-world scenarios where clean training data is inaccessible.

\newpage
\section*{Impact Statement}
The presented algorithm introduces a generative framework for large scale unsupervised separation. Source separation itself can be susceptible to adversarial exploitation (such as speech de-anonymization). Moreover, the generative approach proposed here requires careful consideration of the potential for hallucination, which may amplify data biases in downstream tasks and complicate the provenance of data processed by such a model. These issues are important to address in extensions of this work.

\bibliography{refs.bib}
\bibliographystyle{icml2026}

%%%%%%%%%%%%%%%%%%%%%%%%%%%%%%%%%%%%%%%%%%%%%%%%%%%%%%%%%%%%%%%%%%%%%%%%%%%%%%%
%%%%%%%%%%%%%%%%%%%%%%%%%%%%%%%%%%%%%%%%%%%%%%%%%%%%%%%%%%%%%%%%%%%%%%%%%%%%%%%
% APPENDIX
%%%%%%%%%%%%%%%%%%%%%%%%%%%%%%%%%%%%%%%%%%%%%%%%%%%%%%%%%%%%%%%%%%%%%%%%%%%%%%%
%%%%%%%%%%%%%%%%%%%%%%%%%%%%%%%%%%%%%%%%%%%%%%%%%%%%%%%%%%%%%%%%%%%%%%%%%%%%%%%
\newpage
\appendix
\onecolumn

\section{Organization of the Supplementary Material}
In Appendix \ref{app:Notation}, we summarize the notation in the paper. In Appendix \ref{app:Proofs}, we first provide results establishing the validity of the supervised flow matching algorithm proposed in \cite{scheibler2025source} in Appendix \ref{app:proofvalidityFLOSS}. We then establish results for ReMixIT and Self-Remixing in \ref{app:analyspopulationlosses}, first in the regression case in Appendix \ref{app:Regressioncase} and then extend those results to the generative case in Appendix \ref{app:Generativecase}. Finally, we establish an alternative derivation of the Self-Remixing loss for flow matching in Appendix \ref{app:AlternativeSelfRemixingloss}. Further experimental details are given in Appendix \ref{sec:further_experimental_details}. Finally, more experimental results can be found in Appendix \ref{sec:additional_results}.

\section{Notation}\label{app:Notation}
We summarize the notation used in this text in Table \ref{tab:notation}. Our notation distinguishes between the global batch level view and the local mixture level view: bold uppercase variables (e.g., $\vM, \vX$) represent the full batch of dimension $BK \times d$ (or $B \times d$ for mixtures), while bold lowercase variables (e.g., $\vm_b, \vx_b$) refer to a specific $K$-source block or a single mixture at batch index $b$. In the population analysis (Section \ref{sec:analysis}) where $B \to \infty$, we replace the batch index with the subscript $k$ to denote variables anchored to a specific source type (e.g., $\tilde{\mathbf{x}}_k, \tilde{\mathbf{m}}_k$). Noisy variables (e.g., $\vX_t$ or $\vx_{b, t}$) are denoted by an additional index $t$. Finally, a parenthetical superscript refers to a single source (e.g. $\vx^{(i)}_{t,b}$ is the $i$-th source of $\vx_{t,b}$).

We use diacritics to denote the stage of the SURF pipeline: $\barvX$ represents estimates generated by the fixed Teacher, $\tildevX$ represents the randomly permuted ``remixed'' sources used as synthetic targets, and $\hatvX$ represents the predictions of the student FM model. The time state $t \in [0,1]$ of the flow matching process is denoted in the subscript (e.g., $\tildevX_t$). Finally, we explicitly distinguish between the random remixing permutation $\mPi$ and the optimized alignment permutations ($\sigma_b, \mUpsilon$) used for remixing and PIT-based alignment during flow model training (Section \ref{subsec:ReMiXIT_flow}), respectively.

\begin{table}[h]
\centering
\caption{Summary of Notation}
\label{tab:notation}
\resizebox{\textwidth}{!}{%
\begin{tabular}{l l l}
\toprule
\textbf{Symbol} & \textbf{Dimensions} & \textbf{Description} \\
\midrule
\multicolumn{3}{l}{\textit{Indices \& Constants}} \\
$B$ & Scalar & Batch size (number of mixtures) \\
$K$ & Scalar & Number of sources per mixture \\
$d$ & Scalar & Signal dimension (e.g., time samples or flattened spectrogram) \\
$b, k$ & Indices & Mixture index $b \in \{1, \dots, B\}$, Source index $k \in \{1, \dots, K\}$ \\
$t$ & Scalar & Flow matching time step $t \in [0, 1]$ \\
$\vone$ & $K\times 1$ & All-one column vector over sources \\
$\vP^\perp$ & $K\times K$ & Projection onto the mixture-preserving orthogonal complement, $\mI_K-\vone\vone^\top/K$ \\
$e_1$ & $K\times 1$ & First standard basis vector over source rows \\

\midrule
\multicolumn{3}{l}{\textit{Original Data \& Teacher}} \\
$\vm_b$ & $1 \times d$ & Single observed mixture $b$ \\
$\vM$ & $B \times d$ & Batch of observed mixtures \\
$\bar{\vx}_b$ & $K \times d$ & Teacher's source estimates for mixture $b$ \\
$\barvX$ & $BK \times d$ & Batch of Teacher estimates (stacked $\bar{\vx}_b$) \\
\midrule
\multicolumn{3}{l}{\textit{Unsupervised Remixing (Finite Batch)}} \\
$\Pi$ & $BK \times BK$ & Global permutation matrix that randomly shuffles sources across batch \\
$\tildevX$ & $BK \times d$ & Batch of remixed sources (synthetic targets), $\tildevX = \Pi \barvX$ \\
$\tilde{\vm}_b$ & $1 \times d$ & Synthetic mixture $b$ created by summing rows of $\tilde{\vx}_b$ \\
$\tilde{\vM}$ & $B \times d$ & Batch of synthetic mixtures \\
\midrule
\multicolumn{3}{l}{\textit{Population Analysis (Limit $B \to \infty$)}} \\
$\tilde{\vx}_k$ & $K \times d$ & Pseudo-source matrix anchored to source $k$ \\
$\bar\vx_{\rm bg}^{(k,i)}$ & $1 \times d$ & Background teacher-estimated source drawn from $\bar{p}_{\theta_\mathcal{T}}$ \\
$\tilde{\vm}_k$ & $1 \times d$ & Synthetic mixture for anchor $k$ ($\vone^\top \tilde{\vx}_k$) \\
$\tilde{\vx}_{k,t}$ & $K \times d$ & Noisy flow state of the anchored pseudo-source matrix at time $t$ \\
\midrule
\multicolumn{3}{l}{\textit{Permutation Invariant Alignment}} \\
$\sigma_b$ & $K \times K$ & Optimal PIT permutation matrix for mixture $b$ \\
$\sigma_k$ & $K \times K$ & Optimal PIT permutation matrix for anchored pseudo-example $k$ \\
$\mUpsilon$ & $BK \times BK$ & Block-diagonal batch alignment, $\text{blkdiag}(\sigma_1, \dots, \sigma_B)$ \\
\midrule
\multicolumn{3}{l}{\textit{Student Flow Matching}} \\
$\tildevX_t$ & $BK \times d$ & Noisy flow state of the full batch at time $t$ \\
$\tilde{\vx}_{b,t}$ & $K \times d$ & Noisy flow state of $b$-th batch of sources at time $t$ \\
$\vv_\theta(\tilde\vx_{b,t},t,\tilde\vm_b)$ / $\vv_\theta(\tildevX_t,t,\tildevM)$ & $K\times d$ / $BK\times d$ & Local / batch-level student velocity vector field \\
$\hatvX$ & $BK \times d$ & Student's clean source estimates derived from flow \\
$\vR$ & $BK \times d$ & Flow matching velocities minus targets \\
$\vZ$ & $BK\times d$ & Batch Gaussian noise with $\operatorname{vec}(\vZ)\sim\mathcal N(0,\mI_{BKd})$ \\
\bottomrule
\end{tabular}%
}
\end{table}

\section{Proofs}\label{app:Proofs}

\subsection{Validity of Variants of Supervised Flow Matching}\label{app:proofvalidityFLOSS}
In \cite{scheibler2025source}, a variation of the flow matching loss is proposed but its validity has not been established rigorously. We close here this gap. They consider the following setup. Given $\vx_0 \sim p_0(\cdot|\vm), \vx_1 \sim p_1(\cdot|\vm)$, ones selects the $\vx_0,\vx_1,\theta$-dependent permutation
\begin{equation}
\sigma(\vx_0,\vx_1,\theta)=\argmin_{\sigma \in S_K} ||\vv_\theta(\vx_0,0,\vm)-(\sigma \vx_1-\vx_0)||^2.
\end{equation}
We then consider a flow matching loss 
\begin{equation}\label{eq:modified-conditional_flow_for_separation}
\mathcal{L}_{\text{MFM}}(\theta)=\mathbb{E}\big[ \left\| \vv_\theta(\vx^{\sigma(\vx_0,\vx_1,\theta)}_t, t, \vm) - ({\sigma(\vx_0,\vx_1,\theta)} \vx_1 - \vx_0) \right\|^2 \big]
\end{equation}
where $\vx^{\sigma}_t:=\vx^{ \sigma(\vx_0,\vx_1,\theta)}_t=(1-t)\vx_0+t \sigma(\vx_0,\vx_1,\theta) \vx_1$. A stop-gradient is used on  $\sigma(\vx_0,\vx_1,\theta)$.

The following proposition shows that the joint distribution of $(\vx^{\sigma}_0,\vx^{\sigma}_1)$ is such that $\vx^{\sigma}_0=\vx_0 \sim p_0(\cdot|\vm)$ (which is obvious) but crucially, although the permutation depends on both the data and the model parameters, the symmetry assumptions imply that $\vx^{\sigma}_1 \sim p_1(\cdot|\vm)$. Hence, this shows that $\vx^{\sigma}_t$ defines a valid flow matching path between $p_0(\cdot|\vm)$ and $p_1(\cdot|\vm)$.

For the sake of simplicity, we simplify notation and remove the conditioning by $\vm$. 
\begin{proposition}\label{Prop:validityFloss}
Let $\vx_0 \sim p_0$ and $\vx_1 \sim p_1$ be independent random variables taking
values in $\mathcal X := (\mathbb R^d)^K$, equipped with the norm $\|\vx\|^2 := \sum_{k=1}^K \|\vx^{(k)}\|_{\mathbb R^d}^2 $. The symmetric group $S_K$ acts on $\mathcal X$ by permutation of components:
\[
(\sigma \vx)^{(k)} := \vx^{(\sigma^{-1}(k))}, \qquad \sigma \in S_K .
\]

Assume:
\begin{enumerate}
\item $p_0$ is permutation-invariant: $\sigma \vx_0 \stackrel{d}{=} \vx_0$ for all
$\sigma \in S_K$;
\item $p_1$ is permutation-invariant: $\sigma \vx_1 \stackrel{d}{=} \vx_1$ for all
$\sigma \in S_K$;
\item the vector field $\vv_\theta : \mathcal X \to \mathcal X$ is
permutation-equivariant at $t=0$:
\[
\vv_\theta(\sigma \vx,0) = \sigma \vv_\theta(\vx,0), \qquad \forall \sigma \in S_K .
\]
\end{enumerate}

Define for $\vx_0,\vx_1 \in \mathcal X$ and $\sigma \in S_K$,
\[
\mathcal L(\sigma;\vx_0,\vx_1)
:= \bigl\| \vv_\theta(\vx_0,0) - (\sigma \vx_1 - \vx_0) \bigr\|^2 ,
\]
and let
\[
\sigma(\vx_0,\vx_1,\theta)
\in \arg\min_{\sigma \in S_K} \mathcal L(\sigma;\vx_0,\vx_1),
\]
assumed unique $(p_0 \otimes p_1)$-a.s.
Define the permuted endpoint
\[
\vx_1^\sigma := \sigma(\vx_0,\vx_1,\theta)\, \vx_1 .
\]
Then
\[
\vx_1^\sigma \sim p_1 .
\]
\end{proposition}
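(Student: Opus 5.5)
The plan is to show that $\vx_1^\sigma$ has the same law as $\vx_1$ by computing $\mathbb E[g(\vx_1^\sigma)]$ for an arbitrary bounded measurable test function $g$. The argument has two ingredients. The first is an equivariance property of the selected permutation. The second is a symmetrization that averages over a uniformly random relabelling $\tau\in S_K$.

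\textbf{Step 1 (equivariance of the argmin).} We claim that for every $\tau\in S_K$,
\[
\sigma(\tau\vx_0,\vx_1,\theta)=\tau\,\sigma(\vx_0,\vx_1,\theta).
\]
To see this, use assumption 3 together with the fact that $\tau$ acts isometrically on $(\mathcal X,\|\cdot\|)$. Then
\[
\mathcal L(\tau\sigma;\tau\vx_0,\vx_1)=\|\tau\vv_\theta(\vx_0,0)-\tau\sigma\vx_1+\tau\vx_0\|^2=\mathcal L(\sigma;\vx_0,\vx_1).
\]
Hence $\sigma\mapsto\tau\sigma$ is a bijection of $S_K$ that carries the minimizers at $(\vx_0,\vx_1)$ to the minimizers at $(\tau\vx_0,\vx_1)$. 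Under uniqueness the unique minimizers correspond. Assumption 1 gives $(\tau\vx_0,\vx_1)\stackrel{d}{=}(\vx_0,\vx_1)$, so uniqueness holds almost surely at the transformed point as well. Consequently
\[
\sigma(\tau\vx_0,\vx_1,\theta)\,\vx_1=\tau\,\vx_1^\sigma.
\]

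\textbf{Step 2 (symmetrize).} Let $\tau$ be uniform on $S_K$, independent of $(\vx_0,\vx_1)$. By independence and assumption 1, $(\tau\vx_0,\vx_1)\stackrel{d}{=}(\vx_0,\vx_1)$. Combining this with Step 1,
\[
\mathbb E\big[g(\vx_1^\sigma)\big]
=\mathbb E\big[g(\sigma(\tau\vx_0,\vx_1,\theta)\vx_1)\big]
=\mathbb E\big[g(\tau\vx_1^\sigma)\big]
=\mathbb E\Big[\tfrac1{K!}\textstyle\sum_{\rho\in S_K} g(\rho\,\vx_1^\sigma)\Big].
\]
The last equality uses that $\tau$ is uniform and independent of $\vx_1^\sigma$. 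Now $\vx_1^\sigma=\sigma\vx_1$ with $\sigma\in S_K$, so $\{\rho\sigma\vx_1:\rho\in S_K\}=\{\rho\vx_1:\rho\in S_K\}$ as multisets, because $\rho\mapsto\rho\sigma$ is a bijection. The integrand is therefore $\tfrac1{K!}\sum_\rho g(\rho\vx_1)$. Its expectation equals $\mathbb E[g(\vx_1)]$ by assumption 2. This proves $\vx_1^\sigma\sim p_1$.

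The main obstacle is Step 1: verifying carefully that the group action interacts correctly with the composition $\tau\sigma$, meaning $(\tau\sigma)\vx=\tau(\sigma\vx)$ under the stated convention. I would check this directly from the definition $(\sigma\vx)^{(k)}=\vx^{(\sigma^{-1}(k))}$. A second point is to handle the null set where the argmin is not unique, which is resolved by the almost-sure invariance noted in Step 1.

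Interestingly, assumption 2 is used only at the very end, and the random variable $\sigma$ itself need not be independent of anything. The conditioning on $\vm$ is harmless because it is fixed throughout.
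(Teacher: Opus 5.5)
Your proof is correct and is essentially the paper's argument: the same equivariance identity $\sigma(\tau\vx_0,\vx_1,\theta)=\tau\,\sigma(\vx_0,\vx_1,\theta)$, then invariance of $p_0$ to randomize the selected permutation, then invariance of $p_1$ to finish. The paper packages the middle step as the explicit statement $\mathbb P(\sigma(\vx_0,\vx_1,\theta)=s\mid\vx_1)=1/K!$, via $A_s(\vx_1)=s\,A_{\mathrm{id}}(\vx_1)$; this shows the selected permutation is in fact uniform and independent of $\vx_1$, so your closing remark that $\sigma$ ``need not be independent of anything'' undersells what holds, whereas your auxiliary uniform $\tau$ reaches the same orbit average without computing that law.
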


\begin{proof}
We split the argument into three steps.

\paragraph{Step 1: Equivariance of the matching loss.}
Fix $\tau \in S_K$. For any $\sigma \in S_K$ and $\vx_0,\vx_1 \in \mathcal X$,
\[
\begin{aligned}
\mathcal L(\sigma;\tau \vx_0,\vx_1)
&= \bigl\| \vv_\theta(\tau \vx_0,0) - (\sigma \vx_1 - \tau \vx_0) \bigr\|^2 \\
&= \bigl\| \tau \vv_\theta(\vx_0,0) - (\sigma \vx_1 - \tau \vx_0) \bigr\|^2 \\
&= \bigl\| \tau \bigl[\vv_\theta(\vx_0,0) -
(\tau^{-1}\sigma \vx_1 - \vx_0)\bigr] \bigr\|^2 \\
&= \bigl\| \vv_\theta(\vx_0,0) -
((\tau^{-1}\sigma)\vx_1 - \vx_0)\bigr\|^2 \\
&= \mathcal L(\tau^{-1}\sigma;\vx_0,\vx_1),
\end{aligned}
\]
where we used permutation equivariance of $\vv_\theta$ and invariance of the norm
under permutations.
Consequently,
\begin{equation}\label{eq:sigmax0}
\sigma(\tau \vx_0,\vx_1,\theta)
= \tau\, \sigma(\vx_0,\vx_1,\theta).
\end{equation}

\paragraph{Step 2: Conditional uniformity of $\sigma(\vx_0,\vx_1,\theta)\mid \vx_1$.}
Fix $\vx_1 \in \mathcal X$ and define
\[
A_\sigma(\vx_1)
:= \{\vx_0 \in \mathcal X : \sigma(\vx_0,\vx_1,\theta)=\sigma\}.
\]
From \eqref{eq:sigmax0}, we have $A_\sigma(\vx_1)=\sigma A_{\mathrm{id}}(\vx_1)$.
By independence of $\vx_0$ and $\vx_1$ and permutation invariance of $p_0$,
\[
\mathbb P\bigl(\sigma(\vx_0,\vx_1,\theta)=\sigma \mid \vx_1\bigr)
= \mathbb P(\vx_0 \in A_\sigma(\vx_1))
= \mathbb P(\vx_0 \in A_{\mathrm{id}}(\vx_1)).
\]
Since $\{A_\sigma(\vx_1)\}_{\sigma \in S_K}$ forms a partition of $\mathcal X$,
\[
\mathbb P\bigl(\sigma(\vx_0,\vx_1,\theta)=\sigma \mid \vx_1\bigr)
= \frac{1}{K!},
\qquad \forall \sigma \in S_K .
\tag{A.2}
\]

\paragraph{Step 3: Marginal law of $\vx_1^\sigma$.}
Let $B \subset \mathcal X$ be measurable. Conditioning on $\vx_1$,
\[
\begin{aligned}
\mathbb P(\vx_1^\sigma \in B)
&= \mathbb E\!\left[
\mathbb P\bigl(\sigma(\vx_0,\vx_1,\theta)\vx_1 \in B \mid \vx_1\bigr)
\right] \\
&= \mathbb E\!\left[
\frac{1}{K!}\sum_{\sigma \in S_K} \mathbf 1\{\sigma \vx_1 \in B\}
\right].
\end{aligned}
\]
By permutation invariance of $p_1$, $\sigma \vx_1 \stackrel{d}{=} \vx_1$ for all
$\sigma$, hence
\[
\mathbb P(\vx_1^\sigma \in B)
= \mathbb P(\vx_1 \in B).
\]
This holds for all measurable $B$, so $\vx_1^\sigma \sim p_1$.
\end{proof}

\subsection{Analysis of Population Losses}\label{app:analyspopulationlosses}
Here we provide insights into the population losses using the sample-wise
alignment convention from Section~\ref{sec:idealized_self_remixing}. The
teacher estimates \(\bar{\vx}^{(k)}\) are understood to be the raw teacher
outputs relabelled by a PIT matching to the latent source tuple
\(\vx\). This convention is used only to define source-wise errors such as
\(e^{(k)}=\bar{\vx}^{(k)}-\vx^{(k)}\); it is not an algorithmic step and does
not impose a fixed ordering of sources across samples.
% Here we provide insights into the population losses. Recall that we work under the assumption that the PIT permutation is the oracle permutation recovering the labeling of the true sources. Given that training starts from a strong teacher model (MixIT), it is a reasonable simplifying assumption. We first consider the (standard) regression case and then show how the results can be further extended to the generative case.

Recall that $\vx = [\vx^{(1)\top} \cdots \vx^{(K)\top}]^\top\in\mathbb R^{K\times d}$ denotes the true sources and $\bar{\vx} = [\bar{\vx}^{(1)\top} \cdots \bar{\vx}^{(K)\top}]^\top\in\mathbb R^{K\times d}$ their teacher estimates. In the population case, we associate $(K-1)$ independent ``background'' teacher sources to each estimate $\vx^{(k)}$ which follow
\begin{equation}
\bar{\vx}_{\text{bg}}^{(k, 2)}, \dots, \bar{\vx}_{\text{bg}}^{(k, K)} \overset{\text{i.i.d.}}{\sim} \bar{p}_{\theta_\mathcal{T}},
\end{equation}
for 
\begin{equation}\label{eq:teacher-marginal_app}
\bar p_{\theta_\mathcal{T}}(\bar{\vx}^{(k)})
=
\int p_{\theta_\mathcal{T}}(\bar{\vx}^{(k)},\bar{\vx}^{(-k)} \mid \vm)\,p(\vm)\,
\mathrm{d}\bar{\vx}^{(-k)}\,\mathrm{d}\vm.
\end{equation}
We write 
\begin{equation}
\bar{\vx}^{(k)}_{\text{bg}}=\begin{bmatrix}(\bar{\vx}_{\text{bg}}^{(k, 2)})^\top , \dots, (\bar{\vx}_{\text{bg}}^{(k, K)})^\top  \end{bmatrix}^\top.
\end{equation}
We then obtain the source matrix anchored to source $k$
\begin{equation}
\tilde{\vx}_k = \begin{bmatrix} (\bar{\vx}^{(k)})^\top, (\bar{\vx}_{\mathrm{bg}}^{(k,2)})^\top, \dots, (\bar{\vx}_{\mathrm{bg}}^{(k,K)})^\top \end{bmatrix}^\top\in\mathbb R^{K\times d}
\end{equation}
% We then obtain the remixed pseudo-source matrix anchored to source $k$
% \begin{equation} 
% \tilde{\vx}_k = \begin{bmatrix} (\bar{\vx}^{(k)})^\top, (\bar{\vx}_{\mathrm{bg}}^{(k,2)})^\top, \dots, (\bar{\vx}_{\mathrm{bg}}^{(k,K)})^\top \end{bmatrix}^\top.
% \end{equation}
and the corresponding synthetic mixture
\begin{equation}\label{eq:tilde-mk}
\tilde{\vm}_k := \vone^\top \tilde{\vx}_k
= \bar{\vx}^{(k)} + \sum_{j=2}^K \bar{\vx}_{\mathrm{bg}}^{(k,j)}.
\end{equation}

\subsubsection{Regression Case}\label{app:Regressioncase}
% Recall that we assume that the PIT permutation is an oracle permutation aligning the student outputs with the true/teacher labeling (i.e.\ output index $k$ corresponds to source type $k$). The student produces an estimate $\hat{\vx}^\theta_k(\tilde{\vm}_k)\in\mathbb R^{1\times d}$ of $\vx^{(k)}$. We also recall that the artificial background sources used for different anchors are sampled independently across $k$. 
%The student produces an estimate $\hat{\vx}^\theta_k(\tilde{\vm}_k)$ of $\vx^{(k)}$.
We assume throughout that the true sources, teacher estimates, and student predictions have finite second moments. Furthermore, for the gradient analysis, we assume $\mathbb{E}[\|\nabla_\theta \delta_\theta^{(k)}\|_F^2] < \infty$. 
For the student, the separator output for a pseudo-mixture
\(\tilde{\vm}_k\) is an unordered \(K\)-tuple, denoted
\(F_\theta(\tilde{\vm}_k)\in\mathbb R^{K\times d}\). Let
\[
\rho_k
\in
\argmin_{\rho\in S_K}
\left\|
F_\theta(\tilde{\vm}_k)-\rho\tilde{\vx}_k
\right\|^2
\]
be the PIT routing with respect to the source matrix
\(\tilde{\vx}_k\). We write
\[
\hat{\vx}^\theta_k(\tilde{\vm}_k)
:=
e_1^\top \rho_k^{-1}F_\theta(\tilde{\vm}_k)
\in\mathbb R^{1\times d}
\]
for the student output row routed to the anchor teacher source
\(\bar{\vx}^{(k)}\), i.e. to the first row of \(\tilde{\vx}_k\). Thus
\(\hat{\vx}^\theta_k(\tilde{\vm}_k)\) is not a fixed raw output index; it is
the PIT-routed estimate assigned to the anchor. We recall that the artificial background sources used for different anchors are sampled
independently across \(k\).

In this case, we can simply rewrite the ReMixIT population loss as
\begin{equation}\label{eq:L_RM_def}
\mathcal{L}_{\text{RM}}(\theta)
=
\frac{1}{K}\sum_{k=1}^K \E\Big[\norm{\hat{\vx}^\theta_k(\tilde{\vm}_k)-\bar{\vx}^{(k)}}^2\Big]
=
\E\Big[\norm{\hat{\vx}^\theta_1(\tilde{\vm}_1)-\bar{\vx}^{(1)}}^2\Big],
\end{equation}
and the (normalized) Self-Remixing population loss as
\begin{align}\label{eq:L_SR_def}
\mathcal{L}_{\text{SR}}(\theta)=&\frac{1}{K}\mathbb{E}\left[ \left\| \left(\sum_{k=1}^K \hat{\vx}^\theta_k(\tilde{\vm}_k)\right) -\vm \right\|^2 \right]
\end{align}
Using $\vm=\sum_{k=1}^K \vx^{(k)}$ and teacher mixture consistency $\vm=\sum_{k=1}^K \bar{\vx}^{(k)}$, we have 
\begin{align}\label{eq:L_SR_def2_1}
\mathcal{L}_{\text{SR}}(\theta)&=\frac{1}{K}\mathbb{E}\left[\left\| \sum_{k=1}^K \left(\hat{\vx}^\theta_k(\tilde{\vm}_k) -\vx^{(k)} \right)\right\|^2 \right]\\
&=\frac{1}{K}\mathbb{E}\left[\left\| \sum_{k=1}^K \left(\hat{\vx}^\theta_k(\tilde{\vm}_k) -\bar{\vx}^{(k)} \right)\right\|^2 \right].\label{eq:L_SR_def2_2}
\end{align}
Finally we define the pseudo-supervised loss
\begin{equation}\label{eq:L_sup_tilde_def}
\mathcal{L}_{\text{Sup}}(\theta)
:=
\frac{1}{K}\sum_{k=1}^K \E\Big[\norm{\hat{\vx}^\theta_k(\tilde{\vm}_k)-\vx^{(k)}}^2\Big]
=
\E\Big[\norm{\hat{\vx}^\theta_1(\tilde{\vm}_1)-\vx^{(1)}}^2\Big].
\end{equation}
We call this a pseudo-supervised loss as $\tilde{\vm}_k$ is not from the data distribution $p(\vm)$. 
\bigskip

For each $k$, define the teacher-referenced error 
\begin{equation}\label{eq:epsk_def}
\epsilon_{\theta}^{(k)}
:=
\hat{\vx}^\theta_k(\tilde{\vm}_k)-\bar{\vx}^{(k)},
\end{equation}
and true-reference error
\begin{equation}\label{eq:Deltak_def}
\Delta_{\theta}^{(k)}
:=
\hat{\vx}^\theta_k(\tilde{\vm}_k)-\vx^{(k)}.
\end{equation}
Define the teacher error on source $k$ as
\begin{equation}\label{eq:teacher_error_def}
e^{(k)} := \bar{\vx}^{(k)}-\vx^{(k)}.
\end{equation}
Then
\begin{equation}\label{eq:error_relation}
\Delta_{\theta}^{(k)} = \epsilon_{\theta}^{(k)} + e^{(k)}
\qquad\Longleftrightarrow\qquad
\epsilon_{\theta}^{(k)} = \Delta_{\theta}^{(k)} - e^{(k)}.
\end{equation}
Let $\mathcal A_k:=\sigma(\vx^{(k)},\bar\vx^{(k)})$ denote the anchor information, i.e. the true source and its teacher estimate. The background-averaged errors are
\begin{equation}\label{eq:mu_def}
\mu^{(k)}_\theta
:=
\E\big[\epsilon_\theta^{(k)}\mid \mathcal A_k\big],
\end{equation}
and
\begin{equation}\label{eq:deltak_def}
\delta_{\theta}^{(k)}
:=\E\big[\Delta_\theta^{(k)}\mid \mathcal A_k\big].
\end{equation}
Finally the teacher MSE level is
\begin{equation}\label{eq:sigmaT_def}
\sigma_\mathcal{T}^2 := \E\big[\norm{e^{(1)}}^2\big],
\end{equation}
where the expectation is over the joint law of the true source, mixture, and teacher estimate.

% For each $k$, define the teacher-referenced student error
% \begin{equation}\label{eq:epsk_def}
% \epsilon_{\theta}^{(k)}
% :=
% \hat{\vx}^\theta_k(\tilde{\vm}_k)-\bar{\vx}^{(k)},
% \end{equation}
% and its conditional mean given the anchor teacher source
% \begin{equation}\label{eq:mu_def}
% \mu^{(k)}_\theta
% :=
% \E_{\bar{\vx}^{(k)}_{\text{bg}}}\big[\hat{\vx}^\theta_k(\tilde{\vm}_k)\mid \bar{\vx}^{(k)}\big]-\bar{\vx}^{(k)},
% \end{equation}
% the expectation being w.r.t. to the background sources $\bar{\vx}^{(k)}_{\text{bg}}$. 
% We also define similarly the true-referenced student error
% \begin{equation}\label{eq:Deltak_def}
% \Delta_{\theta}^{(k)}
% :=
% \hat{\vx}^\theta_k(\tilde{\vm}_k)-\vx^{(k)}.
% \end{equation}
% and its conditional mean
% \begin{equation}\label{eq:deltak_def}
% \delta_{\theta}^{(k)}
% :=\E_{\bar{\vx}^{(k)}_{\text{bg}}}\big[\hat{\vx}^\theta_k(\tilde{\vm}_k)\mid \bar{\vx}^{(k)}\big]-\vx^{(k)}.
% \end{equation}
% Finally define the teacher error on source $k$ as 
% \begin{equation}\label{eq:teacher_error_def}
% e^{(k)} := \bar{\vx}^{(k)}-\vx^{(k)}.
% \end{equation}
% Then the three errors are related by
% \begin{equation}\label{eq:error_relation}
% \Delta_{\theta}^{(k)} = \epsilon_{\theta}^{(k)} + e^{(k)}
% \qquad\Longleftrightarrow\qquad
% \epsilon_{\theta}^{(k)} = \Delta_{\theta}^{(k)} - e^{(k)}.
% \end{equation}
% Finally the teacher MSE level
% \begin{equation}\label{eq:sigmaT_def}
% \sigma_\mathcal{T}^2 := \E\big[\norm{e^{(1)}}^2\big],
% \end{equation}
% where the expectation is w.r.t. $\tilde \vx_1, \vx_1$.

The following result establishes useful expressions and relationships between those losses. 
\begin{proposition}[Relationships between the losses]\label{prop:SR_exact_decomp}
The ReMixIT, Self-remixing and pseudo-supervised losses satisfy the following relationships
\begin{align}\label{eq:RM_Sup}
\mathcal{L}_{\textup{RM}}(\theta)
=
\mathcal{L}_{\textup{Sup}}(\theta)
+\sigma_\mathcal{T}^2
-2\,\E\big[\ip{e^{(1)}}{\delta_\theta^{(1)}}\big],
\end{align}
\begin{align}\label{eq:SR_Sup}
\mathcal{L}_{\textup{SR}}(\theta)
&=\mathcal{L}_{\textup{Sup}}(\theta)+(K-1) \E\big[\ip{\delta^{(1)}_\theta}{\delta^{(2)}_\theta}\big],
\end{align}
and
\begin{align}\label{eq:SR_RM}
\mathcal{L}_{\textup{SR}}(\theta)
&=\mathcal{L}_{\textup{RM}}(\theta)+(K-1) \E\big[\ip{\mu^{(1)}_\theta}{\mu^{(2)}_\theta}\big],
\end{align}
\end{proposition}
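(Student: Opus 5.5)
The three identities follow from expanding squared norms and removing the background randomness by conditioning on the anchors. Only two structural facts are needed.

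\textbf{Fact (i), conditional independence.} Let $\mathcal G:=\sigma(\vx,\bar{\vx})$ collect all true sources and their aligned teacher estimates. For each $k$, the quantity $\hat{\vx}^\theta_k(\tilde{\vm}_k)$ is a measurable function of $\tilde{\vx}_k$ alone, because both $F_\theta(\tilde{\vm}_k)$ and the routing $\rho_k$ depend only on $\tilde{\vx}_k$. Moreover, $\tilde{\vx}_k$ depends only on the anchor $\bar{\vx}^{(k)}$ and its own background block $\bar{\vx}^{(k)}_{\rm bg}$. The background blocks are i.i.d.\ across $k$ and independent of $\mathcal G$. Hence, conditionally on $\mathcal G$, the errors $\Delta^{(1)}_\theta,\dots,\Delta^{(K)}_\theta$ are independent, and $\E[\Delta^{(k)}_\theta\mid\mathcal G]=\E[\Delta^{(k)}_\theta\mid\mathcal A_k]=\delta^{(k)}_\theta$. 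The same holds for $\epsilon^{(k)}_\theta$ and $\mu^{(k)}_\theta$.

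\textbf{Fact (ii), exchangeability.} The joint law of the pairs $(\vx^{(k)},\bar{\vx}^{(k)})_{k=1}^K$ is invariant under relabelling of $k$. This follows from the i.i.d.\ prior \eqref{eq:joint_prior} together with the permutation-symmetric PIT alignment convention. It justifies replacing $\frac1K\sum_k$ by the $k=1$ term, as already done in \eqref{eq:L_RM_def} and \eqref{eq:L_sup_tilde_def}, and it makes every off-diagonal cross term $\E\langle\cdot^{(j)},\cdot^{(k)}\rangle$ with $j\neq k$ equal to the $(1,2)$ term.

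\textbf{Identity \eqref{eq:RM_Sup}.} I will use \eqref{eq:error_relation} to write $\epsilon^{(1)}_\theta=\Delta^{(1)}_\theta-e^{(1)}$ and expand:
\[
\|\epsilon^{(1)}_\theta\|^2=\|\Delta^{(1)}_\theta\|^2+\|e^{(1)}\|^2-2\langle e^{(1)},\Delta^{(1)}_\theta\rangle .
\]
Taking expectations gives $\mathcal L_{\rm Sup}+\sigma_{\mathcal T}^2$ from the first two terms. For the cross term, $e^{(1)}$ is $\mathcal A_1$-measurable, so the tower property gives $\E\langle e^{(1)},\Delta^{(1)}_\theta\rangle=\E\langle e^{(1)},\delta^{(1)}_\theta\rangle$. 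Finite second moments make every term integrable.

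\textbf{Identity \eqref{eq:SR_Sup}.} Starting from \eqref{eq:L_SR_def2_1}, I expand
\[
\Big\|\sum_k\Delta^{(k)}_\theta\Big\|^2=\sum_k\|\Delta^{(k)}_\theta\|^2+\sum_{j\neq k}\langle\Delta^{(j)}_\theta,\Delta^{(k)}_\theta\rangle .
\]
The diagonal part, divided by $K$, is $\mathcal L_{\rm Sup}$. For each off-diagonal term, I condition on $\mathcal G$ and apply Fact (i):
\[
\E\langle\Delta^{(j)}_\theta,\Delta^{(k)}_\theta\rangle=\E\big\langle\E[\Delta^{(j)}_\theta\mid\mathcal G],\E[\Delta^{(k)}_\theta\mid\mathcal G]\big\rangle=\E\langle\delta^{(j)}_\theta,\delta^{(k)}_\theta\rangle .
\]
By Fact (ii), each of the $K(K-1)$ off-diagonal terms equals $\E\langle\delta^{(1)}_\theta,\delta^{(2)}_\theta\rangle$, so dividing by $K$ yields the factor $(K-1)$.

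\textbf{Identity \eqref{eq:SR_RM}.} I run the same computation starting from \eqref{eq:L_SR_def2_2} with $\epsilon^{(k)}_\theta$ in place of $\Delta^{(k)}_\theta$. The diagonal part, divided by $K$, is $\mathcal L_{\rm RM}$ by \eqref{eq:L_RM_def}. The cross terms become $\E\langle\mu^{(j)}_\theta,\mu^{(k)}_\theta\rangle$. Here $e^{(k)}$ is $\mathcal G$-measurable, so $\E[\epsilon^{(k)}_\theta\mid\mathcal G]=\delta^{(k)}_\theta-e^{(k)}=\mu^{(k)}_\theta$, and the argument goes through unchanged.

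\textbf{Main obstacle.} The algebra is routine; the delicate step is Fact (i), together with the exchangeability in Fact (ii) needed to collapse the cross terms. For Fact (i), I must check that the PIT routing $\rho_k$ introduces no dependence between different anchors. It does not, because $\rho_k$ is a function of $\tilde{\vx}_k$ only. The subtle point is that $\mathcal A_1$ and $\mathcal A_2$ are \emph{not} independent, since they are tied through the shared mixture $\vm$. This is why I condition on the joint $\mathcal G$ rather than on $\mathcal A_1$ and $\mathcal A_2$ separately, and then use $\E[\Delta^{(k)}_\theta\mid\mathcal G]=\delta^{(k)}_\theta$. As a result, the cross term does not factor into a product of means; it remains the correlation $\E\langle\delta^{(1)}_\theta,\delta^{(2)}_\theta\rangle$, exactly as stated.
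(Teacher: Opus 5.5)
Your proposal is correct and follows the same route as the paper. You expand the squared norms, handle the ReMixIT cross term by the tower property on $\mathcal A_1$, and factor the off-diagonal Self-Remixing terms using conditional independence of the per-anchor errors given the anchor information, then collapse them to the $(1,2)$ term by exchangeability. The only difference is cosmetic: you condition on the full $\mathcal G=\sigma(\vx,\bar{\vx})$ instead of $\mathcal A_k\vee\mathcal A_\ell$. This works equally well, since the background blocks are independent of $\mathcal G$ and hence $\E[\Delta^{(k)}_\theta\mid\mathcal G]=\delta^{(k)}_\theta$.
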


\begin{proof}
To prove \eqref{eq:RM_Sup}, we use \eqref{eq:error_relation} for $k=1$ gives $\epsilon_\theta^{(1)}=\Delta_\theta^{(1)}-e^{(1)}$, hence
\[
\mathcal{L}_{\textup{RM}}(\theta)
=\E\big[\norm{\epsilon_\theta^{(1)}}^2\big]
=\E\big[\norm{\Delta_\theta^{(1)}-e^{(1)}}^2\big]
=
\E\big[\norm{\Delta_\theta^{(1)}}^2\big]+\E\big[\norm{e^{(1)}}^2\big]-2\E[\ip{e^{(1)}}{\Delta_\theta^{(1)}}].
\]
The first term equals $\mathcal{L}_{\textup{Sup}}(\theta)$ by \eqref{eq:L_sup_tilde_def},
and the second equals $\sigma_\mathcal{T}^2$ by \eqref{eq:sigmaT_def}. 
We define for each $k$, the $\sigma$-field $\mathcal{A}_k=\sigma(\vx^{k},\bar \vx^{k})$. Using this definition, the last term can be re-expressed as
\begin{equation}
\E[\ip{e^{(1)}}{\Delta_\theta^{(1)}}]=\E[\E[\ip{e^{(1)}}{\Delta_\theta^{(1)}}|\mathcal{A}_1]]=\E[\ip{e^{(1)}}{\delta^{(1)}_\theta}].
\end{equation}

To prove \eqref{eq:SR_Sup}, we combine \eqref{eq:L_SR_def2_1} and \eqref{eq:Deltak_def} to obtain
\begin{align}
\mathcal{L}_{\textup{SR}}(\theta)
&=
\frac{1}{K}\E\Big[\norm{\sum_{k=1}^K \Delta_\theta^{(k)}}^2\Big]
=
\frac{1}{K}\sum_{k=1}^K \E\big[\norm{\Delta_\theta^{(k)}}^2\big]
+
\frac{1}{K}\sum_{k\neq \ell}\E\big[\ip{\Delta_\theta^{(k)}}{\Delta_\theta^{(\ell)}}\big]\\
&=\E\Big[\norm{ \Delta_\theta^{(1)}}^2\Big]+(K-1) \E\big[\ip{\Delta_\theta^{(1)}}{\Delta_\theta^{(2)}}\big],
\end{align}
where we used that $\E\big[\norm{\Delta_\theta^{(k)}}^2\big]=\E\big[\norm{\Delta_\theta^{(1)}}^2\big]$ and $\E\big[\ip{\Delta_\theta^{(k)}}{\Delta_\theta^{(\ell)}}\big]=\E\big[\ip{\Delta_\theta^{(1)}}{\Delta_\theta^{(2)}}\big]$ for any $k,\ell$ ($k\neq \ell$) thanks to exchangeability.

Let $k\neq \ell$. Conditionally on $\mathcal A_k\vee\mathcal A_\ell$, the only remaining randomness in
\(\Delta_\theta^{(k)}\) and \(\Delta_\theta^{(\ell)}\) comes from the
independently sampled artificial background blocks used to construct
\(\tilde{\vm}_k\) and \(\tilde{\vm}_\ell\). Hence
\(\Delta_\theta^{(k)}\) and \(\Delta_\theta^{(\ell)}\) are conditionally
independent given $\mathcal A_k\vee\mathcal A_\ell$, and therefore
\begin{equation}
\E[
\langle \Delta_\theta^{(k)},\Delta_\theta^{(\ell)}\rangle
\mid \mathcal A_k\vee\mathcal A_\ell]
=
\left\langle
\E[\Delta_\theta^{(k)}\mid \mathcal A_k\vee\mathcal A_\ell ],
\E[\Delta_\theta^{(\ell)}\mid \mathcal A_k\vee\mathcal A_\ell]
\right\rangle .
\end{equation}
Moreover, since the background block for anchor \(k\) is independent of
\(\mathcal A_\ell\) conditional on \(\mathcal A_k\),
\[
\E[\Delta_\theta^{(k)}\mid \mathcal A_k\vee\mathcal A_\ell]
=
\E[\Delta_\theta^{(k)}\mid \mathcal A_k]
=
\delta_\theta^{(k)},
\]
and similarly
\[
\E[\Delta_\theta^{(\ell)}\mid \mathcal A_k\vee\mathcal A_\ell]
=
\delta_\theta^{(\ell)}.
\]
Thus
\[
\E[
\langle \Delta_\theta^{(k)},\Delta_\theta^{(\ell)}\rangle
\mid \mathcal A_k\vee\mathcal A_\ell]
=
\langle \delta_\theta^{(k)},\delta_\theta^{(\ell)}\rangle.
\]

% Now given  \(k\neq \ell\) and the sigma-field \(\mathcal A_k\vee\mathcal A_\ell\), the only
% remaining randomness in \(\Delta_\theta^{(k)}\) and
% \(\Delta_\theta^{(\ell)}\) comes from the independently sampled artificial
% background blocks used to form \(\tilde{\vm}_k\) and \(\tilde{\vm}_\ell\).
% Thus the two errors are conditionally independent. Moreover, since the
% background block for anchor \(k\) is independent of \(\mathcal A_\ell\)
% given \(\mathcal A_k\),
% \[
% \E[\Delta_\theta^{(k)}\mid\mathcal A_k\vee\mathcal A_\ell]
% =
% \E[\Delta_\theta^{(k)}\mid\mathcal A_k]
% =
% \delta_\theta^{(k)}.
% \]
% Therefore, we have 

% \begin{align}
% \E\!\left[\langle \Delta_\theta^{(1)},\Delta_\theta^{(2)}\rangle \mid \mathcal A_1\vee \mathcal A_2\right]
% =&\left\langle \E[\Delta_\theta^{(1)}\mid \mathcal A_1\vee \mathcal A_2],\,\E[\Delta_\theta^{(2)}\mid A_1\vee \mathcal A_2]\right\rangle\\
% =&\left\langle \E[\Delta_\theta^{(1)}\mid \mathcal A_1],\,\E[\Delta_\theta^{(2)}\mid \mathcal A_2]\right\rangle\\
% =&
% \langle \delta_\theta^{(1)},\delta_\theta^{(2)}\rangle.
% \end{align}
We prove \eqref{eq:SR_RM} identically by combining \eqref{eq:L_SR_def2_2} and \eqref{eq:epsk_def}.
\end{proof}
The expression \eqref{eq:RM_Sup} shows that the ReMixIT loss  deviates from (pseudo)-supervised learning by a term $-2\,\E\big[\ip{e^{(1)}}{\delta_\theta^{(1)}}\big]$ which depends on the teacher error $e^{(1)}$ and how the averaged (over the ``background" sources) student error aligns with it. Note that a similar expression appeared in \cite{tzinis2022remixit} for finite $B$ but it is claimed incorrectly therein that the term equivalent to $\E\big[\ip{e^{(1)}}{\delta_\theta^{(1)}}\big]$ in their decomposition is independent of $\theta$ as $B\rightarrow \infty$. However, we can expect this term to be of moderate magnitude compared to the pseudo-supervised loss. Indeed, although the student is initialized at the teacher parameters, the student is never evaluated on the same inputs (and input distribution) as the teacher. Due to background remixing, the student sees mixtures of the form $\tilde{\vm}_1=\tilde{\vx}_1+\text{random background sources}$, where the background sources are independent of the original mixture. So even at initialization the student's output varies with the background sources; averaging over this variability produces thus a smoothed estimate that suppresses mixture-specific teacher error. As a result, we expect the remix-averaged student error $\delta^{(1)}_\theta$ to be weakly correlated with the teacher error.

The expression  \eqref{eq:SR_Sup} shows instead that Self-Remixing only deviates from (pseudo)-supervised learning by the cross correlation of the student errors $(K-1)\E\big[\ip{\delta^{(1)}_{\theta}}{\delta^{(2)}}\big]$. When this coupling term is small; e.g., because the background-averaged errors $\delta^{(1)}_{\theta},\delta^{(2)}_{\theta}$ are small or weakly correlated across sources, $\mathcal L_{\mathrm{SR}}(\theta)$ primarily reduces
$\mathcal L_{\mathrm{Sup}}(\theta)$, even though the objective does not explicitly
penalize per-source errors.

We now study by how much the gradients of the ReMixIT and Self-remixing losses differ from the gradient of the pseudo-supervised loss. 

\begin{proposition}
\label{prop:grad-dev}
Assume that $\theta\mapsto \hat \vx_{\theta,k}(\tilde \vm_k)$ is differentiable and that we can interchange
$\nabla_\theta$ with the relevant expectations/conditional expectations. We identify row signals in $\mathbb R^{1\times d}$ with vectors in $\mathbb R^d$; if $\theta\in\mathbb R^p$, then $\nabla_\theta\delta_\theta^{(k)}\in\mathbb R^{d\times p}$ and $(\nabla_\theta\delta_\theta^{(k)})^\top e^{(k)}\in\mathbb R^p$.

The gradient of the ReMixIT and Self-remixing loss satisfy
\begin{align}
\nabla_\theta \mathcal{L}_{\rm RM}(\theta) - \nabla_\theta  \mathcal{L}_{\rm Sup}(\theta)
&=
-2\,\E\!\left[\left(\nabla_\theta \delta_\theta^{(1)}\right)^\top e^{(1)}\right], \label{eq:grad-rm}\\
\nabla_\theta  \mathcal{L}_{\rm SR}(\theta) - \nabla_\theta  \mathcal{L}_{\rm Sup}(\theta)
&=
(K-1)\,\E\!\left[\left(\nabla_\theta \delta_\theta^{(1)}\right)^\top \delta_\theta^{(2)}
                 +\left(\nabla_\theta \delta_\theta^{(2)}\right)^\top \delta_\theta^{(1)}\right]. \label{eq:grad-sr}
\end{align}
By exchangeability of sources, the right-hand side of \eqref{eq:grad-sr} can also be written as
\begin{equation}
\nabla_\theta  \mathcal{L}_{\rm SR}(\theta) - \nabla_\theta L_{\rm Sup}(\theta)
=
2(K-1)\,\E\!\left[\left(\nabla_\theta \delta_\theta^{(1)}\right)^\top \delta_\theta^{(2)}\right].
\label{eq:grad-sr-sym}
\end{equation}

Let $\|\cdot\|$ denote the Euclidean norm on signals and $\|\cdot\|_F$ the Frobenius norm on Jacobians.
Then
\begin{align}
\left\|\nabla_\theta  \mathcal{L}_{\rm RM}(\theta) - \nabla_\theta  \mathcal{L}_{\rm Sup}(\theta)\right\|
&\le
2\,\E\!\left[\|e^{(1)}\|\,\|\nabla_\theta \delta_\theta^{(1)}\|_F\right]
\le
2\,\sigma_\mathcal{T} \,\sqrt{\E\|\nabla_\theta \delta_\theta^{(1)}\|_F^2},
\label{eq:bound-rm}\\[1mm]
\left\|\nabla_\theta  \mathcal{L}_{\rm SR}(\theta) - \nabla_\theta  \mathcal{L}_{\rm Sup}(\theta)\right\|
&\le
2(K-1)\,\E\!\left[\|\delta_\theta^{(2)}\|\,\|\nabla_\theta \delta_\theta^{(1)}\|_F\right]
\le
2(K-1)\,\sqrt{\E\|\delta_\theta^{(1)}\|^2}\,\sqrt{\E\|\nabla_\theta \delta_\theta^{(1)}\|_F^2}.
\label{eq:bound-sr}
\end{align}
In particular, the Self-Remixing gradient deviates from the pseudo-supervised gradient in proportion to the
\emph{background-averaged} student error magnitude $\sqrt{\E\|\delta_\theta^{(1)}\|^2}$.
\end{proposition}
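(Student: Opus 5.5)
The plan is to differentiate the identities of Proposition~\ref{prop:SR_exact_decomp} directly rather than working with the raw losses. From \eqref{eq:RM_Sup}, $\mathcal{L}_{\rm RM}-\mathcal{L}_{\rm Sup}=\sigma_\mathcal{T}^2-2\E[\ip{e^{(1)}}{\delta_\theta^{(1)}}]$. Here $\sigma_\mathcal{T}^2$ and $e^{(1)}$ depend only on the teacher and the true sources, so they do not depend on $\theta$. Interchanging $\nabla_\theta$ with the outer expectation, as allowed by the differentiability and interchange assumption, gives
\[
\nabla_\theta \E[\ip{e^{(1)}}{\delta_\theta^{(1)}}]=\E[(\nabla_\theta\delta_\theta^{(1)})^\top e^{(1)}],
\]
and this is \eqref{eq:grad-rm}. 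Similarly, from \eqref{eq:SR_Sup} we have $\mathcal{L}_{\rm SR}-\mathcal{L}_{\rm Sup}=(K-1)\E[\ip{\delta^{(1)}_\theta}{\delta^{(2)}_\theta}]$. The product rule for the inner product then yields \eqref{eq:grad-sr}.

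For \eqref{eq:grad-sr-sym}, I will use exchangeability of the anchors. The pair $(\mathcal A_1,\mathcal A_2)$ has the same joint law as $(\mathcal A_2,\mathcal A_1)$, and the backgrounds are i.i.d. Each $\delta_\theta^{(k)}$ is the same measurable function $g_\theta$ applied to its own anchor data $(\vx^{(k)},\bar\vx^{(k)})$. Hence
\[
\E[(\nabla_\theta g_\theta(a_1))^\top g_\theta(a_2)]=\E[(\nabla_\theta g_\theta(a_2))^\top g_\theta(a_1)],
\]
so the two terms in \eqref{eq:grad-sr} coincide. The step I expect to need the most care is justifying that $\delta^{(k)}_\theta$ is the same function of the anchor for every $k$ and that its Jacobian is $\E[\nabla_\theta\Delta^{(k)}_\theta\mid\mathcal A_k]$. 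That $\delta^{(k)}_\theta$ is a common function of the anchor follows from the symmetric construction of $\tilde{\vx}_k$ and the i.i.d.\ backgrounds. The Jacobian identity is exactly the assumed interchange of gradient and conditional expectation. The remaining point is that the PIT routing $\rho_k$ is piecewise constant in $\theta$. I treat it as stop-gradient / a.s.\ locally constant, which the hypotheses implicitly cover.

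For the bounds, I start from each gradient identity and use $\|\E[Y]\|\le\E\|Y\|$. I then apply the operator-norm bound $\|A^\top v\|\le\|A\|_{\rm op}\|v\|\le\|A\|_F\|v\|$. This gives the first inequalities in \eqref{eq:bound-rm} and \eqref{eq:bound-sr}; the latter is applied to the symmetrized form \eqref{eq:grad-sr-sym}. Cauchy--Schwarz in $L^2$ then gives the second inequalities. For ReMixIT this uses $\sqrt{\E\|e^{(1)}\|^2}=\sigma_\mathcal{T}$. For Self-Remixing it uses $\E\|\delta_\theta^{(2)}\|^2=\E\|\delta_\theta^{(1)}\|^2$, again by exchangeability. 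Finiteness of all these quantities follows from the standing second-moment assumptions, including the assumed bound on $\E\|\nabla_\theta\delta_\theta^{(k)}\|_F^2$. The "in particular" remark is then immediate from \eqref{eq:bound-sr}.
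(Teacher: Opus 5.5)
Your proposal is correct and follows essentially the same route as the paper: you differentiate the identities of Proposition~\ref{prop:SR_exact_decomp} (with $\sigma_\mathcal{T}^2$ and $e^{(1)}$ constant in $\theta$), obtain the symmetric form from exchangeability, and derive the bounds from $\|\E Z\|\le\E\|Z\|$, $\|A^\top v\|\le\|A\|_F\|v\|$ and Cauchy--Schwarz. Your extra remarks make explicit what the paper leaves implicit in its differentiability and interchange hypotheses: that $\delta^{(k)}_\theta=g_\theta(\vx^{(k)},\bar{\vx}^{(k)})$ for a common $g_\theta$, and that the PIT routing is piecewise constant in $\theta$.
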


\begin{proof}
Differentiate \eqref{eq:RM_Sup} w.r.t.\ $\theta$.
The term $\sigma_\mathcal{T}^2$ is constant in $\theta$ (teacher fixed), so
\[
\nabla_\theta  \mathcal{L}_{\rm RM}(\theta) - \nabla_\theta  \mathcal{L}_{\rm Sup}(\theta)
=
-2\,\nabla_\theta \E\langle e^{(1)},\delta_\theta^{(1)}\rangle.
\]
Under the stated interchange assumptions and since $e^{(1)}$ does not depend on $\theta$,
\[
\nabla_\theta \E\langle e^{(1)},\delta_\theta^{(1)}\rangle
=
\E\!\left[(\nabla_\theta \delta_\theta^{(1)})^\top e^{(1)}\right],
\]
which gives \eqref{eq:grad-rm}.
Similarly differentiating \eqref{eq:SR_Sup} yields \eqref{eq:grad-sr}, and exchangeability implies
\eqref{eq:grad-sr-sym}.

Apply $\| \E Z\|\le \E\|Z\|$ to \eqref{eq:grad-rm} and use
$\|(\nabla_\theta \delta)^\top e\|\le \|\nabla_\theta \delta\|_F\,\|e\|$ to get the first inequality in
\eqref{eq:bound-rm}. The second follows from Cauchy--Schwarz.
The Self-Remixing bound \eqref{eq:bound-sr} follows the same way from \eqref{eq:grad-sr-sym}
and exchangeability of $\delta_\theta^{(1)}$ and $\delta_\theta^{(2)}$.
\end{proof}

\subsubsection{Generative Case}\label{app:Generativecase}
We now rewrite the population ReMixIT and Self-Remixing \emph{flow-matching} objectives of Section \ref{sec:idealized_self_remixing}. Throughout we work in the population regime (\(B\to\infty\)) and use the same
sample-wise routing convention as in the regression case. Teacher estimates
are already relabelled by the teacher-side alignment convention, and student
FM outputs are PIT-routed with respect to the matrix
\(\tilde{\vx}_k\) composed of the teacher source estimate and $K-1$ i.i.d. background sources. After applying this routing, we suppress the explicit
permutation notation and treat the displayed quantities as already aligned. 
% Throughout we work in the population regime ($B\to\infty$) and assume \emph{oracle PIT}: the PIT permutation $\sigma_k$ is the oracle permutation. Accordingly we drop the permutation notation and treat the routed outputs as already aligned.

Recall that we sample $\tilde \vx_{k,0}\sim p_0(\cdot\mid \tilde \vm_k)$ and
$t \sim \mathcal{U}[0,1]$, and define the linear interpolant
\begin{equation}
\tilde \vx_{k,t}:=(1-t)\tilde \vx_{k,0}+t\tilde \vx_k.
\end{equation}
Given a student velocity $\vv_\theta(\cdot,t,\tilde \vm_k)\in\R^{K\times d}$, we also define the associated
time-$t$ \emph{clean-source estimate} by
\begin{equation}
\hat \vx_{k,1,\theta}(\tilde \vx_{k,t},t,\tilde \vm_k)
:= \tilde \vx_{k,t} + (1-t)\,\vv_\theta(\tilde \vx_{k,t},t,\tilde \vm_k)\in\R^{K\times d}.
\label{eq:xhat_clean_fm_nosigma}
\end{equation}
It follows that 
\begin{equation}
\hat \vx_{k,1,\theta}(\tilde \vx_{k,t},t,\tilde \vm_k)-\tilde \vx_k
=
(1-t)\Big(\vv_\theta(\tilde \vx_{k,t},t,\tilde \vm_k)-(\tilde \vx_k-\tilde \vx_{k,0})\Big).
\label{eq:clean_vs_velocity_nosigma}
\end{equation}

\paragraph{Population ReMixIT--Flow loss.}
The population ReMixIT--Flow objective is
\begin{equation}
L^{\infty}_{\rm RM\text{-}FM}(\theta)
:= \frac{1}{K}
\E_{t \sim \lambda}\Big[
\big\|\vv_\theta(\tilde \vx_{k,t},t,\tilde \vm_k)-(\tilde \vx_k-\tilde \vx_{k,0})\big\|_F^2
\Big]
\label{eq:LRMFM_vel_nosigma}
\end{equation}
where $t \sim \lambda$.
Using \eqref{eq:clean_vs_velocity_nosigma}, this is equivalently a (reweighted) regression loss on the
time-dependent clean estimate:
\begin{equation}
L^{\infty}_{\rm RM\text{-}FM}(\theta)
=\frac{1}{K}
\E_{t \sim \mathcal{U}[0,1]}\Big[
\beta_t\,\big\|\hat \vx_{k,1,\theta}(\tilde \vx_{k,t},t,\tilde \vm_k)-\tilde \vx_k\big\|_F^2
\Big]
\label{eq:LRMFM_clean_nosigma}
\end{equation}
where $\beta_t=\lambda_t /(1-t)^2$.

% If one prefers an unweighted ``regression-style'' objective, define instead
% \begin{equation}
% \bar L^{\infty}_{\rm RM\text{-}FM}(\theta)
% :=
% \E\Big[
% \big\|\hat \vx_{k,1,\theta}(\tilde \vx_{k,t},t,\tilde \vm_k)-\tilde \vx_k\big\|_F^2
% \Big],
% \label{eq:LRMFM_clean_unweighted_nosigma}
% \end{equation}
% noting that \eqref{eq:LRMFM_vel_nosigma} and \eqref{eq:LRMFM_clean_unweighted_nosigma} differ only by the weight $(1-t)^{-2}$.

\paragraph{Population Self-Remixing--Flow loss.}
Self-Remixing supervises against the observed mixture $\vm=\sum_{k=1}^K \vx^{(k)}$, and in the population construction
the anchor source $\bar \vx^{(k)}$ occupies the first row of $\tilde \vx_k$ by definition.
Let $e_1\in\R^{K}$ denote the first standard basis vector. Define the induced time-$t$ mixture estimate
\begin{equation}
\hat \vm_\theta(t)
:=
e_1^\top\sum_{k=1}^K \hat \vx_{k,1,\theta}(\tilde \vx_{k,t},t,\tilde \vm_k)\in\R^{1\times d}.
\label{eq:mhat_theta_nosigma}
\end{equation}
Define the per-anchor drift residual routed to the anchor component by
\begin{equation}
\vr_{k,\theta}(t)
:=
e_1^\top\Big(\vv_\theta(\tilde \vx_{k,t},t,\tilde \vm_k)-(\tilde \vx_k-\tilde \vx_{k,0})\Big).
\label{eq:rk_def_nosigma}
\end{equation}
Assume here, as in the regression analysis, exact teacher mixture consistency $\sum_{k=1}^K\bar\vx^{(k)}=\sum_{k=1}^K\vx^{(k)}=\vm$. Then, as in Section \ref{sec:popRemixitSelfremixing}, we have
\begin{equation}
\hat \vm_\theta(t)-\vm =(1-t)\sum_{k=1}^K \vr_{k,\theta}(t),
\end{equation}
and the population Self-Remixing--Flow objective (renormalized here by $1/K$) is
\begin{equation}
L^{\infty}_{\rm SR\text{-}FM}(\theta)
:=\frac{1}{K}
\E_{t \sim \lambda}\Big[\big\|\sum_{k=1}^K \vr_{k,\theta}(t)\big\|^2\Big]
=
\frac{1}{K} \E_{t \sim \mathcal{U}[0,1]}\Big[\beta_t \big\|\hat \vm_\theta(t)-\vm\big\|^2\Big].
\label{eq:LSRFM_clean_nosigma}
\end{equation}
% One may also consider the unweighted variant $\bar L^{\infty}_{\rm SR\text{-}FM}(\theta):=\E\|\hat \vm_\theta(t)-m\|^2$.

\medskip
Equations \eqref{eq:LRMFM_clean_nosigma} and \eqref{eq:LSRFM_clean_nosigma} show that both unsupervised FM objectives
can be interpreted as (reweighted) regression losses on the time-dependent clean-source estimate
$\hat \vx_{k,1,\theta}(\tilde \vx_{k,t},t,\tilde \vm_k)$.

We now derive FM counterparts of Propositions~\ref{prop:SR_exact_decomp}-\ref{prop:grad-dev} by working with the time-dependent clean estimate
$\hat \vx_{k,1,\theta}(\tilde \vx_{k,t},t,\tilde \vm_k)$ and projecting to the \emph{anchor component} (first row).
Define
\[
\hat \vx^{(k)}_{\theta,t}
:= e_1^\top \hat \vx_{k,1,\theta}(\tilde \vx_{k,t},t,\tilde \vm_k),\qquad
\bar \vx^{(k)}:=e_1^\top \tilde \vx_k,\qquad \vx^{(k)}:=\vx_{k,\mathrm{true}}.
\]
Define the pseudo-supervised FM loss (anchor component) by
\begin{equation}
L^{\infty}_{{\rm Sup}\text{-}{\rm FM}}(\theta)
:=
\E_{t \sim \mathcal{U}[0,1]} \Big[\beta_t \,\big\|\hat \vx^{(1)}_{\theta,t}-\vx^{(1)}\big\|^2\Big],
\label{eq:LSupFM_def}
\end{equation}
and the anchor-projected ReMixIT--Flow loss
\begin{equation}
L^{\infty}_{{\rm RM}\text{-}{\rm FM}}(\theta)
:=
\E_{t \sim \mathcal{U}[0,1]}\Big[\beta_t \big\|\hat \vx^{(1)}_{\theta,t}-\bar \vx^{(1)}\big\|^2\Big].
\label{eq:LRMFM_anc_def}
\end{equation}

\paragraph{Errors and background-averaged errors.}
Define teacher error and true/teacher-referenced student errors (anchor component):
\begin{align}
e^{(k)} &:= \bar \vx^{(k)}-\vx^{(k)},\\
\epsilon^{(k)}_{\theta,t} &:= \hat \vx^{(k)}_{\theta,t}-\bar \vx^{(k)},\\
\Delta^{(k)}_{\theta,t} &:= \hat \vx^{(k)}_{\theta,t}-\vx^{(k)}.
\end{align}
Then $\Delta^{(k)}_{\theta,t}=\epsilon^{(k)}_{\theta,t}+e^{(k)}$.
We define for each $k$, the $\sigma$-field $\mathcal{A}_k=\sigma(\vx^{(k)},\bar{\vx}^{(k)}, t)$. This represents the information contained in the $k$-th anchor source, its teacher estimate, and the time step, averaging out the randomness of the background sources and the flow matching noise.
Define the background-averaged (systematic) errors
\[
\mu^{(k)}_{\theta,t}:=\E[\epsilon^{(k)}_{\theta,t}\mid \mathcal A_k],
\qquad
\delta^{(k)}_{\theta,t}:=\E[\Delta^{(k)}_{\theta,t}\mid \mathcal A_k],
\]
and $\sigma_{\mathcal{T}}^2:=\E\|e^{(1)}\|^2$.

\begin{proposition}[FM analogues of Propositions \ref{prop:SR_exact_decomp}-\ref{prop:grad-dev}]
\label{prop:FM_B2_B4}
Assume the PIT alignment and the population construction above, independent artificial backgrounds across anchors, exact teacher mixture consistency, and that $\lambda=(\lambda_t)_{t\in[0,1]}$ is selected such that all the following expectations are finite. Then:
%Assume oracle PIT and the population construction above. Then assuming that $\lambda=(\lambda_t)_{t\in[0,1]}$ is selected such that all the following expectations are finite then:
\begin{enumerate}
\item \textbf{(Loss identities)} We have
\begin{align}
L^{\infty}_{{\rm RM}\text{-}{\rm FM}}(\theta)
&=
L^{\infty}_{{\rm Sup}\text{-}{\rm FM}}(\theta)
+
\E\Big[\beta_t\|e^{(1)}\|^2\Big]
-2\,\E\Big[\beta_t\,\langle e^{(1)},\,\delta^{(1)}_{\theta,t}\rangle\Big],
\label{eq:FM_RM_vs_Sup}\\
L^{\infty}_{{\rm SR}\text{-}{\rm FM}}(\theta)
&=
L^{\infty}_{{\rm Sup}\text{-}{\rm FM}}(\theta)
+
(K-1)\,\E\Big[\beta_t\,\langle \delta^{(1)}_{\theta,t},\,\delta^{(2)}_{\theta,t}\rangle\Big].
\label{eq:FM_SR_vs_Sup}
\end{align}

\item \textbf{(Gradient identities and bounds)} If $\theta\mapsto \hat \vx^{(k)}_{\theta,t}$ is differentiable and gradients may be interchanged with expectations, then
\begin{align}
\nabla_\theta L^{\infty}_{{\rm RM}\text{-}{\rm FM}}(\theta) - \nabla_\theta L^{\infty}_{{\rm Sup}\text{-}{\rm FM}}(\theta)
&=
-2\,\E\Big[\beta_t\,(\nabla_\theta \delta^{(1)}_{\theta,t})^\top e^{(1)}\Big],\label{eq:FM_grad_RM}\\
\nabla_\theta L^{\infty}_{{\rm SR}\text{-}{\rm FM}}(\theta) - \nabla_\theta L^{\infty}_{{\rm Sup}\text{-}{\rm FM}}(\theta)
&=
2(K-1)\,\E\Big[\beta_t\,(\nabla_\theta \delta^{(1)}_{\theta,t})^\top \delta^{(2)}_{\theta,t}\Big],\label{eq:FM_grad_SR}
\end{align}
and consequently
\begin{align}
\big\|\nabla_\theta L^{\infty}_{{\rm RM}\text{-}{\rm FM}}(\theta) - \nabla_\theta L^{\infty}_{{\rm Sup}\text{-}{\rm FM}}(\theta)\big\|
&\le
2\,\E\Big[\beta_t\,\|e^{(1)}\|\,\|\nabla_\theta \delta^{(1)}_{\theta,t}\|_F\Big],\\
\big\|\nabla_\theta L^{\infty}_{{\rm SR}\text{-}{\rm FM}}(\theta) - \nabla_\theta L^{\infty}_{{\rm Sup}\text{-}{\rm FM}}(\theta)\big\|
&\le
2(K-1)\,\E\Big[\beta_t\,\|\delta^{(2)}_{\theta,t}\|\,\|\nabla_\theta \delta^{(1)}_{\theta,t}\|_F\Big].
\end{align}
\end{enumerate}
\end{proposition}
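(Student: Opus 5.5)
The plan is to reduce the flow-matching statements to the regression-case arguments of Propositions \ref{prop:SR_exact_decomp} and \ref{prop:grad-dev}, applied pointwise in $t$ with the weight $\beta_t$. The key observation is that, conditionally on $t$, the anchor-component clean estimate $\hat \vx^{(k)}_{\theta,t}$ plays exactly the role of $\hat{\vx}^\theta_k(\tilde{\vm}_k)$. Its only additional randomness beyond the anchor information is the background block $\bar\vx^{(k)}_{\rm bg}$ and the FM noise $\tilde\vx_{k,0}$, and both are drawn independently across anchors. Since $t\in\mathcal A_k$ and $\beta_t$ is $\mathcal A_k$-measurable, every conditional expectation in the regression proof can be carried out with $\beta_t$ pulled outside the inner conditioning.

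For the ReMixIT identity \eqref{eq:FM_RM_vs_Sup}, I will start from the anchor-projected definition \eqref{eq:LRMFM_anc_def} and write $\hat \vx^{(1)}_{\theta,t}-\bar\vx^{(1)}=\Delta^{(1)}_{\theta,t}-e^{(1)}$. Expanding the square gives three terms, $\E[\beta_t\|\Delta^{(1)}_{\theta,t}\|^2]=L^\infty_{\rm Sup\text{-}FM}$, $\E[\beta_t\|e^{(1)}\|^2]$, and a cross term. Applying the tower property with respect to $\mathcal A_1$ turns the cross term into $-2\E[\beta_t\langle e^{(1)},\delta^{(1)}_{\theta,t}\rangle]$, because $\beta_t e^{(1)}$ is $\mathcal A_1$-measurable.

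For the Self-Remixing identity \eqref{eq:FM_SR_vs_Sup}, I will use \eqref{eq:LSRFM_clean_nosigma} together with exact teacher mixture consistency. These give $\hat\vm_\theta(t)-\vm=\sum_k \Delta^{(k)}_{\theta,t}$, since $e_1^\top\tilde\vx_k=\bar\vx^{(k)}$ and $\sum_k\bar\vx^{(k)}=\sum_k\vx^{(k)}$. Expanding $\frac1K\E[\beta_t\|\sum_k\Delta^{(k)}_{\theta,t}\|^2]$ and using exchangeability over $k$ leaves a diagonal term equal to $L^\infty_{\rm Sup\text{-}FM}$ and $(K-1)$ copies of $\E[\beta_t\langle\Delta^{(1)}_{\theta,t},\Delta^{(2)}_{\theta,t}\rangle]$.

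The main obstacle is the conditional independence step for this cross term, because the two anchors share the same $t$. I would therefore condition on $\mathcal A_1\vee\mathcal A_2$, which contains $t$. Given this $\sigma$-field, $\Delta^{(1)}_{\theta,t}$ depends only on $(\bar\vx^{(1)}_{\rm bg},\tilde\vx_{1,0})$ and $\Delta^{(2)}_{\theta,t}$ depends only on $(\bar\vx^{(2)}_{\rm bg},\tilde\vx_{2,0})$. These blocks are independent of each other, and each is independent of the other anchor given its own anchor and $t$. The conditional expectation of the inner product therefore factorizes into $\langle\delta^{(1)}_{\theta,t},\delta^{(2)}_{\theta,t}\rangle$. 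Some care is needed on two points. First, the FM noise law depends on $\tilde\vm_k$, so I must check that it is generated conditionally on that anchor's own block only. Second, the PIT routing $\sigma_k$ is a measurable function of that same block, so routing does not break the independence; this is where the "independent artificial backgrounds across anchors" hypothesis is used.

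Finally, for the gradient statements I will differentiate the two identities under the assumed interchange. In \eqref{eq:FM_RM_vs_Sup}, the term $\E[\beta_t\|e^{(1)}\|^2]$ does not depend on $\theta$, which yields \eqref{eq:FM_grad_RM}. In \eqref{eq:FM_SR_vs_Sup}, the product rule gives two terms, and these are equal by exchangeability of anchors $1\leftrightarrow2$; this yields \eqref{eq:FM_grad_SR}. The bounds then follow from $\|\E Z\|\le\E\|Z\|$ and $\|J^\top v\|\le\|J\|_F\|v\|$, exactly as in Proposition \ref{prop:grad-dev}, with $\beta_t\ge0$ kept inside the expectation.
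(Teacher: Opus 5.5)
Your proposal is correct and takes the same route as the paper's proof sketch. You expand the squares, use the tower property on $\mathcal A_1$ for the ReMixIT cross term, and factorize the Self-Remixing cross terms by conditioning on $\mathcal A_1\vee\mathcal A_2$ with independent per-anchor backgrounds; you then differentiate the identities and bound them via $\|\E Z\|\le\E\|Z\|$ and $\|J^\top v\|\le\|J\|_F\|v\|$, and your treatment of the shared time $t$, the $\tilde\vm_k$-dependent FM noise, and the measurability of the PIT routing makes precise what the sketch leaves implicit.
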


\begin{proof}[Proof sketch]
All identities follow by the same algebra as \cref{app:Regressioncase} applied to the reweighted clean-estimate losses.
For \eqref{eq:FM_RM_vs_Sup}, expand $\|\Delta^{(1)}_{\theta,t}-e^{(1)}\|^2$ inside \eqref{eq:LRMFM_anc_def} and
use the tower property conditioning on $\mathcal A_1$ (where $e^{(1)}$ is measurable) to replace
$\E\langle e^{(1)},\Delta^{(1)}_{\theta,t}\rangle$ by $\E\langle e^{(1)},\delta^{(1)}_{\theta,t}\rangle$.
For \eqref{eq:FM_SR_vs_Sup}, expand $\|\sum_k \Delta^{(k)}_{\theta,t}\|^2$ and condition on $\mathcal A_1\vee\mathcal A_2$
to factorize cross terms through $\delta^{(k)}_{\theta,t}$, using independence of anchor backgrounds across $k$.
Gradient identities follow by differentiating the loss identities; norm bounds are immediate from $\|\E Z\|\le\E\|Z\|$ and Cauchy--Schwarz.
\end{proof}

\subsection{Alternative Derivation of the Self-Remixing Loss for Flow Matching}\label{app:AlternativeSelfRemixingloss}
In \cref{subsec:SelfRemixing_flow}, the cleaned data estimator used in the Self-Remixing Flow Matching objective is defined by \eqref{eq:predictioncleananyt} which we recall here
\begin{equation}\label{eq:predictioncleananytapp}
\hatvX^{\mUpsilon,\mathrm{old}}_{1,\theta}(\tildevX^{\mUpsilon}_t,t,\tildevM)=\tildevX^{\mUpsilon}_t+(1-t)\vv_\theta(\tildevX^{\mUpsilon}_t,t,\tildevM),
\end{equation}
with $\tildevX^{\mUpsilon}_t=(1-t)\tildevX_0+t \mUpsilon \tildevX_1$. 
We consider the following alternative velocity-residual endpoint proxy,
\begin{equation}\label{eq:predictioncleannew}
\hatvX^{\mUpsilon,\mathrm{new}}_{1,\theta}
\;:=\;
\tildevX_0+\vv_\theta(\tildevX_t^{\mUpsilon},t,\tildevM),
\end{equation}
which coincides with the supervised endpoint estimator at $t=0$ but is not, for $t\neq 0$, the usual FM conditional denoiser $\tilde\vX_t^{\mUpsilon}+(1-t)\vv_\theta(\tildevX_t^{\mUpsilon},t,\tildevM)$.
% We consider the following alternative cleaned source estimator,
% \begin{equation}\label{eq:predictioncleannew}
% \hat \vX^{\mUpsilon,\mathrm{new}}_{1,\theta}
% \;:=\;
% \tilde \vX_0+\vv_\theta(\tilde \vX_t^{\Upsilon},t,\tilde \vM),
% \end{equation}
% which coincides with the supervised endpoint estimator at $t=0$.

For $t\neq 1$, it is easy to check that the two estimators are related by a simple affine transformation. Eliminating the drift $\vv_\theta$ from \eqref{eq:predictioncleananytapp} yields
\begin{equation}
\vv_\theta(\tildevX_t^{\mUpsilon},t,\tildevM)
=
\frac{\hatvX^{\mUpsilon,\mathrm{old}}_{1,\theta}-\tildevX_t^{\mUpsilon}}{1-t},
\end{equation}
and therefore
\begin{equation}
\hatvX^{\mUpsilon,\mathrm{new}}_{1,\theta}
=
\tildevX_0+\frac{\hatvX^{\mUpsilon,\mathrm{old}}_{1,\theta}-\tildevX_t^{\mUpsilon}}{1-t}
=
\frac{\hatvX^{\mUpsilon,\mathrm{old}}_{1,\theta}-t\,\mUpsilon \tildevX_1}{1-t}.
\end{equation}
Conversely,
\begin{equation}
\hatvX^{\mUpsilon,\mathrm{old}}_{1,\theta}
=
\tildevX_t^{\mUpsilon}+(1-t)\big(\hatvX^{\mUpsilon,\mathrm{new}}_{1,\theta}-\tildevX_0\big).
\end{equation}

Recall that the Self-Remixing loss is defined as 
\begin{equation}
\mathcal{L}(\theta)
=
\mathbb E\Big[
\big\|
(\mI_B\otimes \vone^\top)\mPi^{-1}
\big(\mUpsilon^{-1}\hatvX^{\mUpsilon}_{1,\theta}-\tildevX_1\big)
\big\|^2
\Big]
\end{equation}
So using \eqref{eq:predictioncleananytapp}, we get 
\begin{equation}
\mUpsilon^{-1}\hatvX^{\mUpsilon,\mathrm{old}}_{1,\theta}-\tildevX_1
=
(1-t)\,
\mUpsilon^{-1}
\Big(
\vv_\theta(\tildevX_t^{\mUpsilon},t,\tildevM)
-
(\mUpsilon\tildevX_1-\tildevX_0)
\Big),
\end{equation}
so that the corresponding loss is
\begin{equation}\label{eq:oldloss}
\mathcal{L}_{\mathrm{old}}(\theta)
=
\mathbb E\Big[
(1-t)^2
\big\|
(\mI_B\otimes \vone^\top)\,
\mPi^{-1}\mUpsilon^{-1}
\big(
\vv_\theta(\tildevX_t^{\mUpsilon},t,\tildevM)-(\mUpsilon \tildevX_1-\tildevX_0)
\big)
\big\|^2
\Big].
\end{equation}
In \cref{subsec:SelfRemixing_flow}, the factor $(1-t)^2$ is dropped, leading to \eqref{eq:selfremixing_lossFM}.

For the alternative estimator 
\eqref{eq:predictioncleannew}, we obtain instead
\begin{equation}
\mUpsilon^{-1}\hatvX^{\mUpsilon,\mathrm{new}}_{1,\theta}-\tildevX_1
=
\mUpsilon^{-1}
\Big(
\vv_\theta(\tildevX_t^{\mUpsilon},t,\tildevM)
-
(\mUpsilon \tildevX_1-\tildevX_0)
\Big),
\end{equation}
and hence the induced loss is 
\begin{equation}\label{eq:newloss}
\mathcal{L}_{\mathrm{new}}(\theta)
=
\mathbb E\Big[
\big\|
(\mI_B\otimes \vone^\top)\,
\mPi^{-1}\mUpsilon^{-1}
\big(
\vv_\theta(\tildevX_t^{\mUpsilon},t,\tildevM)-(\mUpsilon\tildevX_1-\tildevX_0)
\big)
\big\|^2
\Big].
\end{equation}

\paragraph{Comparison.}
Equations \eqref{eq:oldloss} and \eqref{eq:newloss} show that the two estimators induce the same residual, but with a different time-weighting:
\[
\mathcal{L}_{\mathrm{old}}(\theta)
=
\mathbb E\big[(1-t)^2 \cdot \ell(\theta,t)\big],
\qquad
\mathcal{L}_{\mathrm{new}}(\theta)
=
\mathbb E\big[\ell(\theta,t)\big],
\]
where $\ell(\theta,t)$ denotes the squared mixture residual.
Thus, using the alternative endpoint proxy \eqref{eq:predictioncleannew} is equivalent to removing the $(1-t)^2$ weighting inside the expectation and yields the reweighted objective \eqref{eq:selfremixing_lossFM}.

\section{Experimental Details}
\label{sec:further_experimental_details}

\subsection{Further Algorithmic Details}

\begin{algorithm}[H]
    \caption{SURF: Unsupervised Remixing Flow}
    \label{alg:surf_appendix}
    \begin{algorithmic}[1]
        \STATE \textbf{Require:} Student $\theta$, Teacher $\theta_{\mathcal{T}}$, Batch $B$, EMA $\alpha$
        \WHILE{not converged}
            \STATE \COMMENT{\textbf{Phase 1: Generative Remixing}}
            \STATE $\vM \sim p(\vM)$ \hfill \COMMENT{Draw observed mixtures from data}
            \STATE $\barvX \sim p_{\theta_\mathcal{T}}(\barvX \mid \vM)$ \hfill \COMMENT{Teacher generates source estimates}
            \STATE $\mPi \sim \mathcal{U}(S_{BK})$ \hfill \COMMENT{Sample global permutation matrix}
            \STATE $\tildevX_1 = \mathbf{\Pi} \barvX$ \hfill \COMMENT{Shuffle sources across the entire batch}
            \STATE $\tildevM =  (\mI_B \otimes \vone^\top)\tildevX_1$ \hfill \COMMENT{Construct remixed synthetic mixtures}
            
            \STATE \hrulefill
            
            \STATE \COMMENT{\textbf{Phase 2: Flow Matching Path}}
            \STATE $t \sim \mathcal{U}(0,1)$ \hfill \COMMENT{Sample time for the probability path}
            \STATE $\operatorname{vec}(\vZ)\sim\mathcal{N}(\mathbf{0},\mI_{BKd})$ for $\vZ\in\mathbb{R}^{BK\times d}$  \hfill \COMMENT{Sample standard Gaussian noise}
            %\STATE $Z \sim \mathcal{N}(0, I_{BK \times d})$ \hfill \COMMENT{Sample standard Gaussian noise}
            \STATE $\tildevX_0=\tfrac{1}{K}(\mI_B \otimes \vone)\tildevM+ (\mI_B \otimes \vP^\perp) \vZ$ \hfill \COMMENT{Project matrix-valued noise for consistency}
            \STATE $\mUpsilon=\argmin_{\boldsymbol{\Gamma}} \|\vv_\theta(\tildevX_0,0,\tildevM)-(\boldsymbol{\Gamma} \tildevX_1-\tildevX_0)\|^2$ \hfill \COMMENT{PIT Alignment}
            \STATE $\tildevX^{\mUpsilon}_t = (1-t)\tildevX_0 + t \mUpsilon \tildevX_1$ \hfill \COMMENT{Conditional flow interpolant}
            
            \STATE \hrulefill
            
            \STATE \COMMENT{\textbf{Phase 3: Objectives \& Update}}
            \STATE $\vR_t:= \vv_\theta(\tildevX^{\mUpsilon}_t,t,\tildevM) - (\mUpsilon \tildevX_1 - \tildevX_0)$ \hfill \COMMENT{Drift residual}
            \STATE $\mathcal{L}_{\text{RM-FM}} = \| \vR_t  \|^2$ \hfill \COMMENT{ReMixIT Loss}
            \STATE $\mathcal{L}_{\text{SR-FM}} = \left\| (\mI_B \otimes \vone^\top) \mPi^{-1} \mUpsilon^{-1} \vR_t \right\|^2$ \hfill \COMMENT{Self-Remixing Loss}
            
            \STATE $\theta \leftarrow \theta - \eta \nabla_\theta \mathcal{L}$ \hfill \COMMENT{Update Student via SGD/Adam}
            \STATE $\theta_{\mathcal{T}} \leftarrow \alpha \theta_{\mathcal{T}} + (1-\alpha)\theta$ \hfill \COMMENT{EMA Teacher Distillation}
        \ENDWHILE
    \end{algorithmic}
\end{algorithm}

\begin{lstlisting}[language=Python, caption={PyTorch Implementation of SURF Iteration}, showstringspaces=false, showspaces=false, basicstyle=\footnotesize\ttfamily, breaklines=true, frame=single, xleftmargin=0.02\linewidth, xrightmargin=0.02\linewidth]
import torch

def train_surf_step(student, teacher, optimizer, mixtures, 
                    K, alpha=0.999, mode='SR-FM'):
    """
    Implementation of Algorithm 1.
    mixtures: [B, 1, D] batch of observed signal mixtures.
    """
    B, _, D = mixtures.shape
    device = mixtures.device

    # --- 1. Teacher Estimates & Remixing ---
    with torch.no_grad():
        bar_X = teacher.separate(mixtures) # [B, K, D]
    
    # Global shuffle BK sources across the batch
    bar_X_flat = bar_X.view(B * K, D)
    perm = torch.randperm(B * K, device=device)
    tilde_X_1 = bar_X_flat[perm].view(B, K, D)
    
    # Synthetic mixtures [B, 1, D]
    tilde_M = tilde_X_1.sum(dim=1, keepdim=True)

    # --- 2. Flow Matching Path ---
    t = torch.rand(1, device=device)
    Z = torch.randn_like(tilde_X_1)
    
    # Consistency projection: mean(tilde_X_0) == mean(tilde_X_1)
    Z_centered = Z - Z.mean(dim=1, keepdim=True)
    tilde_X_0 = (tilde_M / K) + Z_centered
    
    # --- 3. PIT Alignment (Upsilon) ---
    with torch.no_grad():
        # Predict velocity at t=0 for optimal alignment
        v_0 = student(tilde_X_0, torch.zeros_like(t), tilde_M)
        tilde_X_1_aligned, upsilon_indices = optimal_pit(v_0, tilde_X_1, tilde_X_0)

    # Interpolant X_t and target velocity
    X_t = (1 - t) * tilde_X_0 + t * tilde_X_1_aligned
    target_drift = tilde_X_1_aligned - tilde_X_0
    
    v_pred = student(X_t, t, tilde_M)
    residual = v_pred - target_drift

    # --- 4. Loss & EMA Update ---
    if mode == 'RM-FM':
        loss = (residual**2).mean()
    else: # Self-Remixing
        # Assuming upsilon_indices has shape [B, K]
        upsilon_inv_indices = torch.argsort(upsilon_indices, dim=1)
        
        # Gather the residuals back to their unaligned positions
        residual = torch.gather(
            residual, 
            1, 
            upsilon_inv_indices.unsqueeze(-1).expand(-1, -1, D)
        )
        inv_perm = torch.argsort(perm)
        # Re-shuffling residual to match original mixture M
        res_flat = residual.view(B * K, D)[inv_perm]
        res_unshuffled = res_flat.view(B, K, D)
        loss = (res_unshuffled.sum(dim=1)**2).mean()

    optimizer.zero_grad()
    loss.backward()
    optimizer.step()

    # EMA Teacher Update
    with torch.no_grad():
        for s_p, t_p in zip(student.parameters(), teacher.parameters()):
            t_p.data.mul_(alpha).add_(s_p.data, alpha=1 - alpha)
            
    return loss.item()
\end{lstlisting}

% \rev{Implementation note. The Self-Remixing branch above explicitly applies the inverse local PIT routing $\mUpsilon^{-1}$ before the inverse global shuffle $\mPi^{-1}$. The division by $K$ matches the normalized population Self-Remixing convention and keeps the empirical mean scaling comparable to the RM-FM branch. The call to \texttt{teacher.separate} abstracts the teacher sampler; in the dynamic EMA phase it should implement the hybrid teacher rule described in Appendix~\ref{sec:hybridTeacher}.}

\subsection{Audio Metrics}
For completeness, we detail the metrics used in the empirical results for audio domain experiments.

\paragraph{SI-SDR (Scale-Invariant Signal-to-Noise Ratio)} SI-SDR \citep{le_roux_sdrhalf-baked_2019} evaluates separation performance while addressing the amplitude scaling limitations of the traditional Signal-to-Distortion Ratio (SDR).
It measures fidelity independent of gain by rescaling the target $\vx$ and estimate $\hat \vx$ such that the residual error is orthogonal to the target.
The metric is defined as $\text{SI-SNR}(\vx, \hat \vx) = 10 \log_{10} \frac{||\alpha \vx||^2}{||\alpha \vx - \hat{\vx}||^2}$, where $\alpha = \frac{\hat{\vx}^{\top} \vx}{||\vx||^2}$ is a scalar coefficient that minimizes the residual energy.

\paragraph{Single and Multi-source SI-SDR (FUSS)}
For FUSS~\citep{wisdom2021s}, the result of separation of mixtures of 1 to 4 sources has to be evaluated.
When the ``mixture'' is a single source, it is evaluated by the SI-SDR between the estimate and the ground truth, i.e. $\text{SI-SNR}(\vx, \hat \vx)$.
This is labeled as $1S$ in Table~\ref{tab:audioset}. For mixtures $\vm$ of $K\geq 2$ sources, the SI-SDR improvement (SI-SDRi), i.e., $\text{SI-SDRi}(\vm, \vx, \hat \vx) = \text{SI-SDR}(\vx, \hat \vx) - \text{SI-SDR}(\vx, \vm)$.
This is called multi source SI-SDRi and labeled $K$Si, $K=2,3,4$ in Table~\ref{tab:audioset}.
Furthermore, the formulation of the SI-SDR is made robust with a formulation based on cosine similarity to maintain stability in cases of under-separation where estimates may be near zero.
See~\citet{wisdom2021s} for a detailed description.

\paragraph{Under-separated, Equally-separated, and Over-separated Sources}
These classifications are detailed in \citet{wisdom2021s} and quantify the model's ability to detect the correct number of sources. Estimates are aligned to reference signals, disregarding effectively silent outputs. Performance is categorized by comparing the count of non-zero estimates ($M_{est}$) to active references ($M_{ref}$): \textit{Under-separation} ($M_{est} < M_{ref}$) indicates missed sources; \textit{Equal-separation} ($M_{est} = M_{ref}$) indicates a correct count; and \textit{Over-separation} ($M_{est} > M_{ref}$) implies false positives or split sources.

\paragraph{ESTOI (Extended Short-Time Objective Intelligibility)}
ESTOI \citep{jensen_algorithm_2016} is an algorithm that predicts speech intelligibility in environments with fluctuating noise, such as competing speakers. Unlike metrics that treat frequency bands independently, ESTOI accounts for spectral correlations across time segments (approx. 384 ms). By analyzing row- and column-normalized short-time spectrograms, it effectively captures spectro-temporal structures critical for intelligibility in modulated noise conditions.

\paragraph{PESQ (Perceptual Evaluation of Speech Quality)}
Standardized as ITU-T Recommendation P.862, PESQ \citep{rix_perceptual_2001} objectively predicts the subjective quality of speech. The algorithm first aligns the degraded and reference signals in time and level, then transforms them into a loudness representation using a psychoacoustic model. A Mean Opinion Score (MOS) is calculated based on the aggregated disturbance between these representations, accounting for auditory masking effects.

\paragraph{DNSMOS (Deep Noise Suppression Mean Opinion Score)}
DNSMOS P.835 \citep{reddy_dnsmos_2022} is a non-intrusive, reference-free metric based on a deep neural network, designed to evaluate noise suppression when a clean ground truth is unavailable. Using a CNN trained on human ratings, it takes log power spectrograms as input to predict three subjective scores: speech quality (SIG), background noise quality (BAK), and overall quality (OVRL); in this work we report overall quality (OVRL) for this metric.

\subsection{Hybrid Teacher Formulation}
To improve stability during the dynamic EMA update phase ($\alpha < 1$), the SURF model samples from a combined velocity field that switches between the initial frozen MixIT teacher baseline and the current parameterized velocity field $\vv_{\theta_\mathcal{T}}$ based on a time threshold $\beta \in [0, 1]$. The hybrid velocity is defined as:
$$
v_\beta(\vx_t, t, \vm) =
\begin{cases}
f_\text{MixIT}(\vm) - \vx_0 & t < \beta\\
\vv_{\theta_\mathcal{T}}(\vx_t, t, \vm) & t \geq \beta,
\end{cases}
$$
where $\vx_0\in\mathbb R^{K\times d}$ is the initial source-array/noise state, $f_\text{MixIT}$ is the frozen teacher model, and $\vv_{\theta_\mathcal{T}}$ is the dynamically updated teacher model.

Intuitively, this \textit{hybrid teacher} trades off between the evolving flow model and the stable baseline. When $t < \beta$, the velocity field points directly from the source $\vx_0$ to the MixIT estimate $f_\text{MixIT}(\vm)$. This acts as an ``oracle'' trajectory for a conditional flow sampling from a Dirac delta centered on the MixIT prediction. When $t \geq \beta$, the sampler switches to the standard EMA-updated teacher flow $\vv_{\theta_\mathcal{T}}$. 

We use the same $\beta$-schedule across all experiments. $\beta$ decays linearly from $1.0$ to $0.0$ over the first $200\text{k}$ iterations, gradually transitioning the training target from the fixed MixIT anchor to the fully dynamic SURF teacher. $\beta$ remains at $0.0$ for all subsequent iterations. We maintain consistent architectures across regression-based and SURF-based unsupervised algorithms. For image separation, we use the NCSN architecture for both the frozen and updated teacher models. For audio separation, the frozen teacher is a ConvTasNet \citep{luo_conv-tasnet_2019}, while the updated teacher utilizes an MB-TFLocoformer \citep{scheibler2025source}. Further experimental details for each domain can be found in the sections below. 

\subsection{Image Separation}
In the image separation setting, we implement both flow matching and separation models with a conditional NCSN architecture \citep{song2020score} with 30M total parameters, where the regression model has the time conditioning and noise inputs deactivated. Flow models are sampled using a five-step Euler ODE solver, following \citet{scheibler2025source}. For supervised regression, we directly train the model to predict the target output sources, given their average mixture. In flow matching models, we follow the mathematical framework in Section \ref{subsec:floss} to train a source separation model to learn the conditional flow matching velocity field of the underlying sources of a mixture, given the mixture itself. For all models and settings, we train with a squared error.

In the unsupervised setting, we proceed by first training the initial frozen MixIT teacher model. Here, we follow the guidelines in \cite{wisdom2020unsupervised} and sample mixtures formed from single- and two-sample sources, with equal probability. This results in mixtures of mixtures that have two, three, and four sources with probabilities 25\%, 50\%, and 25\%, respectively. For our proposed \textbf{SURF} method, the teacher model is then used to initialize the initial student and teacher models via Algorithm \ref{alg:surf}. For standard regression-based unsupervised separation, the teacher model is initialized with standard ReMixIT \citep{tzinis2022remixit} or Self-Remixing \citep{saijo2023self} algorithms. Full hyperparameters are provided in Table \ref{tab:ncsnpp_hyperparameters_full}.

\begin{table}[h]
\centering
\caption{NCSN Image Separation Backbone (30M total parameters)}
\label{tab:ncsnpp_hyperparameters_full}
\begin{tabular}{lc}
\toprule
\textbf{Hyperparameter} & \textbf{Value} \\
\hline
Channel Multipliers & (1, 2, 2, 2) \\
Number of Filters & 64 \\
Number of Residual Blocks & 4 \\
Attention Resolutions & (16, 8) \\
Dropout & 0.1 \\
% Residual Block Type & ddpm \\
Conditional & True \\
Convolution Size & 3 \\
Embedding Type & fourier \\
FIR Kernel & (1.0, 3.0, 3.0, 1.0) \\
Fourier Scale & 16 \\
Initialization Scale & 0.0 \\
Nonlinearity & swish \\
Normalization & GroupNorm \\
Number of Scales & 2000 \\
Progressive Combine & sum \\
Progressive Input & input\_skip \\
Progressive Output & output\_skip \\
Resample with Convolution & True \\
Scale by Sigma & True \\
Skip Rescale & True \\
Sigma Min & 0.05 \\
Sigma Max & 0.5 \\
\bottomrule
\end{tabular}
\end{table}

\subsection{Audio Separation}
Following \citep{scheibler2025source}, we implement the Supervised Flow, ReMixIT, Self-Remixing, and both variants of our proposed method, SURF (ReMixIT) and SURF (Self-Remixing) with the MB-TFLocoformer architecture, consisting of a permutation-equivariant encoder-decoder network that processes a four-dimensional data tensor (time, band, source, feature) using a dual-path transformer backbone. The model employs a learnable Mel-band split front-end and alternates between Band-Source Joint Attention (BSJA) and Time-Source Parallel Attention (TSPA) blocks, which utilize Convolutional Multi-Head Self-Attention (CMHSA) to efficiently manage source interactions while conditioning on the mixture as an auxiliary token. Full hyperparameters for Libri2Mix and Audioset models are described in Tables \ref{tab:mb_locoformer_hyperparameters_libri2mix} and \ref{tab:mb_locoformer_hyperparameters_audioset}, respectively. Following \citet{scheibler2025source}, we train with a log-loss on the flow matching velocity squared error, and sample flow models using a five-step Euler ODE solver.

For the unsupervised regression models, Regression (Self-Remixing) and Regression (ReMixIT), the MB-TFLocoformer architecture, has only a single input channel (the single channel mixture conditioning), rather than mixture, noisy iterate, and time conditioning inputs. Following the flow-matching variants, we use the hyperparameters for Libri2Mix and AudioSet experiments (Tables \ref{tab:mb_locoformer_hyperparameters_libri2mix} and \ref{tab:mb_locoformer_hyperparameters_audioset}). We train with a mean squared error loss.

For the frozen teacher model, we opted for a smaller ConvTasNet \citep{luo_conv-tasnet_2019} architecture with real-valued masks, following \citet{wisdom2020unsupervised}. While in the image domain, we adhered to the formula in \citet{wisdom2020unsupervised} with training on mixtures of mixtures containing both single and two-sample mixtures, here we only train with the more difficult setting of mixtures of mixtures containing two-sample sources. This deviation is intended to demonstrate the validity of the unsupervised framework in the absence of any single source targets.

\begin{table}[h]
\centering
\caption{Audio Separation Backbone for Libri2Mix (36M total parameters)}
\label{tab:mb_locoformer_hyperparameters_libri2mix}
\begin{tabular}{lc}
\hline
\textbf{Hyperparameter} & \textbf{Value} \\
\hline
% \multicolumn{2}{c}{\textit{Audio \& Feature Extraction}} \\
Sampling Rate & 16 kHz\\
STFT Frame Length & 20 ms \\
STFT Overlap & 10 ms \\
STFT Window & Energy normalized Hamming \\
Magnitude Compression Exp. & 0.33 \\
Mel-bands & 80 \\
% \hline
% \multicolumn{2}{c}{\textit{Network Architecture}} \\
Feature Embedding Dim. & 192 \\
BSJA Kernel Size (Time) & 5 \\
TSPA Kernel Size (Time, Freq) & (5, 3) \\
Number of Blocks & 6 \\
Number of Output Sources & 4 \\
% Total Parameters & $\sim$36 M \\
% \hline
% \multicolumn{2}{c}{\textit{Training \& Optimization}} \\
% Optimizer & AdamW \\
% Peak Learning Rate & $1 \times 10^{-4}$ \\
% LR Schedule & Linear warmup (25k), then Cosine decay \\
% Weight Decay & 0.01 \\
% EMA Coefficient & 0.999 \\
% Training Steps & 250,000 \\
\hline
\end{tabular}
\end{table}

\begin{table}[h]
\centering
\caption{Audio Separation Backbone for AudioSet (36M total parameters)}
\label{tab:mb_locoformer_hyperparameters_audioset}
\begin{tabular}{lc}
\hline
\textbf{Hyperparameter} & \textbf{Value} \\
\hline
% \multicolumn{2}{c}{\textit{Audio \& Feature Extraction}} \\
Sampling Rate & 16 kHz\\
STFT Frame Length & 20 ms \\
STFT Overlap & 10 ms \\
STFT Window & Energy normalized Hamming \\
Magnitude Compression Exp. & 0.33 \\
Mel-bands & 80 \\
% \hline
% \multicolumn{2}{c}{\textit{Network Architecture}} \\
Feature Embedding Dim. & 384 \\
BSJA Kernel Size (Time) & 5 \\
TSPA Kernel Size (Time, Freq) & (5, 3) \\
Number of Blocks & 6 \\
Number of Output Sources & 4 \\
% Total Parameters & $\sim$36 M \\
% \hline
% \multicolumn{2}{c}{\textit{Training \& Optimization}} \\
% Optimizer & AdamW \\
% Peak Learning Rate & $1 \times 10^{-4}$ \\
% LR Schedule & Linear warmup (25k), then Cosine decay \\
% Weight Decay & 0.01 \\
% EMA Coefficient & 0.999 \\
% Training Steps & 250,000 \\
\hline
\end{tabular}
\end{table}

\section{Additional Results}
\label{sec:additional_results}

\begin{figure}[htp]
    \centering
    \includegraphics[width=.8\textwidth]{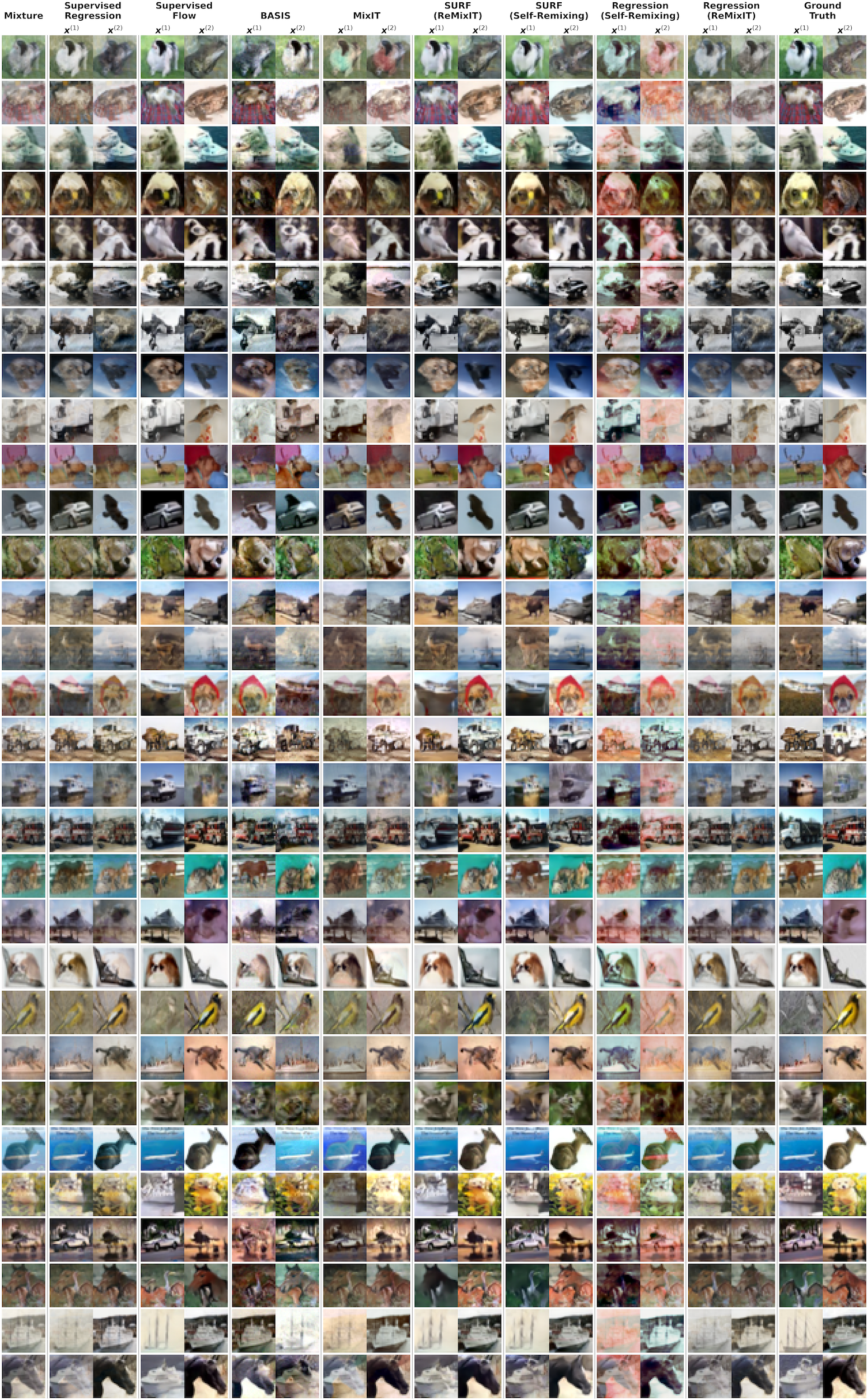}
    \caption{Qualitative examples for image separation on the CIFAR10 dataset.}
    \label{fig:cifar_appendix}
\end{figure}

\begin{figure}[htp]
    \centering
    \includegraphics[width=.8\textwidth]{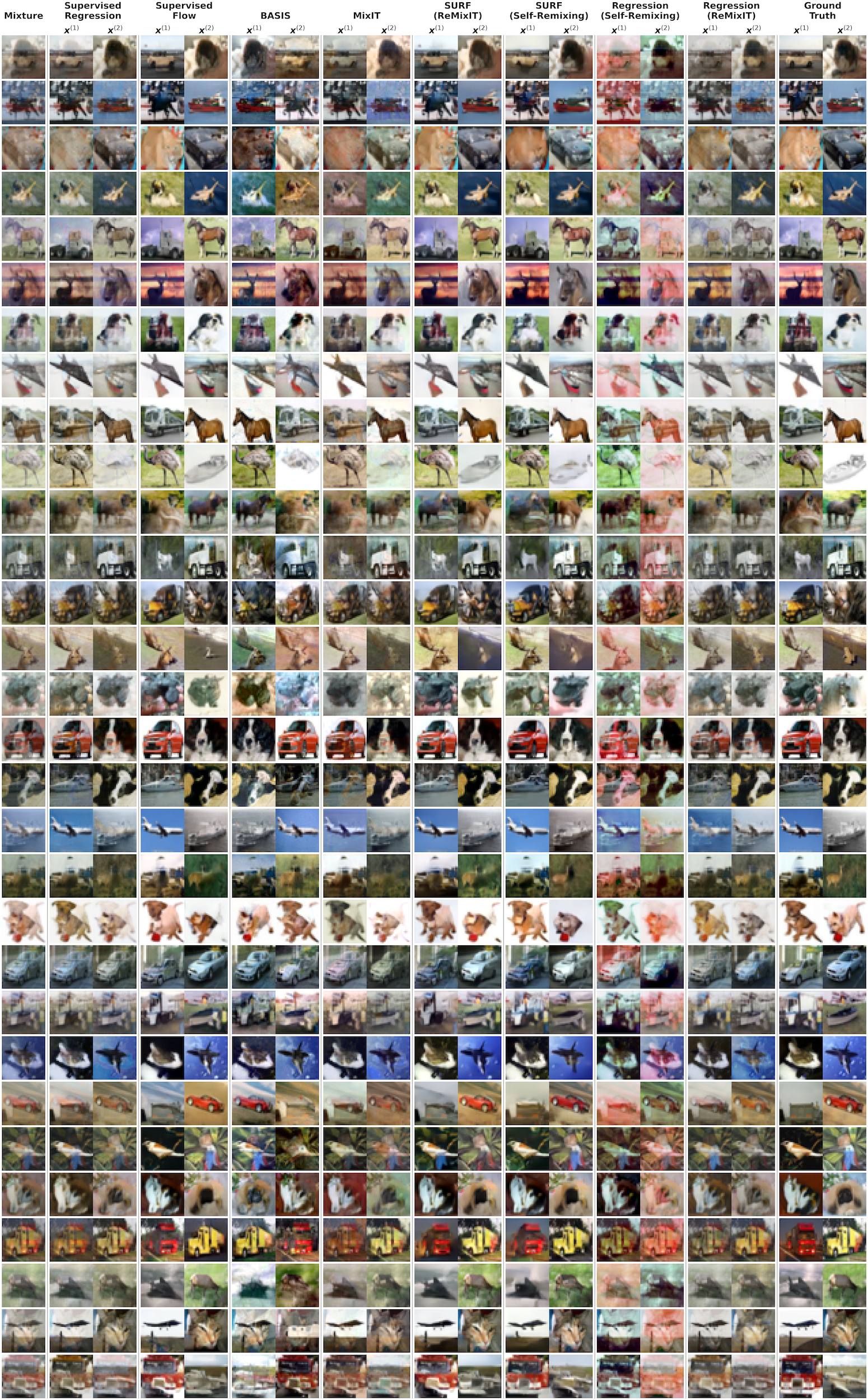}
    \caption{Further qualitative examples for image separation on the CIFAR10 dataset.}
    \label{fig:cifar_appendix_2}
\end{figure}

\begin{figure}[htp]
    \centering
    \includegraphics[width=.8\textwidth]{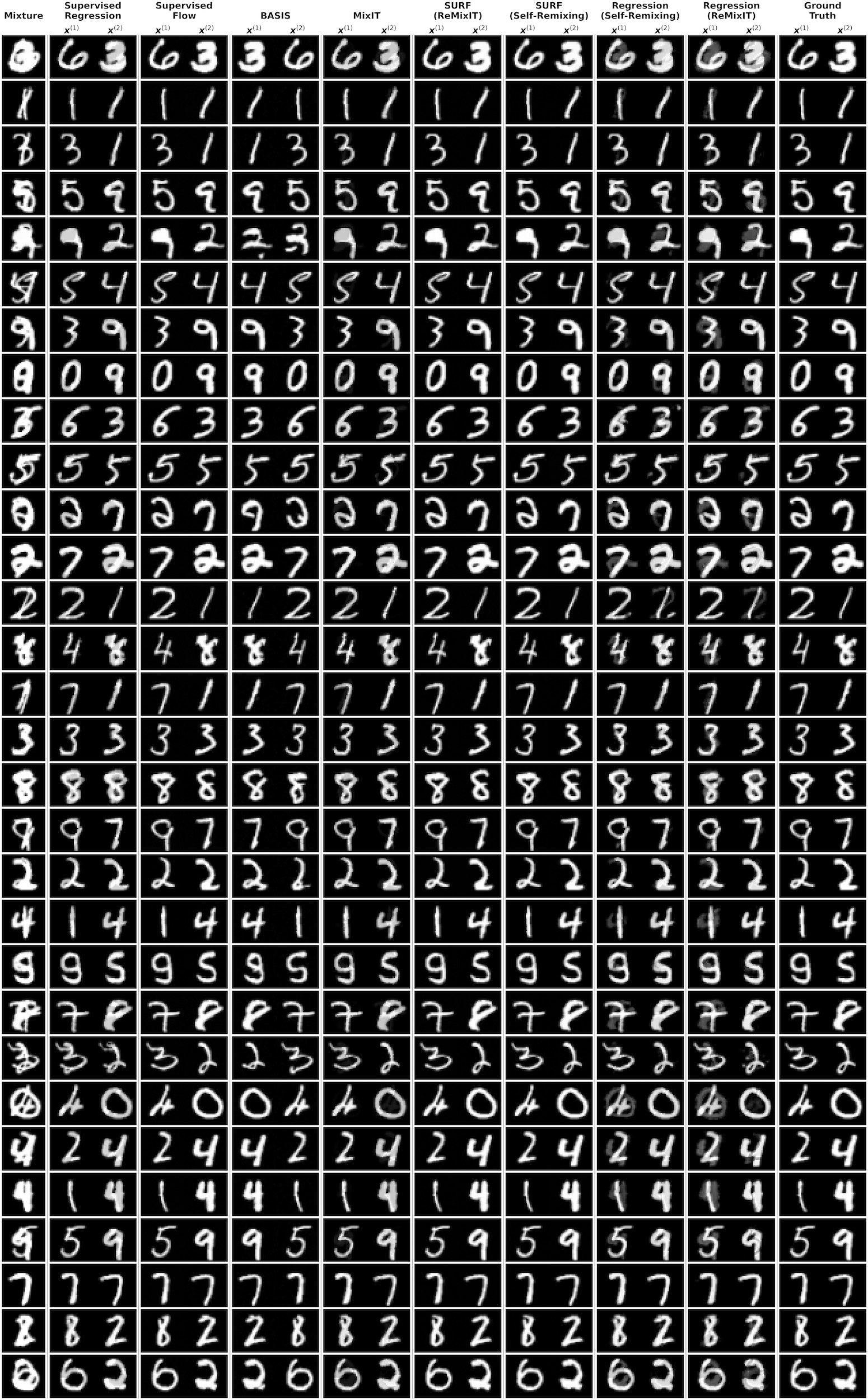}
    \caption{Qualitative examples for image separation on the MNIST dataset.}
    \label{fig:mnist_appendix}
\end{figure}

\begin{figure}[htp]
    \centering
    \includegraphics[width=.8\textwidth]{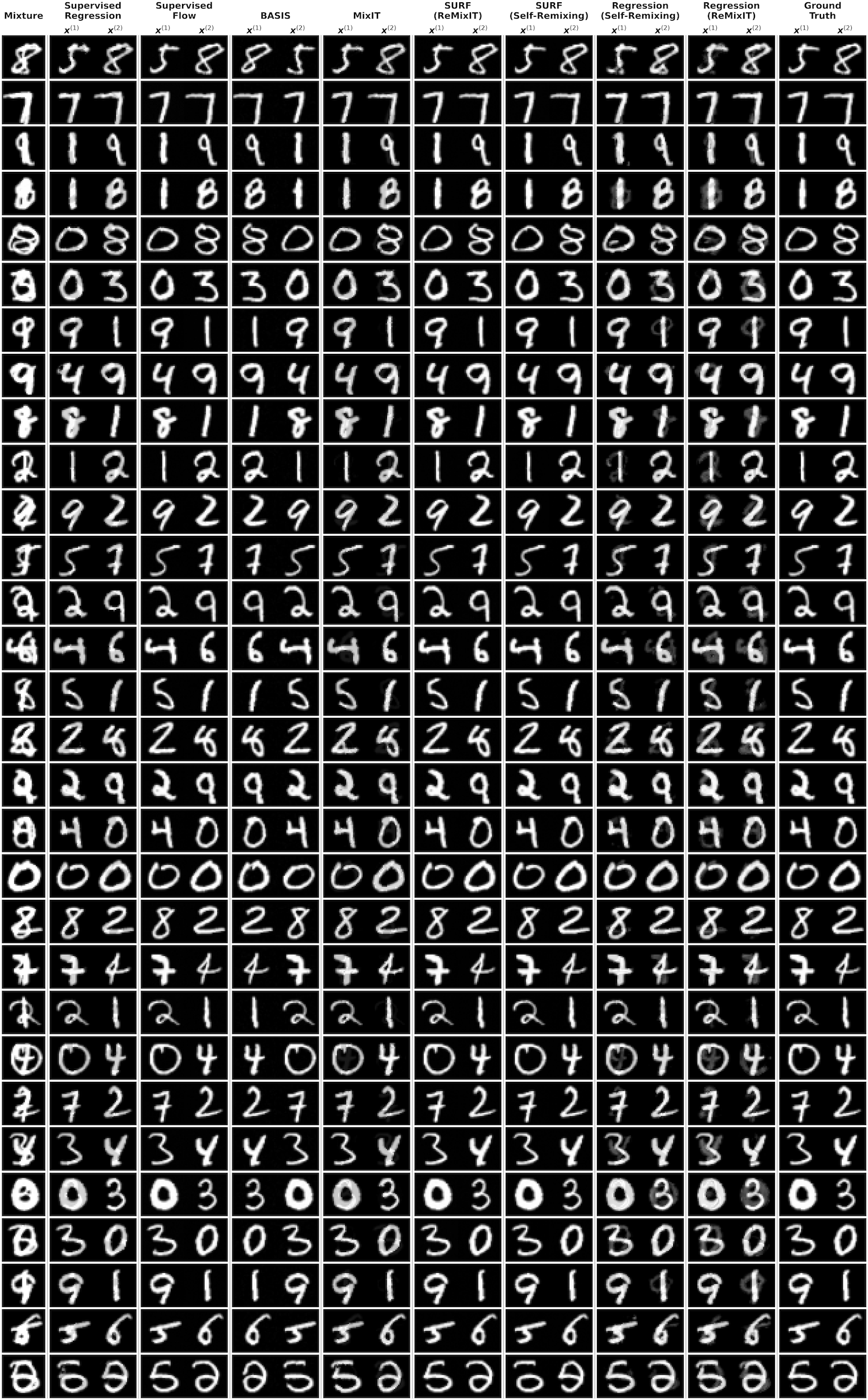}
    \caption{Further qualitative examples for image separation on the MNIST dataset.}
    \label{fig:mnist_appendix_2}
\end{figure}

\subsection{Further Ablations}

\begin{table}[h]
\centering
\caption{Comprehensive ablation studies for SURF. We evaluate the impact of EMA decay ($\alpha$), batch size ($B$), and the hybrid teacher $\beta(t)$ schedule on CIFAR-10 image separation. Best results for each ablation are bolded.}
\label{tab:surf_rebuttal_ablation}
\resizebox{0.7\textwidth}{!}{%
\begin{tabular}{@{}llcccccc@{}}
\toprule
 & & \multicolumn{3}{c}{\textbf{SURF (Self-Remixing)}} & \multicolumn{3}{c}{\textbf{SURF (ReMixIT)}} \\ \cmidrule(lr){3-5} \cmidrule(lr){6-8} 
\textbf{Ablation} & \textbf{Variant} & \textbf{PSNR $\uparrow$} & \textbf{SSIM $\uparrow$} & \textbf{LPIPS $\downarrow$} & \textbf{PSNR $\uparrow$} & \textbf{SSIM $\uparrow$} & \textbf{LPIPS $\downarrow$} \\ \midrule
\multirow{5}{*}{\textbf{EMA $\alpha$}} 
 & $\alpha = 0$ & 16.55 & 0.631 & 0.077 & 16.63 & 0.635 & 0.075 \\
 & $\alpha = 0.9$ & 18.55 & 0.733 & 0.054 & 18.62 & 0.739 & 0.051 \\
 & $\alpha = 0.99$ & 19.01 & 0.741 & 0.041 & 18.99 & 0.741 & 0.043 \\
 & $\alpha = 0.999$ & \textbf{19.49} & \textbf{0.751} & \textbf{0.037} & \textbf{19.73} & \textbf{0.756} & \textbf{0.036} \\
 & $\alpha = 0.9999$ & 13.36 & 0.372 & 0.101 & 13.55 & 0.401 & 0.098 \\ \midrule
\multirow{5}{*}{\textbf{Batch Size $B$}} 
 & $B = 1$ & 8.79 & 0.071 & 0.379 & 15.87 & 0.596 & 0.094 \\
 & $B = 4$ & 16.46 & 0.629 & 0.078 & 16.88 & 0.657 & 0.075 \\
 & $B = 16$ & 17.49 & 0.681 & 0.065 & 17.65 & 0.697 & 0.063 \\
 & $B = 64$ & 18.11 & 0.730 & 0.056 & 18.85 & 0.712 & 0.048 \\
 & $B = 256$ & \textbf{19.49} & \textbf{0.751} & \textbf{0.036} & \textbf{19.73} & \textbf{0.756} & \textbf{0.036} \\ \midrule
\multirow{4}{*}{\textbf{Teacher $\beta(t)$}} 
 & $\beta = 0.1$ & 8.15 & 0.078 & 0.380 & 8.48 & 0.081 & 0.372 \\
 & $\beta = 0.5$ & 18.46 & 0.721 & 0.051 & 18.41 & 0.719 & 0.053 \\
 & $\beta = 1.0$ & 17.30 & 0.701 & 0.060 & 18.08 & 0.723 & 0.055 \\
 & Proposed Sched. & \textbf{19.49} & \textbf{0.751} & \textbf{0.037} & \textbf{19.73} & \textbf{0.756} & \textbf{0.036} \\ \bottomrule
\end{tabular}%
}
\end{table}

\end{document}